\newcommand{\diamdot}{%
  \tikz[baseline=-0.5ex, scale=0.06]{
    \draw[thick] (0,2) -- (2,0) -- (0,-2) -- (-2,0) -- cycle;
    \fill (0,0) circle (0.6);
  }%
}
\newcommand{\diamdotb}{%
  \tikz[baseline=-0.5ex, scale=0.07]{
    \fill[black] (0,2) -- (2,0) -- (0,-2) -- (-2,0) -- cycle;
    \fill[white] (0,0) circle (0.6);
  }%
}
\newcommand{\boxdotb}{\blacksquare\!\!\!\!{\color{white}{\cdot\ }}}
\newcommand{\apdra}{=\!\!<}
\newcommand{\nomh}{\mathbf{h}}
\newcommand{\nomi}{\mathbf{i}}
\newcommand{\nomj}{\mathbf{j}}
\newcommand{\nomk}{\mathbf{k}}
\newcommand{\cnomm}{\mathbf{m}}
\newcommand{\cnomn}{\mathbf{n}}
\newcommand{\cnomo}{\mathbf{o}}
\renewcommand{\i}{\mathbf{i}}
\newcommand{\jty}{J^{\infty}}
\newcommand{\mty}{M^{\infty}}
\newcommand{\br}{\mbox{$\,\vartriangleright\mkern-15mu\rule[0.51ex]{1ex}{0.12ex}\,$}}
\newcommand{\rand}{\sqcap}
\newcommand{\ror}{\sqcup}
\newcommand{\abr}{\mbox{$\blacktriangleright\mkern-15.5mu\textcolor{white}{\rule[0.51ex]{1.2ex}{0.12ex}}$}}
\def\aga{\texttt{a}}
\def\agb{\texttt{b}}
\newcommand{\mand}{\vartriangle}
\newcommand{\nAND}{%
\mathrel{\ooalign{$\mbox{\TriangleUp}$\cr\kern0pt$\mbox{\rotatebox[origin=c]{180}{\TriangleUp}}$}}}
\newcommand{\nand}{%
\mathrel{\ooalign{$\vartriangle$\cr\kern0pt$\triangledown$}}}
\newcommand{\pdra}{-\!\!\!{<}}
\newcommand{\pdla}{{ >}\!\!\!-}
\DeclareRobustCommand 
\renewcommand{\ldots}{%
  \Compactldots
}
\theoremstyle{plain}
\newtheorem{thm}{Theorem}
\newtheorem{proposition}[thm]{Proposition}
\newtheorem{lemma}[thm]{Lemma}
\theoremstyle{definition}
\newtheorem{definition}[thm]{Definition}
\newtheorem{remark}[thm]{Remark}
\title{Questions as cognitive filters\thanks{This project has received funding from the European Union’s Horizon 2020 research and innovation programme under the Marie Sk\l odowska-Curie grant agreement No 101007627. The research of the second author is supported by the NWO grant KIVI.2019.001 awarded to Alessandra Palmigiano.  The research of the fourth author was supported by the Luxembourg National Research Fund (FNR) (INTER/DFG/23/17415164/LODEX) and by the Key Project of the Chinese Ministry of Education (22JJD720021).}}
\author{Willem Conradie}
\author[2]{Krishna Manoorkar}
\author[2,3]{Alessandra Palmigiano}
\author[6]{Apostolos Tzimoulis}
\author[4,5]{Nachoem Wijnberg}
\affil[1]{School of Mathematics, University of the Witwatersrand, Johannesburg}
\affil[2]{School of Business and Economics, Vrije Universiteit Amsterdam}
\affil[3]{Department of Pure and Applied Mathematics, University of Johannesburg}
\affil[4]{Faculty of Economics and Business, University of Amsterdam}
\affil[5]{College of Business and Economics, University of Johannesburg}
\affil[6]{University of Luxembourg, Luxembourg}
\date{June 2025}
\begin{document}

\maketitle
\begin{abstract}
In this paper, we develop a logico-algebraic framework for modeling decision-making through deliberation in multi-agent settings. The central concept in this framework is that of {\em interrogative agendas}, which represent the cognitive stances of agents regarding which features should be considered relevant in the final decision. We formalize an agent’s interrogative agenda as an equivalence relation that identifies outcomes differing only in aspects the agent deems irrelevant. Moreover, we characterize the sublattices of the resulting lattice that correspond to relevant interrogative agendas for deliberation scenarios governed by different “winning rules.” We then introduce a two-sorted logico-algebraic structure—comprising the lattice of relevant interrogative agendas and the Boolean algebras of agent coalitions—to model the interaction between agents and agendas during deliberation. Finally, we discuss which interaction conditions can and cannot be defined within this framework.
\end{abstract}
\section{Introduction}
In the literature, there has been a growing interest in formally investigating the mechanisms of deliberation with the tools of social choice \cite{list2018democratic, dryzek2003social,adachi2024impossibility}, also in combination with  formal argumentation in multi-agent systems \cite{ganzer2019combining, bodanza2017collective, peirera2017handling}. This interest has been further fueled by the AI-driven, emerging opportunity, and urgent need, to develop  methods and tools for enhancing the engagement and participation  of individuals and social groups in democratic processes  \cite{helbing2023democracy,novelli2025testing,dobbe2021hard}. 

Agentic AI \cite{acharya2025agentic,wang2024survey} is another fast-growing area in which  formal tools for representing and analysing the  dynamics of deliberation are of paramount importance.   
Deliberative AI systems, inspired by philosophical theories of rational agency \cite{bratman1987intention} and formalized in architectures such as belief-desire-intention (BDI) software models \cite{rao1995bdi, meyer2015bdi} and its extensions such as BOID \cite{broersen2001boid,broersen2002goal} or A-BDI \cite{yu2025bdi}, enable agents to weigh competing objectives, manage uncertainty, and plan  strategies. Endowing artificial agents with the ability to represent multi-agent deliberation or to participate in deliberation with other (human) agents is particularly critical 
in areas such as autonomous systems \cite{chopra2018handbook,hu2024automated}, legal and normative reasoning \cite{peirera2017handling,markovich2021new,fang2023ai}, and AI governance \cite{boella2003norm,cranefield2019incorporating}.
Precisely in these contexts, formal models for deliberation will also be key to increase explainability. 

The present paper contributes to this line of research  by introducing a framework for representing and reasoning about the dynamics of decision-making via deliberation in a multi-agent setting. 
Rather than analysing the dynamics of deliberation in terms of e.g.~the various arguments offered by the decision-makers \cite{kok2011formal, rahwan2009argumentation}, or  in terms of preference or judgment aggregation \cite{list2007deliberation}, in the present approach, the dynamics of deliberation are analysed in terms
of the {\em cognitive stance}, or {\em attitude}, of each decision-making agent,  as expressed through the  aspects, features, issues, or parameters that each agent considers important for the decision-making task at hand.
The dynamics of deliberation are then represented by the way in  which each agent's cognitive attitude  changes during the deliberation process, and is then aggregated into a common  attitude, which determines the final decision.

In order to represent the cognitive stance of an agent or a group of agents, we take inspiration from a strand of literature at the intersection of philosophical logic and formal epistemology, which
has studied {\em questions} 
not only as a key function of natural language \cite{groenendijk1984studies}, but also in the context of the formal study of agency and multi-agent interaction, in close connection e.g.~with the dynamics of knowledge acquisition \cite{hintikka2002interrogative}. Driven by the idea that asking questions is a basic epistemic action, in \cite{van2012toward}, a dynamic epistemic logic framework for {\em issue management} is introduced, which  includes  updates induced by questions being asked, thereby 
``making questions  first-class citizens
in dynamic epistemic logic''. In \cite{Baltag2018}, a closely related framework is introduced for exploring  the {\em epistemic
 potential} of a group of agents, in line with Hintikka's model of inquiry as a goal-directed process:

\begin{quote}
    [...] questions act as “epistemic filters”
that structure and limit what one can know. [...] 
An agent’s questions shape her
knowledge and guide her learning. But as a consequence, they also impose limits on the agent’s knowledge and
ability to learn. In “normal” situations, agents can only learn information that fits their agendas: according to this
view, all we (can) know are answers to our own questions. As a consequence, the agents’ divergent interrogative
agendas can limit a group’s epistemic potential. [...]
\end{quote}
The agents' {\em interrogative agendas} in the quote above are ``the fundamental questions that they aim to address and get resolved''.

In the present paper, we explore the role of interrogative agendas not just as epistemic filters, but, more broadly, as {\em cognitive} filters, especially in deliberation processes. Specifically, the interrogative agenda of a given agent is understood as the set of  issues that this agent considers relevant or important, relative to the specific decision-making at hand. Hence, in this paper, this notion is used to capture the general cognitive stance, or attitude, of an agent involved in a decision-making process,  which determines the agent's behaviour in the decision-making. In particular,   the outcome of a decision-making process via deliberation will be determined by the way the individual interrogative agendas will be aggregated into one common agenda, and the dynamics of the deliberation process will formally be represented by the way in which the cognitive attitudes of the various agents influence this aggregation process. 

Through  case studies, we illustrate how processes of decision-making via deliberation can be formalized using the framework introduced in the present paper, and their various outcomes can be analysed and  predicted.

The framework introduced in the present paper is 
designed according to   
the  {\em multi-type} methodology, introduced in \cite{frittella2016multi,multi-typePDL, frittella2016DEL}, and further developed in \cite{frittella2016inquisitive, greco2023linear,greco2017lattice,bilkova2018logic,greco2021semi,greco2019bilattice, greco2019proper, chen2022non, conradie2021modelling}.  
Multi-type languages make it possible to define syntactic and semantic environments in which different types of entities interact with each other; hence, constituents such as actions, agents, or resources can be represented---not as {\em parameters} in the generation of formulas, but---as first-class citizens of the framework, via {\em terms} of their own specific type. 
The present framework 
showcases  the potential of the multi-type approach as a general platform for the meta-design of tools for reasoning about {\em interaction}. These tools can be especially useful for developing the foundations of  explainable agentic AI \cite{chakrabartycausal}.

\paragraph{Structure of this paper.} In Section \ref{interrogative:sec:preliminaries}, we collect preliminaries on interrogative agendas and their formalization, algebras of equivalence relations, the modal logic approach to equivalence relations, and characterization of join and meet generators of the lattice of equivalence relations. In Sections \ref{interrogative:sec: hiring committee} and \ref{interrogative:sec:car}, we discuss  case studies of decision-making via deliberation, and  analyze them in terms of the interrogative agendas of the agents involved in these decision-making processes.  In Section \ref{interrogative:sec:interrogative_agendas_and_coalitions}, we introduce a multi-type framework for describing and reasoning about coalitions of agents, their agendas, and deliberation dynamics. In Section \ref{interrogative:sec:interaction conditions}, we explore the    expressivity  of this framework, and, in Section \ref{interrogative:sec:Formalizing deliberation}, we use it to  formally model and predict various possible outcomes of the case study of Section  \ref{interrogative:sec: hiring committee}. Finally, in Section \ref{interrogative:sec:conclusion}, we conclude and discuss directions for future research.

\section{Preliminaries}\label{interrogative:sec:preliminaries}
In this section, we discuss preliminaries on interrogative agendas, their formalization as equivalence relations, and the algebra of equivalence relations. 

\subsection{Interrogative agendas and their algebraic environment}\label{interrogative:ssec:Interrogative agendas and their logical formalizations}
  In epistemology and formal philosophy, an epistemic agent’s (or a group of epistemic agents’, e.g.~users’) interrogative agenda (or research agenda \cite{enqvist2012modelling}) indicates the set of questions they are interested in, or what they want to know relative to a certain circumstance (independently of whether they utter the questions explicitly). 
  In each context, interrogative agendas act as cognitive filters that block content which is considered irrelevant by the agent and let through (possibly partial) answers to the agent's interrogative agenda. Only the information the agent considers relevant is actually absorbed (or acted upon) by the agent and used e.g.~in their decision-making, in the formation of their beliefs, etc. 
Interrogative agendas can be organized in hierarchies, and this hierarchical structure serves to establish whether a given interrogative agenda subsumes another, and defines different notions of “common ground” among agendas.

As discussed in the introduction, interrogative agendas are in essence (conjunctions of) questions. An influential approach in logic \cite{groenendijk1984studies,van2012toward} represents questions as equivalence relations over a suitable set of possible worlds $W$ (representing the possible states of affairs relative to a given situation).
More specifically, each question is associated with an equivalence relation corresponding to a partition of $W$, such that  each block in the partition consists of possible worlds in $W$ which have the same answer to the given question. For example, let $W= \{w_1, w_2, w_3, w_4\}$ and $p$ be a proposition stating `John killed Alan'. If $V(p)=\{w_1, w_2\}$, i.e.~$p$ is true at $w_1$ and $w_2$,  and false at $w_3$ and $w_4$, then the question `Did John kill Alan?' is represented by the equivalence relation $e$ associated with the partition $\mathcal{E}_e = \{\{w_1, w_2\},\{w_3, w_4\} \}$ illustrated below:  

\begin{figure}[h]
\centering
\begin{tikzpicture}

\draw (0,0) circle (2cm);

\draw[dashed] (0,-2) -- (0,2);

\node at (-1,1) {$w_1$};
\node at (-1,-1) {$w_2$};

\node at (1,1) {$w_3$};
\node at (1,-1) {$w_4$};

\node at (-1.5, 2.2) {\small $V(p) = \text{true}$};
\node at (1.5, 2.2) {\small $V(p) = \text{false}$};


\end{tikzpicture}
\end{figure}

The equivalence relations on any set $W$ form a general (i.e.~not necessarily distributive) complete lattice $E(W)$ \cite{birkhoff1940lattice}. Lattices $E(W)$ serve as the algebraic environment  of interrogative agendas, as discussed in the next section. 

\subsection{Equivalence relations, partitions and preorders}\label{interrogative:ssec:equivalence relations and preorders}
An {\em equivalence relation} over a set $W$ is a binary relation $e\subseteq W\times W$ which is {\em reflexive} (i.e.~$(w,w)\in e$ for every $w\in W$), {\em symmetric} (i.e.~for all $w, u\in W$, if $(w, u)\in e$ then $(u, w)\in e$) and {\em transitive} (i.e.~for all $w, u, v\in W$, if $(w, u)\in e$ and $(u, v)\in e$ then $(w, v)\in e$). For any $w\in W$, we let $[w]_e: = \{u\in W\mid (w, u)\in e\}$. A partition  of $W$ is a collection $\mathcal{E}\subseteq \mathcal{P}(W)\setminus \{\varnothing\}$ such that $\bigcup \mathcal{E} = W$ and the elements of $\mathcal{E}$ are pairwise disjoint.  Each equivalence relation $e$ on $W$ gives rise to the partition  $\mathcal{E}_e: = \{[w]_e\mid w\in W\}$; conversely, every partition $\mathcal{E}$ on $W$ gives rise to the equivalence relation $e_{\mathcal{E}}$ such that if $u, v\in W$, then $(u, v)\in e_{\mathcal{E}}$ iff $u\in C$ and $v\in C$ for some $C\in \mathcal{E}$. Clearly, for every $e$ and $\mathcal{E}$,
\[e = e_{\mathcal{E}_e}\quad \mbox{ and }\quad \mathcal{E} = \mathcal{E}_{e_{\mathcal{E}}},\]
so that  equivalence relations and partitions can be identified.

    A {\em preorder} over a set $W$  is a reflexive and transitive binary relation ${\leq} \subseteq W \times W$. Every pre-order  $\leq$ on $W$ gives rise to an equivalence relation $e_\leq$ on $W$ defined as   $w_1 e_\leq w_2$ iff $w_1 \leq w_2$ and $w_2 \leq w$. Conversely, any equivalence relation $e$ on a preordered set $(W, \leq)$ gives rise to  a preorder $\leq_e$ on $W$ defined as follows:  $w \leq_e u$ iff for all $w'\in [w]_e$ some $u' \in [u]_e$ exists such that $w' \leq u'$. That is, $w \leq_e u$ iff for any $w'$ which is $e$-equivalent to $w$,   some  $u'$ exists which is $e$-equivalent to $u$ and  $w' \leq u'$. 
    
    If $e$ and $\leq$ are an equivalence relation and a pre-order on $W$, then $e$ is  {\em compatible with}   $\leq$ if  $e\, \circ \leq \circ\, e\subseteq {\leq}$, where $\circ$ denotes the operation of relational composition, and is {\em strongly compatible with} $\leq$ if $e$ is  compatible with $\leq$ and  $e_\leq \subseteq e$.  The reflexivity of $e$ implies that if $e$ is  compatible with   $\leq$, then  $e\, \circ \leq \circ\, e = {\leq}$.  By definition, if $\leq$ is a preorder, then $e_\leq$ is strongly compatible with $\leq$.

    

\begin{lemma}\label{interrogative:lem:preorder-equivalence back-forth}
For any  preorder  $\leq$ on $W$,
\begin{enumerate}
    \item  $\leq_{e_\leq} = {\leq}$.
    \item For   any equivalence relation $e$, if $e$ is strongly compatible with $\leq$, then $e_{\leq_e}=e$. 
\end{enumerate}
\end{lemma}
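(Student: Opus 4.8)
The plan is to prove both identities by double inclusion, viewing each as a ``round-trip'' between the passage $\leq \mapsto e_\leq$ and the passage $e \mapsto \leq_e$. For item (1) both inclusions follow immediately from the reflexivity and transitivity of $\leq$, and no compatibility hypothesis is needed (indeed $e_\leq$ is always strongly compatible with $\leq$). For item (2) one inclusion is automatic and holds for every equivalence relation, while the reverse inclusion is the only place where strong compatibility is used and is the crux of the argument. Throughout I read the defining clause for $e_\leq$ as the symmetric one, namely $w_1\, e_\leq\, w_2$ iff $w_1 \leq w_2$ and $w_2 \leq w_1$.

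For item (1), I would first show $\leq\, \subseteq\, \leq_{e_\leq}$. Assume $w \leq u$ and pick any $w' \in [w]_{e_\leq}$, so that $w' \leq w$; taking the witness $u' := u \in [u]_{e_\leq}$ (by reflexivity of $e_\leq$) gives $w' \leq w \leq u = u'$ by transitivity, which is exactly what $w \leq_{e_\leq} u$ requires. For the converse $\leq_{e_\leq}\, \subseteq\, \leq$, assume $w \leq_{e_\leq} u$ and instantiate the defining condition at $w' := w \in [w]_{e_\leq}$: this yields some $u' \in [u]_{e_\leq}$ with $w \leq u'$, and since $u' \in [u]_{e_\leq}$ forces $u' \leq u$, transitivity yields $w \leq u$.

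For item (2), the inclusion $e \subseteq e_{\leq_e}$ holds for any equivalence relation and uses only reflexivity of $\leq$: if $(w,u) \in e$ then $[w]_e = [u]_e$, so for every $w' \in [w]_e$ the element $w'$ itself lies in $[u]_e$ and satisfies $w' \leq w'$, witnessing $w \leq_e u$; symmetrically $u \leq_e w$, hence $(w,u) \in e_{\leq_e}$.

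The main obstacle is the reverse inclusion $e_{\leq_e} \subseteq e$, where both halves of strong compatibility enter. Assume $(w,u) \in e_{\leq_e}$, i.e.\ $w \leq_e u$ and $u \leq_e w$. Instantiating $w \leq_e u$ at $w' := w$ produces some $u' \in [u]_e$ with $w \leq u'$; since $(w,w) \in e$ and $(u',u) \in e$ (by symmetry, as $u' \in [u]_e$), the compatibility condition $e \circ \leq \circ\, e \subseteq {\leq}$ gives $w \leq u$. Instantiating $u \leq_e w$ at $u$ symmetrically yields $u \leq w$. Thus $w \leq u$ and $u \leq w$, that is $(w,u) \in e_\leq$, and the second part of strong compatibility, $e_\leq \subseteq e$, delivers $(w,u) \in e$. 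I expect the only real care needed to be in correctly instantiating the $\forall$-$\exists$ clause defining $\leq_e$ and in checking the membership facts that feed the relational composition; no deeper idea appears to be required.
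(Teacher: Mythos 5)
Your proof is correct and follows essentially the same route as the paper's: both directions of item (1) and the crucial inclusion $e_{\leq_e} \subseteq e$ of item (2) use exactly the paper's instantiations ($w' := w$, witness $u' := u$, then compatibility followed by $e_\leq \subseteq e$). The only difference is minor and in fact slightly sharper: for the easy inclusion $e \subseteq e_{\leq_e}$ you exploit $[w]_e = [u]_e$ and reflexivity of $\leq$ to take $u' := w'$, so no compatibility is needed there, whereas the paper takes $u' := u$ and invokes $e\, \circ \leq \circ\, e \subseteq {\leq}$; both are valid.
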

\begin{proof}
1. By definition,  $w \leq_{e_\leq} u$ iff for every $w' \in W$, if $(w',w)\in e_\leq$ then $w' \leq u'$ for some $u'\in W$ s.t.~$(u',u)\in e_\leq$. Hence, by instantiating $w':=w$, $w \leq u'$ for some $u'\in W$ s.t.~$(u',u)\in e_\leq$. By definition,  $(u,u')\in  e_\leq $ implies $u' \leq u$. Therefore, as $\leq$ is transitive, $w \leq u$, as required.

Conversely, let $w, w', u\in W$ s.t.~$w  \leq  u$ and $(w, w')\in  e_\leq$. By definition, $(w, w')\in  e_\leq$ implies $w' \leq w$.  Therefore, since $\leq$ is transitive, $w' \leq u$. Hence, for every such $w'$, choosing $u':=u$  ensures that $w' \leq u'$, which proves  $w \leq_{e_\leq} u$, as required. 

2. By definition, $(w, u)\in e_{\leq_e}$ iff $w\leq_e u$ and $u\leq_e w$. If $w, w', u\in W$ s.t.~$(w, u)\in  e$ and $(w', w)\in e$, then $(w', w)\in e$, $w \leq w$, and $(w, u)\in e$ imply $w' \leq u$  by the  assumption that $e$ is compatible with $\leq$. Hence, for every such $w'$, choosing $u':=u$  ensures that $w' \leq u'$, which proves  $w \leq_{e} u$.  By symmetry, $(w, u)\in  e$ implies $(u, w)\in  e$, which implies $u \leq_e w$ by analogous reasoning. Therefore, $(w, u)\in e_{\leq_e}$, as required.

Conversely, if $(w, u)\in e_{\leq_e}$, then, by definition, $w \leq_e u$ and $u \leq_e w$. By instantiating $w':=w$ in its defining clause,  $w \leq_e u$ implies $w \leq u'$ for some $u'\in W$ s.t.~$(u, u')\in e $, which implies  $w \leq u$, since $e\, \circ \leq \circ\, e\subseteq {\leq}$ by assumption. Similarly, one shows that $u \leq_e w$ implies $u \leq w$, which shows that $(w, u)\in e_\leq \subseteq e$, as required.  
\end{proof}

   \begin{definition}
    \label{interrogative:def:e prefers}
For every $w, u\in W$,  $e$ {\em prefers $u$ over $w$} iff $w\leq_e u$ and $u\not\leq_e w$.
\end{definition}

\subsection{Algebras of equivalence relations}\label{interrogative:ssec:Algebra of equivalence relations}

 For every set $W$, let $E(W)\subseteq\mathcal{P}(W\times W)$  denote the subposet of the equivalence relations on $W$ ordered by inclusion. This subposet is in fact a complete $\bigcap$-subsemilattice of $\mathcal{P}(W\times W)$, and hence is a complete lattice the top element of which is $\tau: = W\times W$, the bottom element is $\epsilon: = \Delta = \{(w, w)\mid w\in W\}$, and moreover for any $\mathcal{X}\subseteq E(W)$,
\[ \bigsqcap_{E(W)}\mathcal{X}: = \bigcap\mathcal{X}\quad \mbox{ and }\quad \bigsqcup_{E(W)}\mathcal{X}: = \mathsf{RST}(\bigcup\mathcal{X}),\]
where $\mathsf{RST}(R): = \bigcup_{n\in \omega} (R^{-1}\cup R)^{n}$ denotes the reflexive, symmetric and transitive closure\footnote{We let $R^{-1}: = \{(u, w)\in W\times W\mid (w, u)\in R\}$, and for every $n\in \mathbb{N}$,    $R^0: = \Delta$ and $R^{n+1}: = R^n\circ R$, where $S\circ T: = \{(w, v)\mid (w, u)\in S \mbox{ and } (u, v)\in T \mbox{ for some } u\in W\}$ for all $S, T\in\mathcal{P}(W\times W)$.} of any $R\in \mathcal{P}(W\times W)$.
Finite meets and joins in $E(W)$ are more easily visualized when we switch to partitions. Indeed, if $e_1$ and $e_2$ are represented as the partitions on the left-hand side of the picture below, then $e_1\sqcap e_2$ is represented by the partition obtained by {\em superposing} the lines of $e_1$ and $e_2$ (more formally, the partition capturing $e_1\sqcap e_2$ is $\{[w]_{e_1}\cap [w]_{e_2}\mid w\in W \}$), while  $e_1\sqcup e_2$ is represented as the partition obtained by {\em erasing}   the lines that $e_1$ and $e_2$ do not share (in other words, the partition capturing $e_1\sqcup e_2$ is the one obtained by taking the union of overlapping cells of the two partitions).
\begin{center}
\begin{tikzpicture}
\draw (0,0) circle (1cm);
\draw (3,0) circle (1cm);
\draw (6,0) circle (1cm);
\draw (9,0) circle (1cm);
\draw (-1,0) -- (1,0);
\draw (0,0) -- (-0.7,-0.7);
\draw (3,0) -- (4,0);
\draw (3,0) -- (3,-1);
\draw (3,0) -- (2.3,-0.7);
\draw (5,0) -- (7,0);
\draw (6,0) -- (5.3,-0.7);
\draw (6,0) -- (6,-1);
\draw (9,0) -- (10,0);
\draw (9,0) -- (8.3,-0.7);
\draw (0, -1.25) node {$e_1$};
\draw (3, -1.25) node {$e_2$};
\draw (6, -1.25) node {$e_1\sqcap e_2$};
\draw (9, -1.25) node {$e_1\sqcup e_2$};
\end{tikzpicture}
\end{center}
The lattice $E(W)$ is in general non-distributive, as is witnessed e.g.~when instantiating $W$ as the three-element set $\{a, b, c\}$,  or the four-element set $\{a, b, c, d\}$, with the corresponding $E(W)$ being the lattices represented as the  Hasse diagrams in Figure \ref{interrogative:fig: E(a, b, c) and E(a, b, c, d)}.
\begin{figure}
\begin{center}
\begin{tikzpicture}
\draw[very thick] (0, -1) -- (0, 1) --(-1, 0) -- (0, -1) -- (1, 0) -- (0, 1);
	
	\filldraw[black] (0,-1) circle (2 pt);
	\filldraw[black] (0, 1) circle (2 pt);
	\filldraw[black] (-1, 0) circle (2 pt);
	\filldraw[black] (1, 0) circle (2 pt);
	\filldraw[black] (0, 0) circle (2 pt);
		\draw (0, 1.3) node {$\{\{a, b, c\}\}$};
	\draw (0, -1.3) node {$\{\{a\}, \{b\}, \{c\}\}$};
	\draw (-1.9, 0) node {{\small{$\{\{a\}, \{b, c\}\}$}}};
    \draw (0.3, 0) node {{\small{$e_b$}}};
    \draw (1.9, 0) node {{\small{$\{\{a, b\}, \{c\}\}$}}};


    \filldraw[black] (8,2) circle (2 pt);

    \filldraw[black] (6,1) circle (2 pt); 
\filldraw[black] (8,1) circle (2 pt); 
\filldraw[black] (10,1) circle (2 pt); 

  \filldraw[black] (5,1) circle (2 pt); 
   \draw (4.7, 1.1) node {{\small{$e_{aa}$}}};
    \filldraw[black] (7,1) circle (2 pt); 
      \draw (6.7, 1.1) node {{\small{$e_{bb}$}}};
       \filldraw[black] (9,1) circle (2 pt); 
         \draw (9.3, 1.1) node {{\small{$e_{cc}$}}};
          \filldraw[black] (11,1) circle (2 pt); 
          \draw (11.3, 1.1) node {{\small{$e_{dd}$}}};
          \draw[very thick] (8, 2) -- (5, 1);
           \draw[very thick] (8, 2) -- (7, 1);
            \draw[very thick] (8, 2) -- (9, 1);
             \draw[very thick] (8, 2) -- (11, 1);

              \draw[very thick] (5, 1) -- (5.5, 0); 
              \draw[very thick] (5, 1) -- (7.5, 0);
              \draw[very thick] (5, 1) -- (9.5, 0);
	
              \draw[very thick] (7, 1) -- (5.5, 0); 
              \draw[very thick] (7, 1) -- (8.5, 0);
              \draw[very thick] (7, 1) -- (10.5, 0);

              \draw[very thick] (9, 1) -- (6.5, 0); 
              \draw[very thick] (9, 1) -- (7.5, 0);
              \draw[very thick] (9, 1) -- (10.5, 0);

              \draw[very thick] (11, 1) -- (6.5, 0); 
              \draw[very thick] (11, 1) -- (8.5, 0);
              \draw[very thick] (11, 1) -- (9.5, 0);
\filldraw[black] (5.5,0) circle (2 pt); 
 \draw (5.3, -0.1) node {{\small{$e_{ab}$}}};
\filldraw[black] (6.5,0) circle (2 pt); 
 \draw (6.3, -0.1) node {{\small{$e_{cd}$}}};
\filldraw[black] (7.5,0) circle (2 pt); 
 \draw (7.3, -0.1) node {{\small{$e_{ac}$}}};
\filldraw[black] (8.5,0) circle (2 pt); 
 \draw (8.8, -0.1) node {{\small{$e_{bd}$}}};
\filldraw[black] (9.5,0) circle (2 pt); 
 \draw (9.8, -0.1) node {{\small{$e_{ad}$}}};
\filldraw[black] (10.5,0) circle (2 pt); 
 \draw (10.8, -0.1) node {{\small{$e_{bc}$}}};
\filldraw[black] (8,-1) circle (2 pt);
\draw[very thick] (8, 2) -- (6, 1) --
	(5.5, 0) -- (8, -1) -- (6.5, 0) -- (6, 1);	
\draw[very thick] (8, 2) -- (8, 1) --
	(7.5, 0) -- (8, -1) -- (8.5, 0) -- (8, 1);	
\draw[very thick] (8, 2) -- (10, 1) --
	(9.5, 0) -- (8, -1) -- (10.5, 0) -- (10, 1);	
    	\draw (8, 2.3) node {$\{\{a, b, c, d\}\}$};
    \draw (8, -1.3) node {$\{\{a\}, \{b\}, \{c\}, \{d\} \}$};
\end{tikzpicture}
\end{center}
\caption{Hasse diagrams of the lattices $E(\{a, b, c\})$,  and  $E(\{a, b, c, d\})$. In the diagram on the left, $e_b$ corresponds to the partition $\{\{b\}, \{a, c\}\}$, and in the diagram on the right,  $e_{xy} = \{\{x\}, \{y\}, W\setminus\{x, y\}\}$ for all $x, y\in \{a, b, c, d\}$.}
\label{interrogative:fig: E(a, b, c) and E(a, b, c, d)}
\end{figure}
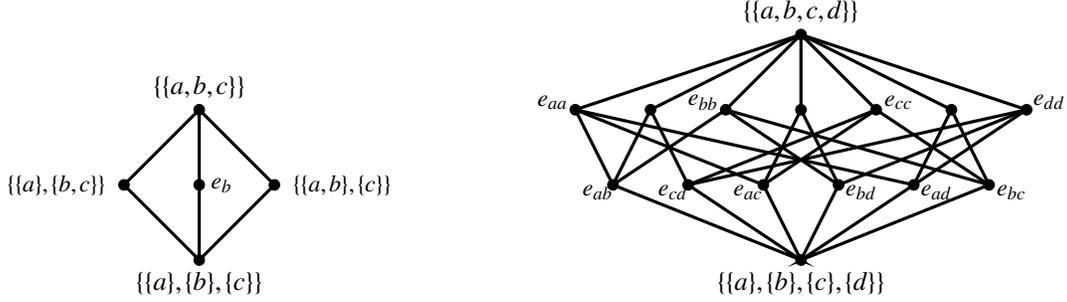

 Lattices of equivalence relations have been extensively studied \cite{birkhoff1935structure,Ore1943-ORETOE}. In particular, 
 every general lattice is a sublattice of $E(W)$ for some set $W$ \cite{whitman1946lattices}. This 
immediately implies that the negation-free fragment of classical propositional logic without the distributivity axioms (which 
we refer to as the basic nondistributive logic) is sound and complete w.r.t. the class of lattices of equivalence relations. Hence, 
the basic nondistributive logic can be regarded as the basic logic of interrogative agendas.

\subsection{The standard modal logic approach to equivalence relations} 
\label{interrogative:ssec:modal logic}
Representing equivalence relations as partitions is also useful to discuss two important constructions in modal logic. 
Each equivalence relation $e$ on $W$ induces the  semantic (normal) modal operators $\langle e\rangle, [e] : \mathcal{P}(W)\to \mathcal{P}(W)$  defined in the following, standard way. For any $X\subseteq W$, \[\langle e \rangle X: = e^{-1}[X]: = \{w\in W\mid (w, x)\in e\mbox{ for some } x\in X\} = \bigcup\{[x]_e\mid x\in X\}\] \[ [e]X: = (e^{-1}[X^c])^c: = \{w\in W\mid (w, x)\in e\mbox{ for all } x\in X\} = \bigcup\{[x]_e\mid x\in X \mbox{ and } [x]_e\subseteq X\},\]
where $(\cdot)^c$ denotes the relative complement. Thus, $\langle e \rangle$ maps $X$ to the union of the $e$-equivalence classes which have nonempty intersection with $X$, and $[e]$ maps $X$ to the union of the $e$-equivalence classes  contained in $X$, as illustrated in the following picture:
\begin{center}
\begin{tikzpicture}
\draw (0,0) circle (1cm);
\draw (-1,0) -- (1,0);
\draw (0,1) -- (0,-1);
\draw (0.7,0.7) -- (0.7,-0.7);
\draw (0.7,-0.7) -- (-0.7,-0.7);
\draw (-0.7,-0.7) -- (-0.7,0.7);
\draw (-0.7,0.7) -- (0.7,0.7);
\draw (0.35,0.95) -- (0.35,-0.95);
\draw (-0.35,0.95) -- (-0.35,-0.95);
\draw (0.95,0.35) -- (-0.95,0.35);
\draw (0.95,-0.35) -- (-0.95,-0.35);
\draw (3,0) circle (1cm);
\draw (2,0) -- (4,0);
\draw (3,1) -- (3,-1);
\draw (3.7,0.7) -- (3.7,-0.7);
\draw (2.3,0.7) -- (2.3,-0.7);
\draw (2.3,0.7) -- (3.7,0.7);
\draw (2.3,-0.7) -- (3.7,-0.7);
\draw (3.35,0.95) -- (3.35,-0.95);
\draw (2.65,0.95) -- (2.65,-0.95);
\draw (3.95,0.35) -- (2.05,0.35);
\draw (3.95,-0.35) -- (2.05,-0.35);
\draw (6,0) circle (1cm);
\draw (5,0) -- (7,0);
\draw (6,1) -- (6,-1);
\draw (6.7,0.7) -- (6.7,-0.7);
\draw (5.3,0.7) -- (5.3,-0.7);
\draw (5.3,0.7) -- (6.7,0.7);
\draw (5.3,-0.7) -- (6.7,-0.7);
\draw (6.35,0.95) -- (6.35,-0.95);
\draw (5.65,0.95) -- (5.65,-0.95);
\draw (6.95,0.35) -- (5.05,0.35);
\draw (6.95,-0.35) -- (5.05,-0.35);
\draw (0, -1.25) node {$e$ and $X$};
\draw (3, -1.25) node {$\langle e\rangle X$};
\draw (6, -1.25) node {$[e]X$};
\draw[fill=gray, opacity=0.5] (-0.175,-0.175) circle (9pt);
\draw (-0.175,-0.175) circle (9pt);
\draw (2.825,-0.175) circle (9pt);
\draw (5.825,-0.175) circle (9pt);
\draw[fill=gray, opacity=0.5] (2.3,-0.7) rectangle (3.35,0.35);
\draw[fill=gray, opacity=0.5] (5.65,-0.35) rectangle (6,0);
\end{tikzpicture}
\end{center}
Because $e$ is symmetric, $\langle e\rangle$ and $[e]$ are both dual to each other (i.e.~$[e]X: = (\langle e\rangle X^c)^c$) and form an adjoint pair,  i.e.~for all $X, Y\subseteq W$,
\begin{equation}\langle e\rangle X\subseteq Y\quad \mbox{ iff }\quad X\subseteq [e]Y.\end{equation}
  As usual, $\langle e\rangle$  preserves arbitrary joins in $\mathcal{P}(W)$ (hence also $\varnothing$ as the empty join) and $[e]$ preserves arbitrary meets in $\mathcal{P}(W)$ (hence also $W$ as the empty meet). The following properties  concern the interaction between the algebraic structure of  $E(W)$ and the semantic modal operators:
\begin{proposition}\label{semantic properties IA}
For any set $W$, any $X\subseteq W$, and $e_1, e_2\in E(W)$,
\begin{enumerate}
\item $\langle\tau\rangle X =\begin{cases}
  W & \text{if } X\neq \varnothing\\
  \varnothing & \text{if } X = \varnothing\\
  \end{cases}$
  \;\; and\;\; $[\tau] X =\begin{cases}
  \varnothing & \text{if } X\neq W\\
  W & \text{if } X = W;\\
  \end{cases}$
  \item $\langle \epsilon\rangle X =  X$ and $[\epsilon] X =  X$;
\item if $e_1\subseteq e_2$ then $\langle e_1\rangle X\subseteq \langle e_2\rangle X$. 

\item if $e_1\subseteq e_2$ then $[e_2]X\subseteq [e_1]X$. 
\end{enumerate}
\end{proposition}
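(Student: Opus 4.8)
The plan is to verify each of the four clauses directly from the definitions of $\langle e\rangle$ and $[e]$, exploiting the two characterizations recorded just before the statement: the relational one, $\langle e\rangle X = e^{-1}[X]$, and the equivalence-class one, $\langle e\rangle X = \bigcup\{[x]_e \mid x \in X\}$ together with $[e]X = \bigcup\{[x]_e \mid x \in X \text{ and } [x]_e \subseteq X\}$. Since $\langle e\rangle$ and $[e]$ are dual, I only ever need to argue about one of them and then transfer to the other via $[e]X = (\langle e\rangle X^c)^c$.

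First I would treat clauses 1 and 2, which are pure computations hinging on the shape of the equivalence classes of the two extremal relations. For $\tau = W\times W$ every class is all of $W$, i.e.~$[w]_\tau = W$ for each $w$; hence $\langle\tau\rangle X$ equals $W$ as soon as some $x\in X$ contributes its class (that is, whenever $X \neq \varnothing$) and equals $\varnothing$ otherwise, while $[\tau]X$ collects the class $W$ precisely when $W \subseteq X$, i.e.~when $X = W$. For $\epsilon = \Delta$ every class is a singleton, $[w]_\epsilon = \{w\}$, so both $\langle\epsilon\rangle X$ and $[\epsilon]X$ collapse to $X$ itself; note that $\{x\}\subseteq X$ holds automatically for $x\in X$, so the side condition in the definition of $[\epsilon]X$ is vacuous.

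For clause 3 I would use the relational form: if $e_1 \subseteq e_2$ then $e_1^{-1} \subseteq e_2^{-1}$, and taking preimages is monotone, so $\langle e_1\rangle X = e_1^{-1}[X] \subseteq e_2^{-1}[X] = \langle e_2\rangle X$. Clause 4 then follows immediately by duality: applying clause 3 to $X^c$ gives $\langle e_1\rangle X^c \subseteq \langle e_2\rangle X^c$, and complementation reverses inclusions, whence $[e_2]X = (\langle e_2\rangle X^c)^c \subseteq (\langle e_1\rangle X^c)^c = [e_1]X$.

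None of the clauses presents a genuine obstacle; the only point demanding a little care is the bookkeeping in clause 1, where one must not conflate the two cases of $[\tau]X$ (the box yields $W$ only in the single case $X = W$, not merely when $X \neq \varnothing$), and recall that this asymmetry between the diamond and box thresholds is exactly what the dual definition predicts. Deriving clause 4 from clause 3 by duality, rather than re-running the equivalence-class argument, keeps the proof short and avoids repeating essentially the same computation.
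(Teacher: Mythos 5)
Your proof is correct, and for items 1--3 it coincides with the paper's own argument: both verify the claims directly from the equivalence-class description of $\langle e\rangle$ and $[e]$ (your use of the relational form $e^{-1}[X]$ and monotonicity of preimages for item 3 is an immaterial variant, since $e_1\subseteq e_2$ gives $[x]_{e_1}\subseteq [x]_{e_2}$ either way). The only genuine divergence is item 4: the paper proves it directly --- if $x\in [e_2]X$, i.e.~$x\in X$ and $[x]_{e_2}\subseteq X$, then $[x]_{e_1}\subseteq [x]_{e_2}\subseteq X$, so $x\in [e_1]X$ --- whereas you derive it from item 3 by complementation, using $[e]X=(\langle e\rangle X^c)^c$, which indeed holds by the very definition $[e]X:=(e^{-1}[X^c])^c$. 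Both are sound. Your duality route avoids repeating the monotonicity computation and makes the antitonicity of $[\cdot]$ in the relation argument a purely formal consequence of the monotonicity of $\langle\cdot\rangle$; the paper's direct argument keeps item 4 self-contained and independent of the Boolean complementation identity, which is a small virtue in view of the paper's broader program, where the ambient algebras are later allowed to be non-Boolean (general lattices of interrogative agendas) and such duality is not available.
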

\begin{proof}
Item 1 follows immediately from $[x]_{\tau} = W$ for every $x\in W$.
Item 2 follows immediately from $[x]_{\epsilon} = \{x\}$ for every $x\in W$. As to 3,
by assumption, $[x]_{e_1}\subseteq  [x]_{e_2}$ for all $x\in W$. Hence, $\langle e_1\rangle X = \bigcup\{[x]_{e_1}\mid x\in X\}\subseteq \bigcup\{[x]_{e_2}\mid x\in X\} = \langle e_2\rangle X$.
As to 4,
by assumption, $[x]_{e_1}\subseteq  [x]_{e_2}$ for all $x\in W$. Hence, if $x\in [e_2] X$, i.e.~$x\in X$ and $[x]_{e_2}\subseteq X$, then $[x]_{e_1}\subseteq X$, and hence $x\in [e_1] X$, as required.
\end{proof}

\begin{remark}
It is not true in general that
$\langle e_1\sqcup e_2\rangle X = \langle e_1\rangle X \cup \langle e_2\rangle X$. Indeed, let $W = A\cup B\cup C$ where $A = \{a_1, a_2, a_3\}$, $B=\{ b_1, b_2, b_3\}$, $C=\{ c_1, c_2, c_3\}$; consider $e_1 = \{A, B, C\}$ and $e_2 = \{D_1, D_2, D_3\}$ where $D_i = \{a_i, b_i, c_i\}$ for $1\leq i\leq 3$, and $X = \{a_2, b_3\}$. Clearly, $e_1\sqcup e_2 = \tau$, and hence $\langle e_1\sqcup e_2 \rangle X = W$ (cf.\ Proposition \ref{semantic properties IA}.1). However, $\langle e_1\rangle X = A\cup B$ and $\langle e_2\rangle X = D_2\cup D_3$. Hence, $\langle e_1\rangle X \cup \langle e_2\rangle X = (A\cup B)\cup (D_2\cup D_3)\neq W$.

This example is also a counterexample for $\langle e_1\sqcap e_2\rangle X = \langle e_1\rangle X \cap \langle e_2\rangle X$. Indeed, $e_1\sqcap e_2 = \epsilon$, and hence $\langle e_1\sqcap e_2 \rangle X = X$ (cf.\ Proposition \ref{semantic properties IA}.1); however, $\langle e_1\rangle X \cap \langle e_2\rangle X = (A\cup B)\cap (D_2\cup D_3)\neq X$.

This same example is a counterexample for $  [ e_1\sqcap e_2] X  = [e_1] X \cup [ e_2] X$. Indeed,  $[ e_1\sqcap e_2] X = [\epsilon] X = X$ (cf.\ Proposition \ref{semantic properties IA}.1); however, $X$ does not contain any equivalence cell of $e_1$ or of $e_2$, and hence $[e_i]X = \varnothing$ for $1\leq i\leq 2$, thus $[e_1] X \cup [ e_2] X = \varnothing\neq X$.

Finally, it is not true that  $[ e_1\sqcup e_2] X = [ e_1] X \cap  [e_2] X$.  Indeed, let $W$, $e_1$ and $e_2$ be as in the example above, and let $X: = A\cup D_1$. Then $[ e_1\sqcup e_2] X = [\tau] X = \varnothing$ (cf.\ Proposition \ref{semantic properties IA}.1); however, $[e_1] X = A$ and $[e_2]X = D_1$, hence $[ e_1] X \cap  [e_2] X = \{a_1\}\neq \varnothing$.
\end{remark}

\subsection{Generating the lattice of equivalence relations} We let $\jty(E(W))$ (resp.~$\mty(E(W))$) denote the set  of  completely join-irreducible (resp.~meet-irreducible) elements of $E(W)$.\footnote{For any bounded lattice $L$ and any $x\in L$, $x$ is {\em completely join-irreducible} if $x\neq \bot$, and  $x = \bigvee X$ for $X\subseteq L$ implies $x\in X$; dually, $x$ is {\em completely meet-irreducible} if $x\neq \top$, and  $x = \bigwedge X$ for $X\subseteq L$ implies $x\in X$.  }
\begin{proposition}
\label{interrogative:prop:charact meet-irr}
For any set $W$, if $|W|\geq 2$, then $e\in \mty(E(W))$ iff $e$ is identified by some  partition of the form $\mathcal{E}_{X}: = \{X, W\setminus X\}$ with $\varnothing \subsetneq X \subsetneq W$.
\end{proposition}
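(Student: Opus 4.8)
The plan is to exploit the identification of equivalence relations with partitions, ordered so that $e_1\subseteq e_2$ means the partition of $e_1$ refines that of $e_2$. Under this order $\top = \tau$ is the one-block partition, $\epsilon$ is the discrete one, and passing to a strictly larger relation amounts to merging blocks. The structural fact I would isolate first is that the relations \emph{covering} a given $e$ are exactly those obtained by merging two of its blocks into a single block; in particular a partition with at least three blocks has several distinct upper covers, whereas a two-block partition has only $\tau$ above it. The whole argument is organized around this observation.

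For the right-to-left direction I would show that every two-block partition $\mathcal{E}_X = \{X, W\setminus X\}$ is a coatom of $E(W)$, and that every coatom is completely meet-irreducible. For coatomicity, suppose $e_X\subsetneq f$; then $f$ identifies some $a\in X$ with some $b\in W\setminus X$, and since $f$ already identifies all of $X$ and all of $W\setminus X$, transitivity forces $f=\tau$, so $\tau$ is the only relation strictly above $e_X$. That a coatom $c$ is completely meet-irreducible is then a lattice-theoretic triviality that handles \emph{arbitrary} (not merely binary) meets: if $c=\bigsqcap\mathcal{X}$ with $c\notin\mathcal{X}$, then every member of $\mathcal{X}$ strictly dominates $c$ and hence equals $\tau$, so $\bigsqcap\mathcal{X}=\tau\neq c$, and the empty-meet case yields $\tau$ as well.

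For the left-to-right direction I would argue contrapositively. A completely meet-irreducible element differs from $\top=\tau$, so its partition $\mathcal{E}_e$ has at least two blocks; it therefore suffices to rule out the case of three or more blocks. Picking distinct blocks $B_1,B_2,B_3$ of $\mathcal{E}_e$, let $e_1$ be the coarsening merging $B_1$ with $B_2$ and $e_2$ the coarsening merging $B_1$ with $B_3$. Both strictly contain $e$ and are distinct, and a direct inspection of cross-block pairs shows that their common refinement restores every original block, i.e.\ $e_1\sqcap e_2 = e$. This exhibits $e$ as a binary meet of two strictly larger relations, neither equal to $e$, contradicting complete meet-irreducibility; hence $\mathcal{E}_e$ has exactly two blocks.

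The only point requiring care, and the only place where the possible infiniteness of $W$ could bite, is the right-to-left direction, where the definition of complete meet-irreducibility quantifies over arbitrary subsets $\mathcal{X}\subseteq E(W)$; the coatom argument above is designed precisely to dispatch this uniformly, so I do not expect a real obstacle. The left-to-right direction uses only a single binary meet and is insensitive both to $|W|$ and to how many blocks $\mathcal{E}_e$ has beyond the three that are used. The hypothesis $|W|\geq 2$ enters only to guarantee that two-block partitions exist and that $\tau\neq\epsilon$, so the statement is non-vacuous.
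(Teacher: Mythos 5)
Your proposal is correct and follows essentially the same route as the paper's proof: the right-to-left direction shows $\mathcal{E}_X$ is a coatom by the same transitivity argument, and the left-to-right direction uses the identical construction of two strictly coarser relations (merging a common block with two distinct other blocks) whose meet recovers $e$. The only difference is presentational: you spell out explicitly the lattice-theoretic step that coatoms are completely meet-irreducible (including the empty-meet case), which the paper leaves implicit in its remark that it is proving the stronger coatom property.
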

\begin{proof}
Let $e$ be identified by $\mathcal{E}_X$ as above. Then, by construction, $e\neq \tau$. To show that $e$ is completely meet-irreducible, it is enough to show that, for every $e'\in E(W)$, if $e\subsetneq e'$, then $e' = \tau$ (so we are actually showing something stronger, namely that $e$ is a coatom). If $(x_0, y_0)\in e'$  but $(x_0, y_0)\notin e$, then, modulo renaming of variables, we can assume that $x_0\in X$ and $y_0\in W\setminus X$. Then,  each $x\in X$ is $e$-equivalent, hence $e'$-equivalent, to $x_0$, which is $e'$-equivalent to $y_0$, which is $e$-equivalent, hence $e'$-equivalent, to any $y\in W\setminus X$. Since $e'$ is transitive, this shows that $e'$ is the total relation, as required.

Conversely, let $e$ be a completely meet-irreducible element of $E(W)$. Then $e\neq \tau$, hence some $x, y\in W$ exist such that $(x, y)\notin e$. Since $e$ is reflexive, it must be $x\neq y$. Let $X\subsetneq W$ and $Y\subsetneq W$ respectively denote the $e$-equivalence classes of  $x$ and $y$. To complete the proof, it is enough to show that  $(W\setminus X)\subseteq Y$. Let $z\in W\setminus X$, and assume for contradiction that $z\notin Y$, i.e.~$(z, y)\notin e$. Hence, $Z\neq X$ and $Z\neq X$, where $Z$ denotes the $e$-equivalence class of $z$. Then let $e_1, e_2\in E(W)$ be respectively identified by the  partitions $\mathcal{E}_1: = \{X\cup Y, Z\}\cup \{[w]_e\mid w\in W\setminus(X\cup Y\cup Z)\}$ and $\mathcal{E}_2: = \{X\cup Z, Y\}\cup \{[w]_e\mid w\in W\setminus(X\cup Y\cup Z)\}$. By construction, $e = e_1\sqcap e_2$; however, $e\neq e_1$ and $e\neq e_2$,  contradicting the meet-irreducibility of  $e$.
\end{proof}
Thus, the meet-irreducible elements  of the lattice $E(W)$ can be identified with the binary (i.e. yes/no) questions  on $W$.

We observe that, as an immediate consequence of the proposition above, the completely meet-irreducibles and the coatoms of $E(W)$ coincide. However, notice that being a coatom does not imply being  meet-prime.\footnote{For any bounded lattice $L$ and any $x\in L$, $x$ is {\em completely join-prime} if $x\neq \bot$, and  $x \leq \bigvee X$ for $X\subseteq L$ implies $x\leq x'$ for some $x'\in X$; dually, $x$ is {\em completely meet-prime} if $x\neq \top$, and  $\bigwedge X\leq x$ for $X\subseteq L$ implies $x'\leq x$ for some $x'\in X$.  } As an example, consider $E(\{a, b, c\})$  (cf.~Figure \ref{interrogative:fig: E(a, b, c) and E(a, b, c, d)}). Let $e_a, e_b, e_c$ be the three coatoms of $E(W)$. Then $e_a\sqcap e_c = \epsilon\leq e_b$; however, $e_a\nleq e_b$ and $e_c\nleq e_b$. 
\begin{proposition}
\label{interrogative:prop:charact join-irr}
For any set $W$, if $|W|\geq 2$, then  $e\in \jty(E(W))$ iff $e$ is identified by some partition of the form $\mathcal{E}_{xy}: = \{\{x, y\}\}\cup \{\{z\}\mid z\in W\setminus \{x, y\} \}$ with $x, y \in W$ such that $x\neq y$.
\end{proposition}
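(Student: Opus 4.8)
The plan is to show that the partitions $\mathcal{E}_{xy}$ identify precisely the \emph{atoms} of $E(W)$ (the elements covering the bottom $\epsilon = \Delta$), and then to combine this with the general fact that atoms of a complete lattice are completely join-irreducible. Since $E(W)$ is not self-dual, this statement does not follow by formally dualizing Proposition \ref{interrogative:prop:charact meet-irr}, so I would argue directly; nonetheless the shape of the argument runs parallel to it, with the bottom $\epsilon$ playing the role that $\tau$ played there.

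For the right-to-left implication, let $e$ be identified by $\mathcal{E}_{xy}$. First I would check that $e$ is an atom. It is distinct from $\epsilon$ because $(x,y)\in e$ while $x\neq y$. If $e'\in E(W)$ satisfies $\epsilon\subsetneq e'\subseteq e$, then $e'$ contains some non-diagonal pair; but the only non-diagonal pairs of $e$ are $(x,y)$ and $(y,x)$, so $(x,y)\in e'$, and since $e$ consists exactly of the diagonal together with $(x,y)$ and $(y,x)$, reflexivity and symmetry of $e'$ force $e\subseteq e'$, whence $e'=e$. Thus nothing lies strictly between $\epsilon$ and $e$. It then remains to recall that atoms are completely join-irreducible: if $e=\bigsqcup X$ with $e\notin X$, then every member of $X$ is strictly below $e$, hence equals $\epsilon$ by atomicity, so $\bigsqcup X=\epsilon\neq e$, a contradiction. (This disposes of the empty-$X$ and all-$\epsilon$ cases at once.)

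For the converse, suppose $e$ is completely join-irreducible. Then $e\neq\epsilon$, so there is a pair $(a,b)\in e$ with $a\neq b$; for each such pair let $e_{ab}$ be the equivalence relation identified by $\mathcal{E}_{ab}$, and note $e_{ab}\subseteq e$ since the unique non-diagonal pair of $e_{ab}$ is $(a,b)\in e$. The key step is the decomposition
\[ e = \bigsqcup\{\, e_{ab} \mid (a,b)\in e,\ a\neq b \,\}. \]
The inclusion $\sqsupseteq$ is immediate, as $e$ is an upper bound of the family and $\bigsqcup$ is the least upper bound; for $\subseteq$, any diagonal pair lies in every equivalence relation, while any non-diagonal $(a,b)\in e$ lies in the joinand $e_{ab}$, hence in the join. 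Since this family is nonempty (as $e\neq\epsilon$) and $e$ is completely join-irreducible, $e$ must itself be one of the $e_{ab}$, i.e.\ $e$ is identified by a partition of the form $\mathcal{E}_{ab}$, as required.

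The argument is largely routine; the one place demanding attention is the join-decomposition above, because $\bigsqcup$ in $E(W)$ is computed as a reflexive--symmetric--transitive closure $\mathsf{RST}(\bigcup)$ rather than a plain union. One must therefore verify that the reverse inclusion $\subseteq$ does not secretly rely on the closure adding pairs — and indeed it does not, since every pair of $e$ already occurs in a single joinand (or is diagonal), so the raw union contains $e$ and no appeal to symmetry or transitivity of the closure is needed. The same decomposition into the full (possibly infinite) family of sub-atoms is what secures \emph{complete}, and not merely finite, join-irreducibility, handling both cases uniformly.
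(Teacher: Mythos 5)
Your proof is correct, but its converse direction runs along a genuinely different line from the paper's. The paper argues on the shape of the partition: assuming some $e$-class contains three pairwise distinct elements, or that some second class is a non-singleton, it splits a class into two parts and exhibits $e$ as a \emph{binary} join $e_1\sqcup e_2$ of two strictly smaller equivalence relations, contradicting join-irreducibility. You instead decompose an arbitrary completely join-irreducible $e$ atomistically, as $e = \bigsqcup\{e_{ab}\mid (a,b)\in e,\ a\neq b\}$, and invoke complete join-irreducibility on this (possibly infinite) family to conclude that $e$ equals some $e_{ab}$. Each approach buys something. Yours is shorter, avoids the case analysis, is rightly careful that the $\mathsf{RST}$-closure adds nothing beyond the raw union, and establishes along the way that $E(W)$ is atomistic (every element is the join of the atoms below it) --- a fact the paper records only after the two propositions, as a consequence of them. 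The paper's argument, by contrast, uses only binary joins, so it in fact proves the formally stronger statement that every finitely join-irreducible element of $E(W)$ is an atom; your decomposition genuinely needs the infinitary hypothesis whenever $e$ has infinitely many non-diagonal pairs, and so would not yield that strengthening. For the proposition as stated --- which concerns $\jty(E(W))$, the \emph{completely} join-irreducible elements --- both arguments are complete; your right-to-left direction (atom, hence completely join-irreducible) coincides with the paper's, merely spelling out the details that the paper leaves implicit.
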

\begin{proof}
By construction, any $e$ corresponding to some $\mathcal{E}_{xy}$ as above is an atom, and hence a completely join-irreducible element of $E(W)$. Conversely,  let $e$ be a completely join-irreducible element of $E(W)$. Then $e\neq \epsilon$, hence some $x, y\in W$ exist such that $x\neq y$ and $(x, y)\in e$. To show that $e$ is identified by the partition $\mathcal{E}_{xy}: = \{\{x, y\}\}\cup \{\{z\}\mid z\in W\setminus \{x, y\} \}$, we need to show that (a) the $e$-equivalence class $X$ of $x$ cannot contain  three pairwise distinct elements, and (b) for any $z\in W\setminus \{x, y\}$, the $e$-equivalence class $Z$ of $z$ is a singleton. As to (a), assume for contradiction that $X$ contains  three pairwise distinct elements $x, y, z$. Hence, $X$ is both the union of two disjoint subsets $X_1$ and $X_2$ such that $x\in X_1$ and $y, z\in X_2$ and is the union of two disjoint subsets $Y_1$ and $Y_2$ such that $y\in Y_1$ and $x, z\in Y_2$. Then  let $e_1, e_2\in E(W)$ be  identified respectively by the  partitions $\mathcal{E}_1: = \{\{X_1, X_2\}\cup \{[w]_e\mid w\in W\setminus X\}$ and $\mathcal{E}_2: = \{Y_1, Y_2\}\cup \{[w]_e\mid w\in W\setminus X\}$. By construction,  $e = e_1\sqcup e_2$; however, $e\neq e_1$ and $e\neq e_2$,  contradicting the join-irreducibility of $e$. As to (b), by (a) and the assumptions, $X$ and $Z$ are disjoint and $X$ contains two distinct elements. If $Z$ contains some $z'\in W$ such that $z\neq z'$,  then $Z$ is  the union of two disjoint subsets $Z_1$ and $Z_2$ such that $z\in Z_1$ and $z'\in Z_2$. Then  let $e_1, e_2\in E(W)$ be respectively identified by the  partitions $\mathcal{E}_1: = \{\{Z_1, Z_2\}\cup \{[w]_e\mid w\in W\setminus Z\}$ and $\mathcal{E}_2: = \{\{x\}, \{y\}\}\cup \{[w]_e\mid w\in W\setminus X\}$. By construction,  $e = e_1\sqcup e_2$; however, $e\neq e_1$ and $e\neq e_2$, again contradicting the join-irreducibility of $e$.
\end{proof}
We observe that, as an immediate consequence of the proposition above, the completely join-irreducibles and the atoms of $E(W)$ coincide. However, notice that being an atom does not imply being join-prime. As an example, consider $E(\{a, b, c\})$  (cf.~Figure \ref{interrogative:fig: E(a, b, c) and E(a, b, c, d)}). Let $e_a, e_b, e_c$ be the three coatoms of $E(W)$. Then $e_b\leq e_a\sqcup e_c = \tau$; however, $e_b\nleq e_a$ and $e_b\nleq e_c$.

From the previous propositions, it follows that, for any finite set $W$ such that $|W| = n\geq 2$, the cardinality of the set $\mty(E(W))$ of  completely meet-irreducible elements of $E(W)$ is $2^{n-1}-1$, and the cardinality of the set $\jty(E(W))$ of completely join-irreducible elements of $E(W)$ is $\frac{n^2-n}{2}$. 

 Moreover, it also follows that every $e\in E(W)$ is both the join of the completely join-irreducible elements below it and the meet of the completely meet-irreducible elements above it. In other words, $E(W)$ is a perfect lattice  dual to the {\em RS-polarity} (cf.~\cite[Definition 2.12]{gehrke2006generalized}) $(\jty(E(W)), \mty(E(W)), \leq)$.

 \section{Deliberation in a hiring committee}
\label{interrogative:sec: hiring committee}

Let us introduce the role of interrogative agendas in deliberation through 
the following scenario, from  \cite[Example 3]{Baltag2018}:
\begin{quote}
  John and
Mary are two candidates for an open position [...]. Both candidates claim
to work at the intersection of Philosophy and Logic. Both candidates have asked their supervisors to write letters
of reference for them and it now happens that John has better letters of reference (by far) than Mary. Our hiring
committee consists of two members, Alan and Betty. Alan is our philosophy expert, who doesn't understand much
formal logic. He knows that John's philosophical writing is slightly better than Mary. But Alan can't judge the
logic proofs of the candidates. Betty is a formal logician, she knows that Mary's logical work is really great, it fully
backs her philosophical claims: indeed, Mary's work is much better than John's (John's proofs are full of mistakes).
Both Alan and Betty can see that John has much better references than Mary, so this is common knowledge in the
group $G = \{Alan, Betty\}$. We assume that these three facts together imply that Mary is the best candidate overall.
Moreover, this fact is distributed knowledge in the group $G = \{Alan, Betty\}$. But is this really the decision that
the committee will reach? [...] the answer to this will depend on the issues that the committee
members consider to be relevant for this hiring. The winner will be the candidate who performs better on (all and
only) the issues which are agreed to be relevant by the committee members.
\end{quote}
So, scoring\footnote{The binary score captures the gist of this deliberation scenario, while keeping the space of candidate profiles  relatively small. Later on in this section, different types of scores will also be considered, including the many-valued linear scores adopted  in \cite{Baltag2018}.} John and Mary over $0$ and $1$,  John does better than Mary on the letters of reference, they do  equally well on the philosophy parameter, and Mary does better than John on the logic parameter.
As stressed in \cite{Baltag2018}, key to this decision-making is which parameters Alan  and Betty  consider  {\em relevant}. 
We propose that this is in fact the {\em sole key} aspect,  
and specifically, that the essentials of this story are actually {\em independent} from what the agents know (either individually or as a group). That is, even under the much stronger assumption that the performances of the candidates on each parameter are common knowledge between Alan and Betty, they might still consider some parameters not relevant, and let their respective positions depend only on the parameters they consider relevant for the decision-making, i.e.~their interrogative agendas.

In this paper, we propose that the possible outcomes of this and other deliberation scenarios can be analyzed  in terms of how agents come up with a {\em coalition  agenda} starting from their individual interrogative agendas.\footnote{In \cite{Baltag2018},  outcomes, and hence interrogative agendas leading to them, are classified as good or bad; we do not take such a position. The only distinction we make is between interrogative agendas that lead to an outcome and interrogative agendas that do not (cf.~Definition \ref{interrogative:def:e prefers}). 
}
In what follows, we analyze this deliberation scenario using the tools introduced  in the previous section.

\paragraph{Total dominance and binary scores.} Let
\[W =\{(w_r, w_p, w_l)\mid w_p, w_r, w_l\in \{0, 1\}\}\]
be the space of possible binary evaluations, or profiles, over the set $X\coloneqq \{r,p,l\}$ of initial parameters.\footnote{In the literature  \cite{gardenfors2004conceptual}, $W$ is sometimes  referred to  as the {\em feature space}.} In this space, John and Mary can be identified by the triples $(1, 1, 0)$ and $(0, 1, 1)$ respectively, given their relative performances on each parameter.
The {\em total dominance} winning rule for this deliberation scenario is encoded by the following partial order on $W$: for any $w, u\in W$, $w\leq u$ iff $w_x\leq u_x$ for every $x\in \{r, p, l\}$. That is, one candidate {\em dominates} the other if the two candidates have different profiles, and the one of the second candidate is coordinatewise less than or equal to  the one of the first. 
The  poset  $P = (W, \leq )$ can be represented as follows:
\begin{center}
\begin{tikzpicture}
\draw[very thick] (-1, 0) -- (-1, 1) --
	(0, 0) -- (1, 1) -- (1, 0) -- (0, 1) -- (-1, 0);
	\draw[very thick] (0, 2) -- (-1, 1);
\draw[very thick] (0, 2) -- (0, 1);
\draw[very thick] (0, 2) -- (1, 1);
	\draw[very thick] (0, -1) -- (-1, 0);
\draw[very thick] (0, -1) -- (0, 0);
\draw[very thick] (0, -1) -- (1, 0);
	\filldraw[black] (0,-1) circle (2 pt);
	\filldraw[black] (0, 2) circle (2 pt);
    \filldraw[black] (-1, 1) circle (2 pt);
	\filldraw[black] (1, 1) circle (2 pt);
	\filldraw[black] (0, 1) circle (2 pt);
	\filldraw[black] (-1, 0) circle (2 pt);
	\filldraw[black] (1, 0) circle (2 pt);
	\filldraw[black] (0, 0) circle (2 pt);
		\draw (0, 2.3) node {$(1, 1, 1)$};
	\draw (0, -1.3) node {$(0, 0, 0)$};
	\draw (-1.9, 0) node {{\small{$(1, 0, 0)$}}};
\draw (-1.9, 1) node {{\small{$(1, 1, 0)$}}};
    \draw (0.12, -0.2) node {{\small{$(0, 1, 0)$}}};
    \draw (0.12, 1.2) node {{\small{$(1, 0, 1)$}}};
    \draw (1.9, 0) node {{\small{$(0, 0, 1)$}}};
    \draw (1.9, 1) node {{\small{$(0, 1, 1)$}}};
     \draw (1.9, 1.5) node {{\small{Mary}}};
      \draw (-1.9, 1.5) node {{\small{John}}};

\end{tikzpicture}
\end{center}

In the above scenario, each committee member will individually judge candidates only on the basis of the parameters that they consider relevant. Thus,  Alan prefers John over Mary since John dominates Mary on all parameters relevant to Alan, while Betty is undecided between John and Mary, since neither dominates the other on all the parameters relevant to Betty. 
\begin{definition}
For every $Y\subseteq X$, let $e_Y\subseteq 2^{X}\times 2^X$ be the equivalence relation defined as follows:\footnote{Hence, $e_\varnothing = \tau = W\times W$.}
\[e_Y = \{(w, u)\mid w_y = u_y \text{ for any } y\in  Y\}.\]
\end{definition}
Intuitively, $e_Y$ can be understood to represent  the question (interrogative agenda)  `what are the given candidate's scores on the parameters in $Y$?', because any two candidates have the same answer to this question iff their profiles are in the same $e_Y$-equivalence class.

Abusing notation, we  write $e_y$ in place of $e_{\{y\}}$ for any $y\in X$. Hence,   $e_y$  is the equivalence relation representing the question `what is the given candidate's score on parameter $y$?'. 
As we score on a binary scale, any such question corresponds to a binary partition of $W$. Therefore, by Proposition \ref{interrogative:prop:charact meet-irr},  $e_y$ is a meet-irreducible element of the lattice $E(W)$. 
In what follows, for any $y\in Y$, we refer to  $e_y$  as the {\em issue}\footnote{\label{interrogative:ftn:issues and parameters}In the present situation, in which total domination is the winning rule, issues (i.e.~meet-generators of the algebra of interrogative agendas) can be identified with individual parameters; however, in the next section, a deliberation scenario with 
 a different winning rule is discussed in which issues cannot be identified with individual parameters.} defined by $y$, and, given that parameters are pivotal to the whole deliberation process, rather than working on the whole of $E(W)$, we consider the complete sub-meet-semilattice $\mathbb{D}$ of $E(W)$ generated by $\{e_x\mid x\in X\}$  as the  `algebra of interrogative agendas'.\footnote{\label{footn: D is BA} It is straightforward to see that the assignment $Y\mapsto e_Y$ induces an order-isomorphism between  the poset $(\mathcal{P}(X), \supseteq)$ and $\mathbb{D}$, which is  hence a Boolean algebra.}

By definition, 
$e_Y = \bigsqcap_{y\in Y}e_y$ for any $Y\subseteq X$,  
and $e_{Y}$ is the kernel of the  projection $\pi_Y: 2^X\to 2^Y$. Hence, in our example, $e_r$, $e_p$, and $e_l$ 
  are the kernels of the projections of the  profiles on the first, second, and third coordinates, respectively.
Then, Alan and Betty's agendas are   $e_\aga = e_p \rand e_r$, and  $e_\agb = e_l \rand e_r$, respectively, and are represented   in Figure \ref{interrogative:fig:agendas}\footnote{The way the story is written does not actually exclude the possibility that in fact Betty's interrogative agenda is  the one referred to as $e_\aga\rand e_\agb$ in Figure \ref{interrogative:fig:agendas}. However, representing Betty's interrogative agenda as we did is consistent with the subsequent analysis in \cite[Section 3.2]{Baltag2018}, where Betty does not score the philosophy parameter, which we translate by not including the philosophy issue in Betty's interrogative agenda. Notice, however, that even including the philosophy issue in Betty's interrogative agenda, our conclusions would stand: namely, Betty's initial agenda would not yield an unequivocal decision, and neither would taking the meet of Alan and Betty's interrogative agendas.}.

\begin{figure}[H]
\centering
\begin{tikzpicture}
\draw (0,0) circle (1cm);
\draw (0,1) -- (0,-1);
\draw (-0.7,-0.7) -- (0.7,0.7);

\draw (3,0) circle (1cm);
\draw (3,1) -- (3,-1);
\draw (2.3,0.7) -- (3.7,-0.7);

\draw (6,0) circle (1cm);
\draw (6,1) -- (6,-1);
\draw (5.3,-0.7) -- (6.7,0.7);
\draw (5.3,0.7) -- (6.7,-0.7);

\draw (9,0) circle (1cm);
\draw (9,1) -- (9,-1);

\draw (12,0) circle (1cm);
\draw (11.3,-0.7) -- (12.7,0.7);
\draw (11.3,0.7) -- (12.7,-0.7);
\draw (0, -1.25) node {$e_\aga$};
\draw (3, -1.25) node {$e_\agb$};
\draw (6, -1.25) node {$e_\aga\rand e_\agb$};
\draw (9, -1.25) node {$e_\aga\ror e_\agb$};
\draw (12, -1.25) node {$e'_\aga\rand e'_\agb$};
\end{tikzpicture}
\captionof{figure}{Representations of agendas}\label{interrogative:fig:agendas}
\end{figure}
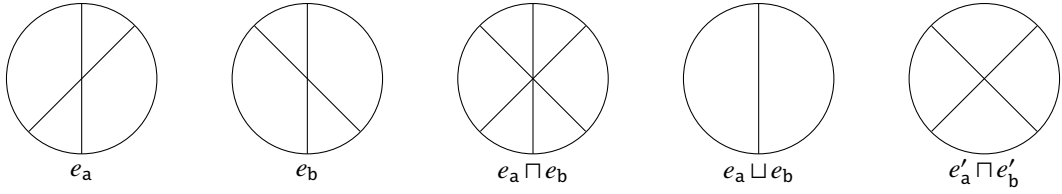

In the following picture, the  partitions of the poset $P$ of  profiles induced by Alan's and Betty's agendas  are represented (red for Alan and blue for Betty), as well as  their projections: 
\begin{center}
\begin{tikzpicture}
\draw[very thick] (-1, 0) -- (-1, 1) --
	(0, 0) -- (1, 1) -- (1, 0) -- (0, 1) -- (-1, 0);
	\draw[very thick] (0, 2) -- (-1, 1);
\draw[very thick] (0, 2) -- (0, 1);
\draw[very thick] (0, 2) -- (1, 1);
	\draw[very thick] (0, -1) -- (-1, 0);
\draw[very thick] (0, -1) -- (0, 0);
\draw[very thick] (0, -1) -- (1, 0);
	\filldraw[black] (0,-1) circle (2 pt);
	\filldraw[black] (0, 2) circle (2 pt);
    \filldraw[black] (-1, 1) circle (2 pt);
	\filldraw[black] (1, 1) circle (2 pt);
	\filldraw[black] (0, 1) circle (2 pt);
	\filldraw[black] (-1, 0) circle (2 pt);
	\filldraw[black] (1, 0) circle (2 pt);
	\filldraw[black] (0, 0) circle (2 pt);
	
     \draw (1.5, 1) node {{\small{Mary}}};
      \draw (-1.5, 1) node {{\small{John}}};

              \draw[rotate around={ -45: (0.5, 0.5)},dashed, thick, red](0.5, 0.5) ellipse (9pt and 24pt);
              \draw[rotate around={ -45: (-0.5, 1.5)},dashed, thick, red](-0.5, 1.5) ellipse (9pt and 24pt);
              \draw[rotate around={ -45: (-0.5, 0.5)},dashed, thick, red](-0.5, 0.5) ellipse (9pt and 24pt);
              \draw[rotate around={ -45: (0.5, -0.5)},dashed, thick, red](0.5, -0.5) ellipse (9pt and 24pt);

               \draw[dashed, thick, blue](0, -0.5) ellipse (7pt and 20pt);
               \draw[dashed, thick, blue](0, 1.5) ellipse (7pt and 20pt);
               \draw[dashed, thick, blue](-1, 0.5) ellipse (7pt and 20pt);
               \draw[dashed, thick, blue](1, 0.5) ellipse (7pt and 20pt);

         \draw[very thick] (2, 1) -- (2, 2) -- (1, 3) -- (1, 2)-- (2,1);
         \filldraw[black] (2, 1) circle (2 pt);
         \filldraw[black] (2, 2) circle (2 pt);
         \filldraw[black] (1, 3) circle (2 pt);
         \filldraw[black] (1, 2) circle (2 pt);
         \draw (1, 3.2) node {{\small{John}}};
         \draw (2.5, 2) node {{\small{Mary}}};
         \draw[dotted] (0, 2) -- (1,3);
         \draw[dotted] (1, 1) -- (2,2);
         \draw[dotted] (1, 0) -- (2,1);
         \draw[dotted] (0, 1) -- (1,2);

         \draw[very thick] (0, -2) -- (1, -3) -- (0, -4) -- (-1, -3)-- (0,-2);
         \filldraw[black] (0, -2) circle (2 pt);
         \filldraw[black] (0, -4) circle (2 pt);
         \filldraw[black] (1, -3) circle (2 pt);
         \filldraw[black] (-1, -3) circle (2 pt);
         \draw[dotted] (0, 1) -- (0,-4);
         \draw[dotted] (-1, 0) -- (-1,-3);
         \draw[dotted] (1, 0) -- (1,-3);
         \draw (1.5, -3) node {{\small{Mary}}};
      \draw (-1.5, -3) node {{\small{John}}};
\end{tikzpicture}
\end{center}
Notice that in the quotient space obtained by taking the projection of $W$ according to Alan's interrogative agenda, the point corresponding to John is above the point corresponding to Mary in the quotient order, whereas in the quotient space arising from Betty's agenda, John and Mary remain incomparable. This fact encodes the information that indeed Alan's agenda yields a decision, whereas Betty's does not.

\begin{definition}
  \label{interrogative:def:quotient space and order}  

For any  agenda $e_Y$,
the  {\em  quotient space} of $P = (W, \leq)$ associated with $e_Y$ is  the poset $P_Y = (W_Y, \leq_Y $), where $W_Y$ is the projection of $W$ onto the coordinates corresponding to the parameters in $Y$, and for any $w, u \in W_Y$, $w \leq_Y u$ iff $u$ dominates $w$ on each parameter in $Y$.\footnote{\label{interrogative:ftn:leqY on W}In fact, an order $\leq_Y$ can be defined not only on $W_Y$ but also on $W$, by stipulating that, for all $u, w\in W$, $w\leq_Y u$ iff $w_y\leq u_y$ for all $y\in Y$. The two versions of $\leq_Y$ can be identified, since  $w\leq_Y u$ iff $\pi_Y(w)\leq_Y\pi_Y(u)$ for all $u, w\in W$.}  For all $w, u\in 2^X$ and any $Y\subseteq X$,  
we say that $e_Y$  {\em prefers $u$ to $w$} if $\pi_{Y}(w) \leq_Y \pi_{Y}(u)$ and $\pi_{Y}(w) \neq \pi_{Y}(u)$. More in general, we say that $e_Y$ {\em leads to a decision} between $u$ and $w$ whenever either $\pi_{Y}(w) \leq_Y \pi_{Y}(u)$ or $\pi_{Y}(u) \leq_Y \pi_{Y}(w)$.
\end{definition}
The following lemma shows that Definition \ref{interrogative:def:e prefers} simplifies to the definition given above when $e: = e_Y$, and describes $\leq_Y$ in terms of the relation $e_Y$ and the order relation of the poset $P$. 

\begin{lemma}
\label{interrogative:lemma:preference}
For all $w, u\in W$ and any $Y\subseteq Y'\subseteq X$,  
\begin{enumerate}
\item $e_{\leq_Y} = e_Y$ and ${\leq_{e_Y}}  ={\leq_Y}$;
    \item $e_{Y}\circ {\leq_{Y'}}\circ e_Y = {\leq_{Y}}$. Hence, for $Y': = X$,  $ {\leq_Y} =  e_Y \, \circ  {\leq}  \circ\, e_Y$.
\item $e_{Y'}$ is  semi-compatible with $\leq_Y$.
\end{enumerate}
\end{lemma}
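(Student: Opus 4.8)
The plan is to reduce every clause to coordinatewise checks over $Y$ (resp.\ $Y'$), exploiting that $e_Y$ constrains only the coordinates indexed by $Y$ while leaving those outside $Y$ free to be chosen. First I would note that each of $\leq_Y$ and $\leq_{Y'}$ is a preorder on $W$, since it is defined coordinatewise from $\leq$ on $\{0,1\}$. For item 1 I would dispatch $e_{\leq_Y}=e_Y$ by unfolding both sides: $(w,u)\in e_{\leq_Y}$ means $w\leq_Y u$ and $u\leq_Y w$, i.e.\ $w_y\leq u_y$ and $u_y\leq w_y$ for all $y\in Y$, which is exactly $w_y=u_y$ for all $y\in Y$, i.e.\ $(w,u)\in e_Y$. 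For $\leq_{e_Y}=\leq_Y$ (where the ambient order in the definition of $\leq_{e_Y}$ is total dominance $\leq$) I would prove two inclusions. The inclusion $\leq_{e_Y}\subseteq \leq_Y$ follows by instantiating $w':=w$ in the defining clause of $\leq_{e_Y}$: this yields some $u'\in[u]_{e_Y}$ with $w\leq u'$, so for $y\in Y$ we get $w_y\leq u'_y=u_y$. The converse is where the key construction appears: given $w\leq_Y u$ and an arbitrary $w'\in[w]_{e_Y}$, I define $u'$ to agree with $u$ on $Y$ and to equal $1$ on every coordinate outside $Y$; then $u'\in[u]_{e_Y}$, and $w'\leq u'$ holds because $w'_y=w_y\leq u_y=u'_y$ for $y\in Y$ while $w'_x\leq 1=u'_x$ otherwise.

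Item 2 uses the same two ideas. For $e_Y\circ {\leq_{Y'}}\circ e_Y\subseteq {\leq_Y}$ I would chase a witnessing pair $w',u'$: from $(w,w')\in e_Y$, $w'\leq_{Y'}u'$, $(u',u)\in e_Y$ together with $Y\subseteq Y'$, I obtain, for each $y\in Y$, $w_y=w'_y\leq u'_y=u_y$. For the reverse inclusion I reuse the free-coordinate trick, now taking $w'$ to agree with $w$ on $Y$ and be $0$ off $Y$, and $u'$ to agree with $u$ on $Y$ and be $1$ off $Y$, so that $w'\leq_{Y'}u'$ holds on all of $Y'$ while both $e_Y$-memberships are immediate. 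Specializing to $Y':=X$ yields the displayed identity ${\leq_Y}=e_Y\circ{\leq}\circ e_Y$, since $\leq_X={\leq}$. Item 3 is the same bookkeeping with the indices interchanged: chasing $(w,w')\in e_{Y'}$, $w'\leq_Y u'$, $(u',u)\in e_{Y'}$ and using $Y\subseteq Y'$ gives $w_y=w'_y\leq u'_y=u_y$ for every $y\in Y$, hence $e_{Y'}\circ{\leq_Y}\circ e_{Y'}\subseteq {\leq_Y}$, which is the defining inclusion for (semi-)compatibility of $e_{Y'}$ with $\leq_Y$.

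The only genuinely delicate points are twofold. First, one must keep careful track of the direction of containment: $Y\subseteq Y'$ yields ${\leq_{Y'}}\subseteq {\leq_Y}$ and $e_{Y'}\subseteq e_Y$, so in each chase one must use precisely the equality/inequality that is actually available. Second, the witness constructions in the backward inclusions of items 1 and 2 rely on pushing the off-$Y$ coordinates to the top ($1$) or bottom ($0$) of the scale so that dominance there becomes automatic; this is exactly where the structure of the score scale is used, and it is the step most worth spelling out. I would also flag, to preempt confusion with the earlier definitions, that $e_{Y'}$ is only (semi-)compatible, not \emph{strongly} compatible, with $\leq_Y$ when $Y\subsetneq Y'$: since $e_{\leq_Y}=e_Y\not\subseteq e_{Y'}$ in that case, the additional inclusion $e_{\leq_Y}\subseteq e_{Y'}$ required for strong compatibility fails.
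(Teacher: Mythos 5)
Your proof is correct, and for items 2 and 3, as well as the first identity of item 1, it is essentially the paper's own argument: the same coordinatewise chases for the forward inclusions, and the same padding trick for the backward inclusion of item 2 (the paper sets both witnesses $w'$ and $u'$ to $0$ off $Y$, you set $w'$ to $0$ and $u'$ to $1$; both choices work, and both survive the later many-valued generalizations since each $\mathbb{L}_x$ is bounded). The one genuine divergence is the second identity of item 1. The paper obtains ${\leq_{e_Y}} = {\leq_Y}$ by citing Lemma~\ref{interrogative:lem:preorder-equivalence back-forth}.1 applied to the preorder $\leq_Y$, which implicitly computes the quotient preorder $\leq_{e_Y}$ relative to the base preorder $\leq_Y$ itself; you instead prove the identity directly, reading $\leq_{e_Y}$ relative to the ambient total-dominance order $\leq$, via the witness construction padding $u'$ with $1$ off $Y$. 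Your route costs a few extra lines but is self-contained and settles a real ambiguity in the notation: that the two readings of $\leq_{e_Y}$ (base $\leq_Y$ versus base $\leq$) yield the same relation is precisely a consequence of your direct argument, not something the citation by itself delivers. Finally, your closing remark is a correct and worthwhile addition: for $Y\subsetneq Y'$ the relation $e_{Y'}$ satisfies the compatibility inclusion $e_{Y'}\circ{\leq_Y}\circ e_{Y'}\subseteq{\leq_Y}$ but not strong compatibility, since $e_{\leq_Y}=e_Y\not\subseteq e_{Y'}$, which explains why item 3 claims only (semi-)compatibility.
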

\begin{proof}
1. By definition,  $(w, u)\in e_{\leq_Y}$ iff $w\leq_Y u$ and $u\leq_Y w$, iff $w_y = u_y$ for every $y\in Y$, iff $(w, u)\in e_{Y}$.  Hence,  $e_{\leq_Y} = e_Y$, which implies, by Lemma \ref{interrogative:lem:preorder-equivalence back-forth}.1, ${\leq_Y} = {\leq_{e_{\leq_Y}} }= {\leq_{e_Y}}$.

2. Let $ w, w', u, u'\in W$ s.t.~$(w, w')\in e_{Y}$, $w'\leq_{Y'} u'$ and $(u, u')\in e_{Y}$.  Then $w_y = w'_y$ and $u'_y = u_y$ for every $y\in Y$, and $w'_y\leq u'_y$ for every $y\in Y'$. Hence, $w_y\leq u_y$ for every $y\in Y$, which shows $e_{Y}\circ {\leq_{Y'}}\circ e_{Y}\subseteq {\leq_Y}$.

Conversely, if $w \leq_Y u$, then let $w'$ and $u'$ be the profiles such that $w'_y=w_y$ and $u'_y=u_y$ for all $y \in Y$, and $w'_x=0 = u'_x$ for $x \in X\setminus Y$. By construction,  $(w, w')\in e_Y $, $w'\leq u'$ and $(u', u)\in  e_Y$, 
witnessing   
$w (e_Y\, \circ \leq \circ\, e_Y) u$, as required.


3. Let $ w, w', u, u'\in W$ s.t.~$(w, w')\in e_{Y'}$, $w'\leq_Y u'$ and $(u, u')\in e_{Y'}$.  Then $w_y = w'_y$ and $u'_y = u_y$ for every $y\in Y'$, and $w'_y\leq u'_y$ for every $y\in Y$. Hence, $w_y\leq u_y$ for every $y\in Y$, which shows $e_{Y'}\circ {\leq_Y}\circ e_{Y'}\subseteq {\leq_Y}$. 
\end{proof}

\begin{remark}
   The  identities of the lemma above imply that corresponding identities  $\langle\leq_Y\rangle U= \langle e_Y \rangle \langle  \leq_{Y'}  \rangle \langle e_Y \rangle U $ and $\langle e_{Y'} \rangle \langle  \leq_Y  \rangle \langle e_{Y'} \rangle U \subseteq \langle\leq_{Y}\rangle U$ hold as well for every $U\subseteq W$ and all $Y\subseteq Y'\subseteq X$, where, for any  $R\subseteq W\times W$, the symbol $\langle R \rangle$ denotes the standard diamond operator associated  with  $R$ (cf.~Section \ref{interrogative:ssec:modal logic}).  These observations will play out in future work, when further expanding the present framework along the lines of the standard modal logic approach outlined in Section \ref{interrogative:ssec:modal logic}. 
\end{remark}


Notice that $e_{\aga}\rand e_{\agb}$, i.e.~the  interrogative agenda which includes all the parameters considered relevant by at least one agent, does  {\em not} yield  a decision,  given that each candidate does better than the other on one of the relevant parameters. 
In general, 
\begin{remark}
\label{interrogative:rem:distributed unanimity}
For any $\mathcal{X}\subseteq \mathcal{P}(X)$, the agenda $\bigsqcap\{e_Y\mid Y\in \mathcal{X}\} = e_{\bigcup\mathcal{X}}$ leads to a decision 
iff every $e_Y$, for each $Y\in\mathcal{X}$,   leads to the same decision. 
Thus, if the winning rule is total dominance, the attempt at generating the coalition agenda by including all the parameters considered relevant by some  agent will not lead to a decision 
in any non-trivial deliberation scenario. 

\end{remark}

However, each of the two remaining agendas in Figure \ref{interrogative:fig:agendas}  leads to a decision. Agenda $e_{\aga}\ror e_{\agb}$ leads to choosing John over Mary, and $e'_{\aga}\rand e'_{\agb}$ leads to choosing Mary over John. While $e_{\aga}\ror e_{\agb}$ can be accounted for simply by  taking the join of $e_{\aga}$ and $e_{\agb}$,  reaching $e'_{\aga}\rand e'_{\agb}$ from the initial positions requires some more sophistication than just taking the join or the meet of the original agendas.  In what follows, we describe a process that  leads  to a span of agendas which includes  such an agenda.

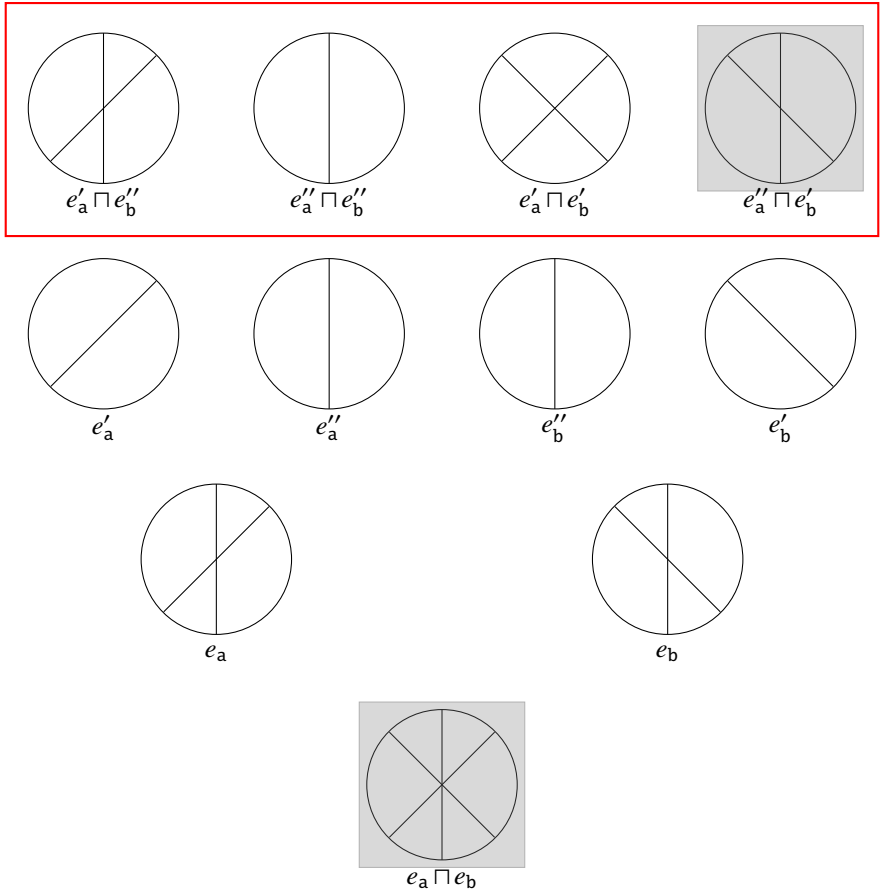
\begin{figure}[h]
    \centering
\begin{tikzpicture}
\draw[thick, red] (-1.3,-1.7) rectangle (10.3,1.4);
\draw (0,0) circle (1cm);
\draw (0,1) -- (0,-1);
\draw (-0.7,-0.7) -- (0.7,0.7);
\draw (0, -1.25) node {$e'_\aga\rand e''_\agb$};

\draw (3,0) circle (1cm);
\draw (3,1) -- (3,-1);
\draw (3, -1.25) node {$e''_\aga\rand e''_\agb$};

\draw (6,0) circle (1cm);
\draw (5.3,-0.7) -- (6.7,0.7);
\draw (5.3,0.7) -- (6.7,-0.7);
\draw (6, -1.25) node {$e'_\aga\rand e'_\agb$};

\draw (9,0) circle (1cm);
\draw (9,1) -- (9,-1);
\draw (9.7,-0.7) -- (8.3,0.7);
\draw[fill=gray, opacity=0.3] (7.9,-1.1) rectangle (10.1,1.1);
\draw (9, -1.25) node {$e''_\aga\rand e'_\agb$};


\draw (0,-3) circle (1cm);
\draw (-0.7,-3.7) -- (0.7,-2.3);
\draw (0, -4.25) node {$e'_\aga$};

\draw (3,-3) circle (1cm);
\draw (3,-2) -- (3,-4);
\draw (3, -4.25) node {$e''_\aga$};

\draw (6,-3) circle (1cm);
\draw (6,-2) -- (6,-4);
\draw (6, -4.25) node {$e''_\agb$};

\draw (9,-3) circle (1cm);
\draw (9.7,-3.7) -- (8.3,-2.3);
\draw (9, -4.25) node {$e'_\agb$};


\draw (1.5,-6) circle (1cm);
\draw (1.5,-7) -- (1.5,-5);
\draw (0.8,-6.7) -- (2.2,-5.3);
\draw (1.5, -7.25) node {$e_\aga$};

\draw (7.5,-6) circle (1cm);
\draw (7.5,-7) -- (7.5,-5);
\draw (8.2,-6.7) -- (6.8,-5.3);
\draw (7.5, -7.25) node {$e_\agb$};


\draw (4.5,-9) circle (1cm);
\draw (4.5,-8) -- (4.5,-10);
\draw (3.8,-8.3) -- (5.2,-9.7);
\draw (5.2,-8.3) -- (3.8,-9.7);
\draw[fill=gray, opacity=0.3] (3.4,-10.1) rectangle (5.6,-7.9);
\draw (4.5, -10.25) node {$e_\aga\rand e_\agb$};
\end{tikzpicture}
\caption{Possible coalition agendas in $C$}
    \label{interrogative:fig:coarsening}
\end{figure}

For any set of parameters $Y$, we define the set $\mathsf{Crs}_1(e_Y)$ (the {\em coarsenings} of $e_Y$)  to be the set of agendas of the form $e_Z$, where $Z$ is obtained by removing one parameter from $Y$.  In this case study,  $\mathsf{Crs}_1(e_\aga) = \{e'_\aga, e''_\aga\}$ and  $\mathsf{Crs}_1(e_\agb) = \{e'_\agb, e''_\agb\}$.
Then we consider the set $C$ of all the agendas of the form $\bigsqcap_{\mathtt{i}\neq \mathtt{j}}e'_{\mathtt{i}\mathtt{j}}$ where $e'_{\mathtt{i}\mathtt{j}}\in \mathsf{Crs}_1(e_{\mathtt{j}})$ for each $\mathtt{j}$ and $\mathtt{i}$ such that $\mathtt{j}\neq \mathtt{i}$. Intuitively,  this is understood as the process in which each agent $\mathtt{i}$ chooses an element in  $\mathsf{Crs}_1(e_\texttt{j})$ 
for every $\mathtt{j}\neq \mathtt{i}$, thereby vetoing one issue from the agenda of each other agent (as to the case study, the elements of $C$ are those in the uppermost row of the picture above), and then considering all the parameters in the resulting agendas. The  deliberation will then restrict itself to one of the elements of $C$, provided some yield a decision; if none of them  does, the process is repeated  starting from  the coarsenings in which exactly  two parameters are removed, and so on.

All agendas in $C$ for this case study are illustrated in Figure \ref{interrogative:fig:coarsening}. All these agendas except $e''_\aga\rand e'_\agb$ lead to a decision. The agenda $e'_\aga\rand e''_\agb$ coincides with Alan's initial agenda, and reflects a situation in which Betty accepts Alan's input while Alan rejects Betty's input; the agenda $e''_\aga\rand e''_\agb$ coincides with  $e_\aga\ror e_\agb$, and  reflects a situation in which each agent rejects the other agent's input, and each of them chooses the coarsening of the other agent's agenda that is most similar to one's own; the agenda $e'_\aga\rand e'_\agb$   reflects a situation in which each agent accepts the other agent's input, and each of them chooses the coarsening of the other agent's agenda that is most different to one's own, i.e.~those issues which he/she would {\em not} independently consider relevant. 

\begin{figure}
\begin{center}
\begin{tikzpicture}[scale=0.6]

\draw (3,3) circle (1cm);

\draw[very thick] (3,1.2) -- (3,1.8);
\draw[very thick] (2.1,2.1) -- (0.9,0.9);
\draw[very thick] (3.9,2.1) -- (5.1,0.9);

\draw (0,0) circle (1cm);
\draw (-0.7,-0.7) -- (0.7,0.7);

\draw (3,0) circle (1cm);
\draw (3,1) -- (3,-1);

\draw[thick, red] (1.9,-1.1) rectangle (4.1,1.1);

\draw (6,0) circle (1cm);
\draw (5.3,0.7) -- (6.7,-0.7);

\draw (10.5, 0.6) node {{\small{References}}};
\draw (8, 0.1) node {{\small{John}}};
\draw (8, -0.6) node {{\small{Mary}}};
\draw (10.5, 0.1) node {{\small{1}}};
\draw (10.5, -0.6) node {{\small{0}}};
\draw (7.3,1) -- (12.1,1);
\draw (7.3,1) -- (7.3,-0.9);
\draw (7.3,-0.9) -- (12.1,-0.9);
\draw (9,1) -- (9,-0.9);
\draw (12.1,1) -- (12.1,-0.9);
\draw (7.3,-0.25) -- (12.1,-0.25);
\draw (7.3,0.35) -- (12.1,0.35);
\filldraw [red, opacity = 0.2] (7.3, -0.25) rectangle (12.1, 0.35);

\draw[very thick] (0,-1.8) -- (0,-1.2);
\draw[very thick] (6,-1.8) -- (6,-1.2);

\draw[very thick] (2.1,-2.1) -- (0.9,-0.9);
\draw[very thick] (3.9,-2.1) -- (5.1,-0.9);
\draw[very thick] (0.9,-2.1) -- (2.1,-0.9);
\draw[very thick] (5.1,-2.1) -- (3.9,-0.9);

\draw (0,-3) circle (1cm);
\draw (0,-2) -- (0,-4);
\draw (-0.7,-3.7) -- (0.7,-2.3);
\draw (0, -4.5) node {$e_\aga$};

\draw (3,-3) circle (1cm);
\draw (2.3,-3.7) -- (3.7,-2.3);
\draw (3.7,-3.7) -- (2.3,-2.3);

\draw[fill=yellow, opacity=0.3] (1.9,-4.1) rectangle (4.1,-1.9);
\draw[very thick, yellow] (1.9,-4.1) rectangle (4.1,-1.9);

\draw (6,-3) circle (1cm);
\draw (6,-2) -- (6,-4);
\draw (6.7,-3.7) -- (5.3,-2.3);
\draw (6, -4.5) node {$e_\agb$};

\draw[very thick] (2.1,-5.1) -- (0.9,-3.9);
\draw[very thick] (3.9,-5.1) -- (5.1,-3.9);
\draw[very thick] (3,-4.8) -- (3,-4.2);

\filldraw [yellow] (7.3, -3.9) rectangle (13.7, -3.25);
\draw (10.5, -2.4) node {{\small{Philosophy}}};
\draw (13, -2.4) node {{\small{Logic}}};
\draw (8, -2.9) node {{\small{John}}};
\draw (8, -3.6) node {{\small{Mary}}};
\draw (10.5, -2.9) node {{\small{1}}};
\draw (10.5, -3.6) node {{\small{1}}};
\draw (13, -2.9) node {{\small{0}}};
\draw (13, -3.6) node {{\small{1}}};
\draw (7.3,-2) -- (13.7,-2);
\draw (7.3,-2) -- (7.3,-3.9);
\draw (7.3,-3.9) -- (13.7,-3.9);
\draw (13.7,-2) -- (13.7,-3.9);
\draw (9,-2) -- (9,-3.9);
\draw (12.1,-2) -- (12.1,-3.9);
\draw (7.3,-3.25) -- (13.7,-3.25);
\draw (7.3,-2.6) -- (13.7,-2.6);
\filldraw [yellow, opacity = 0.2] (7.3, -3.9) rectangle (13.7, -3.25);


\draw (3,-6) circle (1cm);
\draw (3,-5) -- (3,-7);
\draw (2.3,-5.3) -- (3.7,-6.7);
\draw (3.7,-5.3) -- (2.3,-6.7);

\draw[fill=gray, opacity=0.3] (1.9,-7.1) rectangle (4.1,-4.9);

\draw (7.5, -5.3) node {{\small{References}}};
\draw (10.5, -5.4) node {{\small{Philosophy}}};
\draw (13, -5.4) node {{\small{Logic}}};
\draw (5.5, -5.9) node {{\small{John}}};
\draw (5.5, -6.6) node {{\small{Mary}}};
\draw (7.5, -5.9) node {{\small{1}}};
\draw (7.5, -6.6) node {{\small{0}}};
\draw (10.5, -5.9) node {{\small{1}}};
\draw (10.5, -6.6) node {{\small{1}}};
\draw (13, -5.9) node {{\small{0}}};
\draw (13, -6.6) node {{\small{1}}};
\draw (4.9,-5) -- (13.7,-5);
\draw (4.9,-5) -- (4.9,-6.9);
\draw (4.9,-6.9) -- (13.7,-6.9);
\draw (13.7,-5) -- (13.7,-6.9);
\draw (4.9,-5.6) -- (13.7,-5.6);
\draw (4.9,-6.25) -- (13.7,-6.25);
\draw (6.2,-5) -- (6.2,-6.9);
\draw (9,-5) -- (9,-6.9);
\draw (12.1,-5) -- (12.1,-6.9);
\draw[fill=gray, opacity=0.3] (6.2,-6.9) rectangle (9,-5.6);
\draw[fill=gray, opacity=0.3] (12.1,-6.9) rectangle (13.7, -5.6);
\end{tikzpicture}
\end{center}
\caption{The algebra $\mathbb{D}$ of interrogative agendas, and their associated decisions (or lack thereof).}
\end{figure}
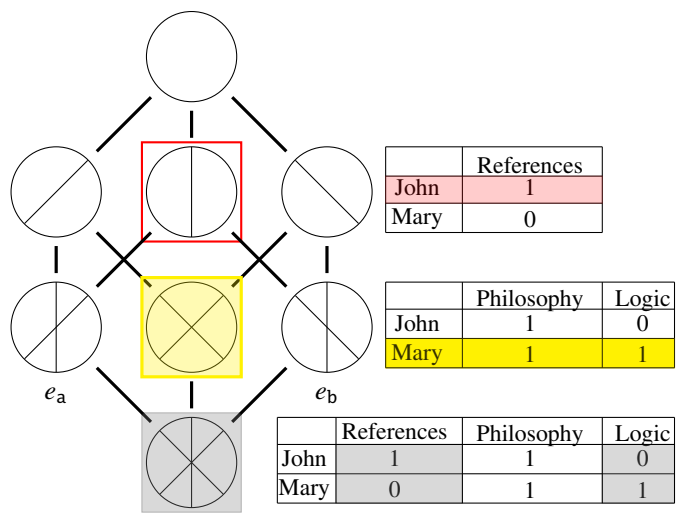


Summing up, starting with $n = |X|$ parameters, each scored in $\{0, 1\}$, and assuming the total dominance winning rule, the poset $P = (W, \leq)$ based on the feature space $W$ is the $n$-dimensional cube, i.e.~the Boolean algebra with $n$ atoms, and the interrogative agendas $e_Y$ 
defined by choosing a given  subset $Y\subseteq X$ of  parameters as relevant correspond  bijectively to kernels of projections over that subset of parameters (the identity relation and the total relation are the kernels of the projections over all parameters and over the empty set of  parameters, respectively). Hence,  the algebra $\mathbb{D}$ of such interrogative agendas is a Boolean algebra, and, being meet-generated by  $\{e_y\mid y\in X\}$, it
is isomorphic to $P$ with the reverse order.  

\paragraph{Total dominance  and many-valued linearly ordered scores.}
In the discussion above, we assumed that  parameters are scored on a binary scale. However, in many situations, it is desirable to score parameters  over a many-valued scale. For example, we may want to evaluate candidates based on how strong their reference letters are, or what their h-index is, not just whether they have one or not. Accordingly, the previous setting can be generalized to one in which every parameter $x\in X$ is associated with a linearly ordered range of values $\mathbb{L}_x = (L_x, \leq_x)$ which has a maximal element $1_x$ and a minimal element $0_x$. Note that each $\mathbb{L}_x$ is a bounded distributive lattice. Consequently, the set of  profiles\footnote{In Figure \ref{fig:linear mv}, we assume that John and Many are scored according to the following table:

\begin{tabular}{|r|c|c|c|}
\hline
& References & Philosophy& Logic\\
\hline
John & 1 &1& 0\\
\hline
Mary &0&1/2&1\\
\hline
\end{tabular}
} becomes $W := \prod_{x\in X}\mathbb{L}_x$, and $P = (W, \leq)$ is a bounded distributive lattice. Notice also that the statement and proof of Lemma \ref{interrogative:lemma:preference} hold verbatim also if $W := \prod_{x\in X}\mathbb{L}_x$; hence, assuming again that total dominance be the winning rule,  the preference order associated with each interrogative agenda $e_Y$ simplifies to the  order $\leq_Y$ on the quotient space (cf.~Definition \ref{interrogative:def:quotient space and order}).

\usetikzlibrary{3d, positioning}

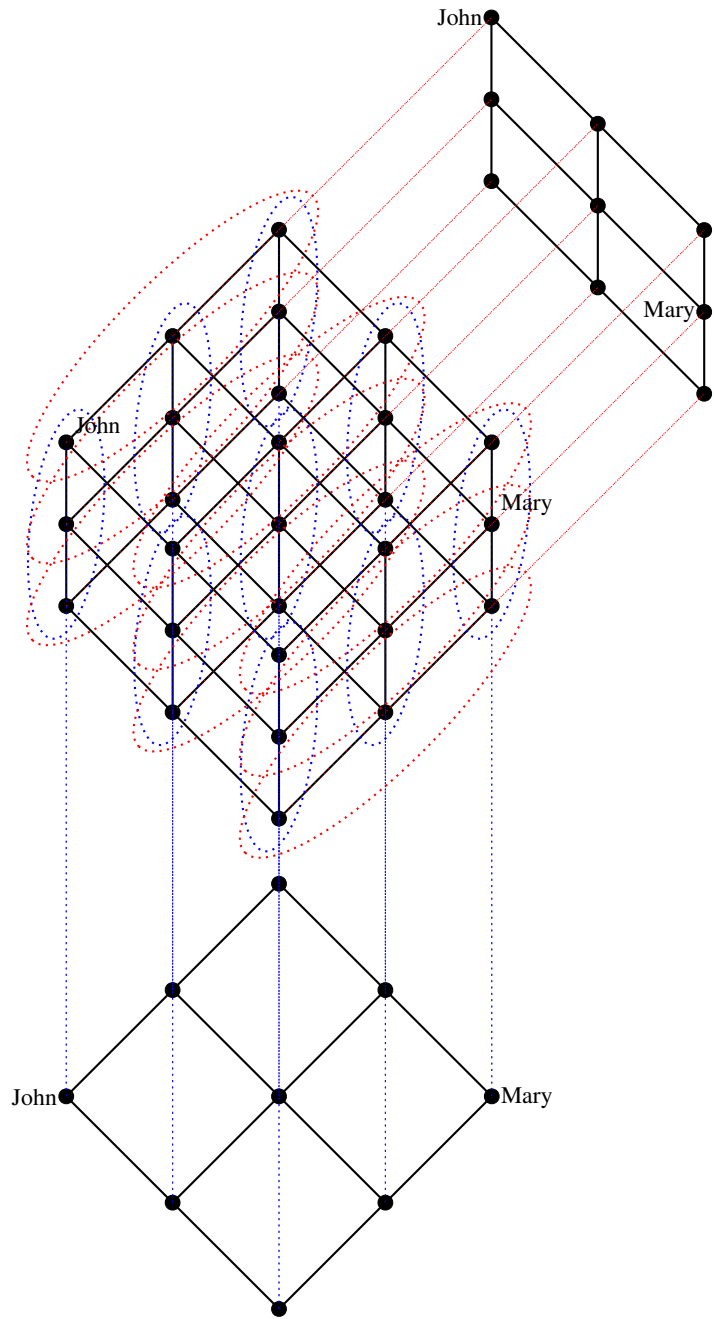
\begin{figure}
\begin{center}
    \begin{tikzpicture}[scale=4]
  \begin{scope} [rotate =225]

\def\step{0.5}

\foreach \x in {0,1,2}
  \foreach \y in {0,1,2}
    \foreach \z in {0,1,2} {
      \filldraw[black] (\x*\step,\y*\step,\z*\step) circle (0.7pt);
    }

\foreach \y in {0,1,2}
  \foreach \z in {0,1,2} {
    \draw[black, thick] (0,\y*\step,\z*\step) -- (2*\step,\y*\step,\z*\step);
  }
\foreach \x in {0,1,2}
  \foreach \z in {0,1,2} {
    \draw[black, thick] (\x*\step,0,\z*\step) -- (\x*\step,2*\step,\z*\step);
  }
\foreach \x in {0,1,2}
  \foreach \y in {0,1,2} {
    \draw[black, thick] (\x*\step,\y*\step,0) -- (\x*\step,\y*\step,2*\step);
  }

\coordinate (XYorigin) at (0,0,-3);

\foreach \x in {0,1,2} {
  \draw[black, thick] ($ (XYorigin) + (\x*\step,0,0) $) -- ++(0,2*\step,0);
}
\foreach \y in {0,1,2} {
  \draw[black, thick] ($ (XYorigin) + (0,\y*\step,0) $) -- ++(2*\step,0,0);
}

\foreach \x in {0,1,2}
  \foreach \y in {0,1,2} {
    \filldraw[black] ($ (XYorigin) + (\x*\step,\y*\step,0) $) circle (0.7pt);
  }

\foreach \x in {0,1,2}
  \foreach \y in {0,1,2}
    \foreach \z in {0,1,2} {
      \draw[blue, dotted, thin] (\x*\step,\y*\step,\z*\step) -- ($ (XYorigin) + (\x*\step,\y*\step,0) $);
    }

\coordinate (YZorigin) at (-1.0,0,0);

\foreach \y in {0,1,2} {
  \draw[black, thick] ($ (YZorigin) + (0,\y*\step,0) $) -- ++(0,0,2*\step);
}
\foreach \z in {0,1,2} {
  \draw[black, thick] ($ (YZorigin) + (0,0,\z*\step) $) -- ++(0,2*\step,0);
}

\foreach \y in {0,1,2}
  \foreach \z in {0,1,2} {
    \filldraw[black] ($ (YZorigin) + (0,\y*\step,\z*\step) $) circle (0.7pt);
  }

\foreach \x in {0,1,2}
  \foreach \y in {0,1,2}
    \foreach \z in {0,1,2} {
      \draw[red, dotted, thin] (\x*\step,\y*\step,\z*\step) -- ($ (YZorigin) + (0,\y*\step,\z*\step) $);
    }

\foreach \x in {0,1,2} {
  \foreach \y in {0,1,2} {
    \pgfmathsetmacro{\cx}{\x*\step}
    \pgfmathsetmacro{\cy}{\y*\step}
    \pgfmathsetmacro{\cz}{\step} 
    \draw[blue, dotted, thick, rotate around y=0] 
      plot[variable=\t,domain=0:360,smooth,samples=50]
      ({\cx + 0.18*cos(\t)}, {\cy}, {\cz + 0.66*sin(\t)});
  }
}

\foreach \y in {0,1,2} {
  \foreach \z in {0,1,2} {
    \pgfmathsetmacro{\cx}{\step} 
    \pgfmathsetmacro{\cy}{\y*\step}
    \pgfmathsetmacro{\cz}{\z*\step}
    \draw[red, dotted, thick]
      plot[variable=\t,domain=0:360,smooth,samples=50]
      ({\cx + 0.66*cos(\t)}, {\cy + 0.18*sin(\t)}, {\cz});
  }
}

\node[above right, font=\small] at (0,1,0.5) {Mary};
\node[above right, font=\small] at (1,0, 1) {John};

\node[left, font=\small] at (1, 0,-3) {John};
\node[right,  font=\small] at (0,1,-3) {Mary};

\node[left, font=\small] at (-1,1,0.5) {Mary};
\node[left, font=\small] at (-1.0,0,1) {John};

\end{scope}
\end{tikzpicture}
\end{center}
\caption{Profile space $P = (W\leq)$ when  $r, l, p$ are evaluated on a three-element chain, and its projections induced by Alan's and Betty's agendas.}
\label{fig:linear mv}
\end{figure}

%
 Notice that, for any $y\in X$,  the issue $e_y$  gives rise to a partition of $W$ into as many equivalence classes as there are elements in $L_x$, and hence, unlike the case of binary scores, $e_y$ is not necessarily a meet-irreducible element of $E(W)$. However, as in the previous case, we still have $e_Y=\bigsqcap_{y \in Y} e_y$ for every $Y\subseteq X$.  Thus,  the sublattice of $E(W)$ meet-generated by $\{e_x\mid x\in X\}$ is still isomorphic to the Boolean algebra 
 $(\mathcal{P}(X), \supseteq)$.


\paragraph{Total dominance and non-linearly ordered  scores.}
In a wide range of situations, it may be desirable that (some) parameters be  scored  on a partial order which is not linearly ordered (see, e.g.~\cite{nolt2022incomparable} for a book-length treatment of this issue). For example, continuing with the deliberation scenario discussed above, while we may want to rank reference letters 
 e.g.~on a $4$-valued scale $0$, $\frac{1}{3}$, $\frac{2}{3}$ and $1$, we might also want to accommodate the possibility that  some candidates miss the letters of reference. Hence,  
we can include a fifth value $u$ (for `\emph{unknown}') that we place strictly in between $0$ and $1$, since not having letters of reference is certainly worse  than having supportive letters and  better than having unambiguously negative ones, and   
yet, without any bases for placing in precisely between these two extreme values,  is also incomparable with all the other intermediate values.  
In this case, the set $W: = \Pi_{x\in X}\mathbb{L}_x$ of profiles becomes a general (non-distributive) lattice, as illustrated in Figure \ref{fig:non-linear W}, which represents the situation in which $p$ and $l$ are scored on the three-element chain, and $r$ is scored on the lattice $N_5$ described above.  The statement and proof of Lemma \ref{interrogative:lemma:preference} hold verbatim also under these assumptions; hence, with total dominance as the winning rule,  the preference order associated with each interrogative agenda $e_Y$ again simplifies to the  order $\leq_Y$ on the quotient space (cf.~Definition \ref{interrogative:def:quotient space and order}).
%
\begin{figure}[H]
\begin{center}
\begin{tikzpicture}[scale=0.8]

\newcommand{\drawRotatedGrid}[3]{%
  \begin{scope}[shift={(#1)}, rotate=45]
    \foreach \i in {0,1,2} {
      \foreach \j in {0,1,2} {
        \pgfmathsetmacro{\x}{\i}
        \pgfmathsetmacro{\y}{\j}
        \node[circle, draw, fill=black, inner sep=1.5pt] (p#2\i\j) at (\x, \y) {};
      }
    }
    \foreach \i in {0,1} {
      \foreach \j in {0,1,2} {
        \pgfmathtruncatemacro{\ip}{\i+1}
        \draw (p#2\i\j) -- (p#2\ip\j);
      }
    }
    \foreach \i in {0,1,2} {
      \foreach \j in {0,1} {
        \pgfmathtruncatemacro{\jp}{\j+1}
        \draw (p#2\i\j) -- (p#2\i\jp);
      }
    }
  \end{scope}
}

\drawRotatedGrid{(0,0)}{0}{}
\drawRotatedGrid{(-5.5,3)}{a}{}
\drawRotatedGrid{(-6.5,7)}{b}{}
\drawRotatedGrid{(2.7,5)}{u}{}
\drawRotatedGrid{(-2.5,9)}{1}{}


\node at (-8,9.4) {John};
\node at (3.2, 8.5) {Mary};

\newcommand{\connectMatchingPoints}[2]{%
  \foreach \i in {0,1,2} {
    \foreach \j in {0,1,2} {
      \draw[thick] (p#1\i\j) -- (p#2\i\j);
    }
  }
}

\connectMatchingPoints{0}{a}
\connectMatchingPoints{0}{u}
\connectMatchingPoints{a}{b}
\connectMatchingPoints{b}{1}
\connectMatchingPoints{u}{1}

\end{tikzpicture}

\begin{tabular}{ccc}
 \begin{tikzpicture}[scale=0.3]

\newcommand{\drawRotatedGrid}[3]{%
  \begin{scope}[shift={(#1)}, rotate=45]
    \foreach \i in {0,1,2} {
      \foreach \j in {0} {
        \pgfmathsetmacro{\x}{\i}
        \pgfmathsetmacro{\y}{\j}
        \node[circle, draw, fill=black, inner sep=1pt] (p#2\i\j) at (\x, \y) {};
      }
    }
    \foreach \i in {0,1} {
      \foreach \j in {0} {
        \pgfmathtruncatemacro{\ip}{\i+1}
        \draw (p#2\i\j) -- (p#2\ip\j);
      }
    }
    \foreach \i in {0,1,2} {
      \foreach \j in {0} {
        \pgfmathtruncatemacro{\jp}{\j+1}
      }
    }
  \end{scope}
}

\drawRotatedGrid{(0,0)}{0}{}
\drawRotatedGrid{(-5.5,3)}{a}{}
\drawRotatedGrid{(-6.5,7)}{b}{}
\drawRotatedGrid{(2.7,5)}{u}{}
\drawRotatedGrid{(-2.5,9)}{1}{}


\node at (-7,8.3) {John};
\node at (5, 7) {Mary};

\newcommand{\connectMatchingPoints}[2]{%
  \foreach \i in {0,1,2} {
    \foreach \j in {0} {
      \draw[thick] (p#1\i\j) -- (p#2\i\j);
    }
  }
}

\connectMatchingPoints{0}{a}
\connectMatchingPoints{0}{u}
\connectMatchingPoints{a}{b}
\connectMatchingPoints{b}{1}
\connectMatchingPoints{u}{1}

\end{tikzpicture}    &  

\begin{tikzpicture}[scale=0.3]
  \newcommand{\drawAngledGrid}[2]{%
    \begin{scope}[shift={(#1)}, rotate=135]
      \foreach \i in {0,1,2} {
        \node[circle, draw, fill=black, inner sep=1pt] (p#2\i) at (\i,0) {};
      }
      \foreach \i [evaluate=\i as \ip using int(\i+1)] in {0,1} {
        \draw (p#2\i) -- (p#2\ip);
      }
    \end{scope}
  }

  \drawAngledGrid{(0,0)}{0}
  \drawAngledGrid{(-5.5,3)}{a}
  \drawAngledGrid{(-6.5,7)}{b}
  \drawAngledGrid{(2.7,5)}{u}
  \drawAngledGrid{(-2.5,9)}{1}


  \node at (-7.9,9.4)        {John};
  \node at ( 3,6.8)        {Mary};

  \newcommand{\connectLayers}[2]{%
    \foreach \i in {0,1,2} {
      \draw[thick] (p#1\i) -- (p#2\i);
    }
  }

  \connectLayers{0}{a}
  \connectLayers{0}{u}
  \connectLayers{a}{b}
  \connectLayers{b}{1}
  \connectLayers{u}{1}
\end{tikzpicture}
     & 
     \begin{tikzpicture}[scale=2]
\begin{scope}[rotate=45]
    
\foreach \x in {0,0.5,1} {
  \foreach \y in {0,0.5,1} {
    \node[circle, draw, fill=black, inner sep=1pt] (p\x\y) at (\x,\y) {};
  }
}

\foreach \y in {0,0.5,1} {
  \draw (0,\y) -- (1,\y);
}

\foreach \x in {0,0.5,1} {
  \draw (\x,0) -- (\x,1);
}

\node[left] at (0.5,1) {John};
\node[above] at (1,1) {Mary};
\end{scope}

\end{tikzpicture}
\end{tabular}
\end{center}

\caption{Profile space $P = (W, \leq)$ when the set of values on which  $r$ is scored is shaped as the lattice $N_5$ and $p$ and $l$ are scored on the three-element chain, and its  projections associated with $e_r\rand e_l$, $e_r\rand e_p$ and $e_p\rand e_l$. Only the projections of $P$ associated with $e_p\rand e_l$ (and $e_l$) lead to a decision (which is the same decision, to prefer Mary over John). }\label{fig:non-linear W}
\end{figure}

%
If the candidates are scored according to the following table (see Figure \ref{fig:non-linear W}):

\begin{center}
\begin{tabular}{|r|c|c|c|}
\hline
& References & Philosophy& Logic\\
\hline
John & 2/3 &1& 1/2\\
\hline
Mary &$u$&1&1\\
\hline
\end{tabular}
\end{center}


\noindent as in the previous cases, $e_{\mathtt{a}} = e_r\rand e_p$ (resp.~$e_{\mathtt{b}} = e_r\rand e_l$) is the kernel of the projection onto the first and second  (resp.~first and  third) coordinate, while   $e_\aga \ror e_\agb = e_r$ is the kernel of the projection onto the first coordinate. As the values $u$ and $2/3$ are incomparable, John and Mary are incomparable according to each of these agendas, and more in general, for any $v, w\in W$ s.t.~$\pi_x(v)$ and $\pi_x(w)$ are incomparable for some $x\in X$, no projection $e_Y$ s.t.~$x\in Y$ can lead to a decision under the total domination winning rule. Thus, unlike the previous cases, neither $e_{\mathtt{a}}$ nor  $e_\aga \ror e_\agb$ lead to a decision. 

This points towards an important difference between linear and non-linear scales. As  $e_x$  is associated with the kernel of the projection onto the $x$-coordinate for any $x\in X$, if every  $x$ is scored on a linear scale, all such projections  are totally ordered. Thus, for any two distinct profiles,  a feature $x$ exists such that $e_x$
leads to a decision between them. However, the example above shows that this does not need to be the case under the present assumptions; i.e., 
it is possible that for no $x\in X$ issue $e_x$ leads to a decision. However, as in the previous cases, we still have $e_Y =\bigsqcap_{y \in Y}e_y$. Therefore, the algebra of interrogative agendas of the form $e_Y$ is still a Boolean algebra isomorphic to the Boolean algebra $\mathcal{P}(X), \supseteq)$. 

\section{A different winning rule: Choosing a car}\label{interrogative:sec:car}

The winning rule adopted in the deliberation scenario of the previous section 
was the `total dominance' rule, which could be represented as the product order on the set of profiles. However, in many real-life situations, different winning rules are considered, and in particular the one stipulating that, for any two profiles $u$ and $w$,   $u$ wins over   $w$ when {\em the sum} of the scores of $u$ in the relevant parameters is greater than the corresponding sum of the scores of $w$. In the present section, we discuss another deliberation  scenario in which this rule, which we refer to as the {\em sum rule}, is adopted.

Alan and Betty are a couple in the process of buying a second-hand car. Both are interested in fuel consumption; Alan is also interested in safety and parking-aid system, Betty in the trunk capacity, and maximum speed. At the car dealer,  two cars $C_1$ and $C_2$ are in their price range; their scores (on a scale $\{0,1\}$) w.r.t.~the parameters are reported in the following table:

{\small
\begin{center}
\begin{tabular}{|r| c| c |c| c|c|}
\hline
&\multirow{2}{4 em }{Safety} & Fuel & Parking-aid & Trunk & Max \\
& & Consumption &  System &  Capacity &  Speed\\
\hline
$C_1$ & 1 & 0 & 1 & 0& 1 \\
\hline
 $C_2$ & 0 & 1 & 0 & 1 & 0\\
 \hline
\end{tabular}
\end{center}
}
The couple will choose the car which scores better w.r.t.~most parameters they agree to be relevant. Let $X\coloneqq\{s,f,p,t,m\}$
be the set of  parameters, and
\[W: = \{(w_s, w_f, w_p, w_t,w_m)\mid w_x\in \{0, 1\} \mbox{ for all } x\in \{s, f, p, t,m\}\},\] 
be the space of profiles. The profiles corresponding to $C_1$ and $C_2$  are $(1,0,1,0,1)$ and $(0,1,0,1,0)$ respectively. For any set of parameters $Y \subseteq X$ and any $w\in W$, we let the {\em sum-score} of $w$ on  $Y$ be $w_Y\coloneqq\sum_{y \in Y} w_y$. 
Then each $Y\subseteq X$ induces  the  preordered set  $P_Y: = (W, \leq_Y)$ such that $w \leq_Y u$ iff $ w_Y\leq  u_Y$ for all $w, u \in W$.  
This preorder is not a partial order (let alone a lattice), since it is not antisymmetric: for instance, for $Y = X$,  $w = (1, 0, 0, 0,0)\neq  (0, 1, 0,0,0) = u$; however, $w_X = 1 = u_X$. For every $Y\subseteq X$, we let $e^\Sigma_Y: = e_{\leq_Y}$ denote the equivalence relation induced by $\leq_Y$;\footnote{Hence, $e^\Sigma_{\varnothing} = \tau = W\times W$.} that is, $(w, u)\in e^\Sigma_Y$ iff $w\leq_Y u$ and $u\leq_Y w$, iff $\sum_{y\in Y} u_y = \sum_{y\in Y} w_y$ for all $u, w\in W$. Each preorder $\leq_Y$  has $|Y|+1$ $e^\Sigma_Y$-equivalence classes, each corresponding to 
a value of sum-score on $Y$ between $0$ and $|Y|$. Any profile in an equivalence class corresponding to a lower sum-score lies below any profile in an equivalence class corresponding to a higher  sum-score, as illustrated in Figure \ref{interrogative:fig:pre-order sum}. 
{\scriptsize
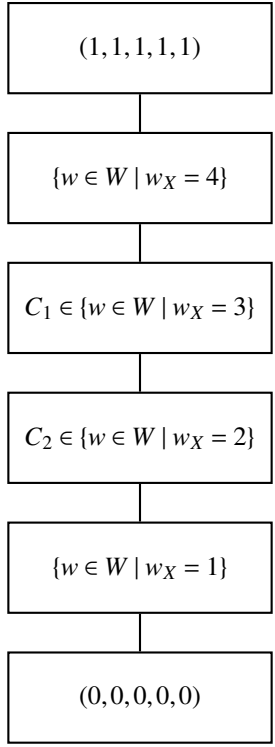
\begin{figure}
    \centering

\begin{tikzpicture}[
    node distance=0.5cm,
    every node/.style={draw, rectangle, minimum width=3.5cm, minimum height=1.2cm, align=center},
    -, thick
  ]

\node (w0) {$(0,0,0,0,0)$};
\node (w1) [above=of w0] {$\{w\in W\mid w_X = 1\}$};
\node (w2) [above=of w1] {$C_2\in \{w\in W\mid w_X = 2\}$};
\node (w3) [above=of w2] {$C_1\in \{w\in W\mid w_X = 3\}$};
\node (w4) [above=of w3] {$\{w\in W\mid w_X = 4\}$};
\node (w5) [above=of w4] {$(1,1,1,1,1)$};

\draw (w0) -- (w1);
\draw (w1) -- (w2);
\draw (w2) -- (w3);
\draw (w3) -- (w4);
\draw (w4) -- (w5);

\end{tikzpicture}
    \caption{structure of the  pre-order  $P_X$. }
    \label{interrogative:fig:pre-order sum}
\end{figure}
}

In the above scenario, each agent will individually judge the cars only based on  the parameters that they consider relevant.  
Using
the formal framework described in the preliminaries, for any $Y\subseteq X$, the equivalence relation $e^\Sigma_Y$ can be associated with the question (interrogative agenda)  `what is the given candidate's sum-score on $Y$?', as two candidates have the same answer to this question iff their profiles are in the same $e^\Sigma_Y$-equivalence class. Hence, in what follows, we  refer to  $e^\Sigma_Y$  as the agenda induced by $Y$, and we say that  $e^\Sigma_Y$ {\em prefers} $w$ over $u$ if  $u_Y< w_Y$. The following lemma shows that Definition \ref{interrogative:def:e prefers} simplifies to the latter definition when $e: = e^\Sigma_Y$.

\begin{lemma}
\label{interrogative:lemma:preference-sum}
For  any $Y\subseteq X$,  
\begin{enumerate}
\item 
${\leq_{e^\Sigma_Y}} = {\leq_Y}$.
   \item   $ {\leq_Y} =  e^\Sigma_Y \, \circ  {\leq}  \circ\, e^\Sigma_Y$, where $\leq$ is the pointwise order on $W$. 
\end{enumerate}
\end{lemma}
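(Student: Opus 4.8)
The plan is to treat the two items as the sum-rule analogues of Lemma~\ref{interrogative:lemma:preference}.1 and Lemma~\ref{interrogative:lemma:preference}.2, the only genuinely new ingredient being a small rearrangement of scores that replaces the coordinatewise matching available under total dominance. First I would record the two facts used throughout: that $\leq_Y$ is a preorder (reflexivity and transitivity of the numerical order on sum-scores are immediate, though antisymmetry fails), and that by construction $(w,u)\in e^\Sigma_Y$ iff $w_Y=u_Y$, i.e.\ $e^\Sigma_Y=e_{\leq_Y}$ is precisely the equivalence relation induced by $\leq_Y$.

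For item 1 I would follow the template of Lemma~\ref{interrogative:lemma:preference}.1: since $e^\Sigma_Y=e_{\leq_Y}$ by definition, the identity $\leq_{e^\Sigma_Y}=\leq_{e_{\leq_Y}}=\leq_Y$ is the instance of Lemma~\ref{interrogative:lem:preorder-equivalence back-forth}.1 obtained by taking $\leq_Y$ as the preorder. If one prefers an explicit check against the ambient pointwise order, the inclusion $\leq_{e^\Sigma_Y}\subseteq\leq_Y$ comes from instantiating $w':=w$ in the defining clause of $\leq_{e^\Sigma_Y}$, yielding some $u'$ with $u'_Y=u_Y$ and $w\leq u'$ pointwise, whence $w_Y\leq u'_Y=u_Y$; the reverse inclusion uses the witness construction of item 2.

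For item 2 I would prove the two inclusions separately. The inclusion $e^\Sigma_Y\circ{\leq}\circ e^\Sigma_Y\subseteq{\leq_Y}$ is routine: if $(w,a)\in e^\Sigma_Y$, $a\leq b$ pointwise and $(b,u)\in e^\Sigma_Y$, then summing $a_y\leq b_y$ over $y\in Y$ gives $a_Y\leq b_Y$, so $w_Y=a_Y\leq b_Y=u_Y$, i.e.\ $w\leq_Y u$. The reverse inclusion is where the work lies: given $w\leq_Y u$, i.e.\ $w_Y\leq u_Y$, I would fix an enumeration $y_1,\dots,y_m$ of $Y$ and build \emph{left-aligned} witnesses $w',u'\in W$ by setting $w'_{y_i}=1$ iff $i\leq w_Y$, $u'_{y_i}=1$ iff $i\leq u_Y$, and $w'_x=u'_x=0$ for $x\notin Y$. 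Then $w'_Y=w_Y$ and $u'_Y=u_Y$, so $(w,w'),(u',u)\in e^\Sigma_Y$; and since $w_Y\leq u_Y$, every coordinate carrying a $1$ in $w'$ also carries a $1$ in $u'$, giving $w'\leq u'$ pointwise. This exhibits $(w,u)$ in $e^\Sigma_Y\circ{\leq}\circ e^\Sigma_Y$.

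The only real obstacle, and the precise point where the sum rule departs from total dominance, is this reverse inclusion: unlike in Lemma~\ref{interrogative:lemma:preference}.2 one cannot simply copy $w$ and $u$ on the coordinates of $Y$, because $w\leq_Y u$ constrains only the totals $w_Y,u_Y$ and says nothing coordinatewise. The left-alignment trick resolves this by redistributing the $1$'s within $Y$ into nested initial segments, which simultaneously preserves both sum-scores (keeping us in the correct $e^\Sigma_Y$-classes) and forces the pointwise comparison $w'\leq u'$. I would close by noting that the argument nowhere invokes antisymmetry, so it applies verbatim to the genuinely preordered $P_Y$.
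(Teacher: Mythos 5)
Your proof is correct and takes essentially the same route as the paper: item 1 via the identification $e^\Sigma_Y = e_{\leq_Y}$ and Lemma \ref{interrogative:lem:preorder-equivalence back-forth}.1, and item 2 by summing coordinatewise inequalities for the inclusion $e^\Sigma_Y \circ {\leq} \circ\, e^\Sigma_Y \subseteq {\leq_Y}$ and, for the converse, constructing sum-preserving witnesses inside the two $e^\Sigma_Y$-classes. The only (cosmetic) difference is the witness construction: the paper keeps $w'$ equal to $w$ on the $Y$-coordinates (zeroing the rest) and then permutes the $Y$-coordinates of $u'$ so that its $1$'s cover those of $w'$, whereas you left-align both profiles into nested initial segments --- the same redistribution idea in a symmetric form.
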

\begin{proof}
1. By definition,  
$e_{\leq_Y} = e^\Sigma_Y$, which implies, by Lemma \ref{interrogative:lem:preorder-equivalence back-forth}.1, ${\leq_Y} = {\leq_{e_{\leq_Y}}} = {\leq_{e^\Sigma_Y} }$.

2. Let $w, w', u, u'\in W$ s.t.~$(w, w')\in e^\Sigma_Y, (u, u')\in e^\Sigma_Y$  and $w' \leq u'$. Then $w_Y = w'_Y\leq u'_Y = u_Y$,  which shows  $w \leq_Y u$.

Conversely, if $w \leq_Y u$, then let $w'$ and $u'$ be the profiles such that $w'_y=w_y$ and $u'_y=u_y$ for all $y \in Y$, and  $w'_x=0 = u'_x$ for every $x \in X\setminus Y$. Let $u''$ be the profile  obtained by applying a permutation to the $Y$-indexed coordinates of  $u'$ so that, for every $y \in Y$,  $w'_y =1$ implies $u''_y =1$. Such a permutation exists because  $w'_Y =w_Y  \leq u_Y=u'_Y$. As the sum score is preserved under permutations,  $u''_Y =u'_Y=u_Y$, which shows that $(u, u'')\in e^\Sigma_Y$. Therefore, $w e^\Sigma_Y w'$, $w' \leq u''$ and $u'' e^\Sigma_Y u$ 
which shows 
$w (e^\Sigma_Y\, \circ \leq \circ\, e^\Sigma_Y) u$, as required. 
\end{proof}
\begin{remark}
   The  identity in item 2  of the lemma above implies that the corresponding identity  $\langle\leq_Y\rangle U= \langle e_Y \rangle \langle  \leq  \rangle \langle e_Y \rangle U $  holds as well for all $U\subseteq W$ and  $Y\subseteq X$, where, for any  $R\subseteq W\times W$, the symbol $\langle R \rangle$ denotes the standard diamond operator associated  with  $R$ (cf.~Section \ref{interrogative:ssec:modal logic}).  This observation will play out in future work, when further expanding the present framework along the lines of the standard modal logic approach outlined in Section \ref{interrogative:ssec:modal logic}. 
\end{remark}

\begin{remark}
For any  $Y\subseteq X$, if $e_Y$
prefers $w$ over $u$, then so does $e^\Sigma_Y$. 
 This is because if $u_y\leq w_y$ for every $y\in Y$ and $u_y<w_y$ for some $y\in Y$, then $u_Y = \sum_{y\in Y}u_y< \sum_{y\in Y}w_y = w_Y$. In the current scenario,  Alan and Betty's agendas are $e_\aga=e^\Sigma_A$ and $e_\agb=e^\Sigma_B$, where $A=\{s,f,p\}$, and $B=\{f,t,m\}$. Thus,  Alan  prefers $C_1$ over $C_2$, and Betty $C_2$ over $C_1$, since ${C_2}_A<{C_1}_A$ and ${C_1}_B<{C_2}_B$, while neither $e_A$ nor $e_B$ leads to a decision (hence, neither Alan nor Betty would have been able to decide  if the winning rule were total dominance). 
Note that any agenda $e^\Sigma_Y$ for  $Y\subseteq X$ s.t.~$|Y| = 3$  would lead to a decision between  $C_1$ and $C_2$.    
\end{remark}

Abusing notation, for any $y\in X$, we write $e^\Sigma_y$ instead of $e^\Sigma_{\{y\}}$. As discussed earlier on,  $e^\Sigma_x$  is the equivalence relation associated with the question `what is the given candidate's sum-score on  $x$?', for any  $x\in X$. 
As we score on a binary scale, the sum-scores over singletons coincide with binary scores; hence, $e^\Sigma_x = e_x$ as defined in the previous section, and any such equivalence relation corresponds to a binary partition of $W$. Therefore,   $e^\Sigma_x = e_x$ is a  meet-irreducible element of the lattice $E(W)$ (cf.~Proposition \ref{interrogative:prop:charact meet-irr}).

This means that  $e^\Sigma_Y$ cannot be expressed as $\bigsqcap_{y \in Y} e^\Sigma_y$. Indeed, $\bigsqcap_{y \in Y} e^\Sigma_y = \bigsqcap_{y \in Y} e_y = e_Y$ is the equivalence relation which identifies any two profiles which have the same score  on each $y\in  Y$. Hence, for any $Y\subseteq X$, all profiles $u, w\in W$ which are $e_{Y}$-equivalent  must also have  the same sum-score on $Y$, i.e.~they are $e^\Sigma_Y$-equivalent. However, the converse inequality does not need to hold,
witnessed e.g.~by $w =(0,1,1,0,0)$ and $u=(0,0,1,1,0)$ which are $e^\Sigma_Y$-equivalent for $Y\coloneqq\{f, p, t\}$ but are not $e_{Y}$-equivalent, 
nor are they $e_x$-equivalent for $x\in \{f, t\}$.  Thus, $ e^\Sigma_Y  \nleq e_Y =\bigsqcap_{y \in Y} {e}_y $, which implies $e^\Sigma_Y  \nleq e_y$ for some $y\in Y$. 

In fact, a stronger claim holds. Each $R\subseteq W\times W$ induces a function $R: \mathcal{P}(W)\to \mathcal{P}(W\times W)$ defined by the assignment $U\mapsto R_{| U\times U}$. The fact that $e_Y = \bigsqcap_{y\in Y}e_y$ implies that, for any finite nonempty $Y\subseteq X$, the map $e_Y: \mathcal{P}(W)\to \mathcal{P}(W\times W)$ is such that 
\[e_Y(U) = T(e_{y_1}(U),\ldots, e_{y_n}(U))\]
for some $|Y|$-ary term function $T$ on $\mathcal{P}(W\times W)$, namely the map $T$ obtained by composing the operation $\sqcap$ on $\mathcal{P}(W\times W)$ with itself $|Y|-1$ times. However, 
\begin{proposition}
\label{interrogative:prop:sum-broken-down}
For any $Y\subseteq X$, if $2\leq |Y|$, then there is no term function $T$ on $\mathcal{P}(W\times W)$ such that for all $U\subseteq W$,
\[e^\Sigma_Y(U) = T(e_{y_1}(U),\ldots, e_{y_n}(U)).\] 
\end{proposition}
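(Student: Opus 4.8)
The plan is to exploit the fact that every term function on $\mathcal{P}(W\times W)$ built from the lattice/Boolean operations $\sqcap,\sqcup$ (and complement, constants) is computed \emph{pointwise}. Concretely, for any such $n$-ary $T$ there is a single Boolean function $\phi\colon\{0,1\}^n\to\{0,1\}$ such that, for all $R_1,\ldots,R_n\subseteq W\times W$ and every pair $x\in W\times W$,
\[ x\in T(R_1,\ldots,R_n)\iff \phi\big([x\in R_1],\ldots,[x\in R_n]\big)=1, \]
where $[\,\cdot\,]$ is the $0/1$ truth value. This follows by an easy induction on the structure of $T$, since $\sqcap,\sqcup$ and complement act coordinatewise on membership bits. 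In particular, membership of a pair $(w,u)$ in $T(e_{y_1}(U),\ldots,e_{y_n}(U))$ depends only on the bits $[(w,u)\in e_{y_i}(U)]=[\,w,u\in U \text{ and } w_{y_i}=u_{y_i}\,]$, which in turn depend only on whether $w,u\in U$ and on the \emph{agreement pattern} of $w,u$ across the coordinates $y_i$.

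First I would note why the naive strategy fails and what replaces it. One is tempted to find $U_1\neq U_2$ with $e_{y_i}(U_1)=e_{y_i}(U_2)$ for all $i$ but $e^\Sigma_Y(U_1)\neq e^\Sigma_Y(U_2)$; this is impossible, because each $e_{y_i}(U)$ contains exactly the diagonal pairs $(w,w)$ with $w\in U$, so equality of the inputs already forces $U_1=U_2$. The fix is to keep $U$ fixed and instead exhibit \emph{two distinct pairs} of points with identical input bits but opposite membership in $e^\Sigma_Y(U)$; by the pointwise principle, $\phi$ must then assign them the same value, contradicting the requirement that $e^\Sigma_Y(U)=T(e_{y_1}(U),\ldots,e_{y_n}(U))$.

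The key step is the four-point configuration. Since $|Y|\geq 2$, fix distinct $a,b\in Y$ and define four profiles $w,u,w',u'\in W$, all equal to $0$ on every coordinate outside $\{a,b\}$, with $(w_a,w_b)=(1,0)$, $(u_a,u_b)=(0,1)$, $(w'_a,w'_b)=(1,1)$, and $(u'_a,u'_b)=(0,0)$. These are four distinct points; put $U:=\{w,u,w',u'\}$. On every $y\in Y$ the pairs $(w,u)$ and $(w',u')$ have the same agreement pattern: both disagree exactly on $a$ and $b$ and agree on all other coordinates (all four are $0$ off $\{a,b\}$); the same holds on every $x\in X\setminus Y$, so the argument survives even if $T$ is allowed to be a term in all generators $\{e_x\mid x\in X\}$. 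Since all four points lie in $U$, we get $[(w,u)\in e_y(U)]=[(w',u')\in e_y(U)]$ for every relevant $y$, whence $\phi$ returns the same bit on the two pairs. On the other hand $w_Y=1=u_Y$ while $w'_Y=2\neq 0=u'_Y$, so $(w,u)\in e^\Sigma_Y(U)$ whereas $(w',u')\notin e^\Sigma_Y(U)$. The two required outputs differ while $\phi$ forces them to coincide — the desired contradiction.

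The only genuinely delicate point is the reduction in the first paragraph: it is essential that the admissible term functions range over the \emph{Boolean/lattice} operations, which are pointwise, rather than over relational operations such as composition or converse, for which membership at $(w,u)$ could depend on other pairs and the single-$U$ argument would have to be strengthened. Everything else is a finite verification, and I expect the construction to be routine once the pointwise principle is in place; the main conceptual hurdle is recognizing that one must vary the \emph{pair} of points rather than the set $U$.
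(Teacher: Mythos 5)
Your proof has a genuine gap, and you have in fact flagged it yourself in your last paragraph: the pointwise principle on which the whole argument rests holds only for term functions built from $\sqcap$, $\sqcup$, $\neg$ and constants, whereas the proposition concerns term functions on $\mathcal{P}(W\times W)$ in the signature the paper actually intends, namely $\sqcap, \sqcup, \neg, (-)^{-1}, \circ, \mathsf{RST}(-)$, which includes converse, relational composition, and reflexive-symmetric-transitive closure. This restriction is not cosmetic. The join of the lattice $E(W)$ is itself $\bigsqcup\mathcal{X} = \mathsf{RST}(\bigcup\mathcal{X})$, so a non-definability result that excludes $\mathsf{RST}$ and $\circ$ does not even rule out expressing $e^\Sigma_Y$ by the lattice operations of $E(W)$ applied to the $e_y$ --- which is the minimum the proposition must deliver, since the surrounding text contrasts $e^\Sigma_Y$ with $e_Y = \bigsqcap_{y\in Y}e_y$. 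For $\circ$ and $\mathsf{RST}$, membership of a pair $(w,u)$ in $T(R_1,\ldots,R_n)$ genuinely depends on pairs other than $(w,u)$, so no Boolean function $\phi$ of membership bits describes such terms; your own configuration illustrates this, since $(w,u)\notin e_a(U)\cup e_b(U)$ yet $(w,u)\in\mathsf{RST}(e_a(U)\cup e_b(U)) = U\times U$ (the union already links $w\sim_a w'\sim_b u\sim_a u'\sim_b w$). So what you have proved is the strictly weaker statement for the Boolean/lattice fragment, and the contradiction does not go through for the full signature.

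The good news is that your four-point configuration is exactly right; only the justification needs upgrading, and the paper's proof shows how. Instead of pointwiseness, the paper proves by induction on the shape of $T$ (over the full signature) the equivariance property $g(T(R_1,\ldots,R_n)) = T(g(R_1),\ldots,g(R_n))$ for every bijection $g$ between subsets of $W$, where $g(R)=\{(g(x),g(y))\mid (x,y)\in R\}$; this holds for $\circ$, $(-)^{-1}$ and $\mathsf{RST}$ just as well as for the Boolean operations. The paper then applies it to two two-element sets $U=\{w,u\}$, $U'=\{w',u'\}$ with $w=(1,0,0,0,0)$, $u=(0,1,0,0,0)$, $w'=(0,0,0,0,0)$, $u'=(1,1,0,0,0)$: the bijection $g(w)=w'$, $g(u)=u'$ carries each $e_y(U)$ to $e_y(U')$ but carries $e^\Sigma_Y(U)=U\times U$ to $U'\times U'$, which differs from $e^\Sigma_Y(U')=\{(w',w'),(u',u')\}$. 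Your single-set counterexample can be repaired with the same lemma: the swap $g(w)=w'$, $g(u)=u'$, $g(w')=w$, $g(u')=u$ is an automorphism of $(U; e_x(U))_{x\in X}$ (it fixes each $e_a(U)$, $e_b(U)$, and $e_x(U)=U\times U$ for $x\notin\{a,b\}$ setwise), hence by equivariance every term $T(e_{y_1}(U),\ldots,e_{y_n}(U))$ is $g$-invariant; but $e^\Sigma_Y(U)=\{(v,v)\mid v\in U\}\cup\{(w,u),(u,w)\}$ is not, since $g$ sends $(w,u)$ to $(w',u')\notin e^\Sigma_Y(U)$. In short: keep your configuration, but replace the pointwise principle by the equivariance lemma, which is the actual content of the paper's proof.
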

\begin{proof}
For the sake of simplicity, we will prove the statement for $Y\coloneqq\{s,f\}$, but the proof can easily be generalized to any $Y\subseteq X$ with two or more elements.
Let $T$ be an $n$-ary term function on $\mathcal{P}(W\times W)$ which is built out of the operations $\sqcap, \sqcup, \neg,  (-)^{-1}, \circ, \mathsf{RST}(-)$. It  is not difficult to show, by induction on the shape of $T$, that \begin{equation}
\label{interrogative:eq:tg = gt}
    g(T(R_1, \ldots, R_n))=T(g(R_1), \ldots, g(R_n))
\end{equation} for any  bijective map $g:U \to U'$ and all $U, U' \subseteq W$, where $g(R)=\{(g(u), g(v)) \mid (u,v) \in R\}$ for any  $R\subseteq W\times W$. 

Let $U=\{w,u\}$ and $U'=\{w',u'\}$
for 
$w =(1,0,0,0,0)$, $u =(0,1,0,0,0)$, $w' =(0,0,0,0,0)$, $u' =(1,1,0,0,0)$. Let $g$ be the bijection s.t.~$g(w)=w'$ and $g(u)=u'$. Then, for all $y \in Y$,
\[g({e_y}(U))=g(\{(w, w), (u, u)\}) = \{(w', w'), (u', u')\} ={e_y}(U').\] 
Suppose for contradiction that $e^\Sigma_Y= T(e_s, e_f)$ for some term function $T$. Then, 
\[T(e_s(U), e_f(U)) =e^\Sigma_Y(U)= U \times U \]
and 
 \[T(e_s(U'), e_f(U'))=e^\Sigma_Y (U') =\{(w',w'), (u',u')\}.\]
 Hence, $g(T(e_s(U), e_f(U))) = g( U \times U)= U' \times U'$, whereas  \[T(g(e_s(U)), g(e_f(U)))= T(e_s(U'), e_f(U')) =\{(w',w'), (u',u')\},\] which shows that  $g(T(e_s,e_f)) \neq T(g(e_s),g(e_f))$, contradicting \eqref{interrogative:eq:tg = gt}.  
\end{proof}

Thus, as anticipated in Footnote \ref{interrogative:ftn:issues and parameters}, unlike the cases in which  total dominance is the winning rule, in the case of the sum-rule,   agendas associated with {\em sets} of parameters cannot be described in terms of agendas  associated with  {\em individual} parameters.

However,  agendas of the form $e^\Sigma_Y$ can still be described as meets of meet-irreducible elements (which can be understood as their {\em issues}), as follows. 
For any $Y\subseteq X$ and any $ 0 \leq k < |Y|$, let us consider the question: `is the sum-score of the given candidate on $Y$ less than equal to $k$?'.\footnote{For the purpose of obtaining the identity of Lemma \ref{interrogative:lem:sum_decomposition},  different yes/no questions would also work; for instance, questions of the form `is the sum-score of the given candidate on $Y$  equal to $k$?', which  gives rise to the bi-partition $\mathcal{E}_{Y, k_=}\coloneqq\{\{w\in W\mid w_Y = k\}, \{w\in W\mid w_Y\neq k\}\}$. However, all alternatives concern the parameter $k$, while  $Y$ remains constant.} This question can be associated with  the following  bipartition of $W$:  \[\mathcal{E}_{Y, k_\leq}\coloneqq\{\{w\in W\mid w_Y\leq k\}, \{w\in W\mid w_Y> k\}\}.\] 
Let $e^\Sigma_{Y, k_\leq}$  be the equivalence relation corresponding to $\mathcal{E}_{Y, k_\leq}$. By Proposition \ref{interrogative:prop:charact meet-irr}, any such equivalence relation is a meet-irreducible element of $E(W)$.  

\begin{lemma}\label{interrogative:lem:sum_decomposition}
    For any $Y\subseteq X$, \[e^\Sigma_Y = \bigsqcap_{0 \leq k < |Y|} e^\Sigma_{Y,k_\leq}.\]
\end{lemma}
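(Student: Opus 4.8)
The plan is to prove the set equality of equivalence relations by showing both inclusions, but it is cleanest to argue at the level of the induced partitions, i.e.\ to show that two profiles $w, u \in W$ are $e^\Sigma_Y$-equivalent if and only if they are $\bigsqcap_{0 \leq k < |Y|} e^\Sigma_{Y, k_\leq}$-equivalent. Recall that $\bigsqcap$ in $E(W)$ is just intersection (as noted in the excerpt, $\bigsqcap_{E(W)} \mathcal{X} = \bigcap \mathcal{X}$), so $(w,u) \in \bigsqcap_{0 \leq k < |Y|} e^\Sigma_{Y, k_\leq}$ means precisely that $(w,u) \in e^\Sigma_{Y,k_\leq}$ for \emph{every} $k$ with $0 \leq k < |Y|$. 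Unwinding the definition of $e^\Sigma_{Y,k_\leq}$ via its bipartition $\mathcal{E}_{Y,k_\leq}$, this says: for every such $k$, either both $w_Y \leq k$ and $u_Y \leq k$, or both $w_Y > k$ and $u_Y > k$. In other words, $w$ and $u$ lie on the same side of the threshold $k$ for every $k \in \{0, 1, \ldots, |Y|-1\}$.

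\textbf{The forward inclusion} is immediate: if $(w,u) \in e^\Sigma_Y$, then by definition $w_Y = u_Y$, so trivially $w_Y \leq k \Leftrightarrow u_Y \leq k$ for every $k$, giving $(w,u) \in e^\Sigma_{Y, k_\leq}$ for all $k$, hence $(w,u)$ lies in the intersection.

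\textbf{For the reverse inclusion}, suppose $(w,u)$ lies in every $e^\Sigma_{Y,k_\leq}$ but, for contradiction, $w_Y \neq u_Y$; without loss of generality $w_Y < u_Y$. Since the sum-scores $w_Y, u_Y$ are integers in $\{0, 1, \ldots, |Y|\}$ and $w_Y < u_Y$, there is an integer threshold $k := w_Y$ which satisfies $0 \leq k < |Y|$ (the bound $k < |Y|$ holds because $w_Y < u_Y \leq |Y|$ forces $w_Y \leq |Y| - 1$). For this $k$ we have $w_Y = k$, so $w_Y \leq k$, whereas $u_Y > w_Y = k$, so $u_Y > k$; thus $w$ and $u$ sit on opposite sides of the $k$-threshold and $(w,u) \notin e^\Sigma_{Y, k_\leq}$, contradicting the assumption. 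Hence $w_Y = u_Y$, i.e.\ $(w,u) \in e^\Sigma_Y$.

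\textbf{The only subtle point}, and the step I would flag as requiring care, is the range of the index $k$: one must check that $0 \leq k < |Y|$ genuinely suffices as a separating family and that no threshold $k = |Y|$ is needed. This is exactly why the separating $k = w_Y$ constructed above is guaranteed to lie strictly below $|Y|$ — because whenever $w_Y \neq u_Y$, the \emph{smaller} of the two scores is at most $|Y| - 1$. Equivalently, the family of thresholds $\{0, \ldots, |Y|-1\}$ distinguishes any two distinct values in $\{0, \ldots, |Y|\}$, since two distinct integers in that range are always separated by some threshold below the maximum. Everything else is a routine unwinding of definitions, so no heavy computation is involved.
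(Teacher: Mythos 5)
Your proof is correct, and it establishes the same underlying identity as the paper's proof---that the meet of the threshold bipartitions is exactly the sum-score partition---but the two arguments are executed at different levels. The paper works with partitions: it observes that $e^\Sigma_{Y,(k-1)_\leq}\sqcap e^\Sigma_{Y,k_\leq}$ yields the tripartition $\{w_Y\leq k-1\}$, $\{w_Y=k\}$, $\{w_Y>k\}$, and then concludes that the full meet partitions $W$ into $|Y|+1$ cells, one for each possible value of $w_Y$, which is precisely the partition induced by $e^\Sigma_Y$. You instead argue pointwise, by double inclusion on pairs $(w,u)$, the key step being the explicit separating threshold $k=\min(w_Y,u_Y)$ when $w_Y\neq u_Y$, together with the observation that this $k$ lies in the admissible range $0\leq k<|Y|$. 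The trade-off is this: the paper's refinement picture shows how cells accumulate one threshold at a time, but it asserts the cell count without verifying that distinct sum-scores always end up in distinct cells; your pointwise version supplies exactly that missing detail---the check that the thresholds $0,\ldots,|Y|-1$ separate any two distinct values in $\{0,\ldots,|Y|\}$, which is precisely the ``subtle point'' you flagged. Both are complete proofs of the lemma; yours is the more careful of the two at the step where care is actually needed.
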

\begin{proof}
For any $Y\subseteq X$ and any $0< k<|Y|$, the equivalence relation $e^\Sigma_{Y,k} \coloneqq   e^\Sigma_{Y,(k-1)_\leq}\rand e^\Sigma_{Y,k_\leq}$   defines the following tripartition  of $W$:
\[\{\{w\in W \mid w_Y \leq k-1 \}, \{w\in W \mid w_Y  = k \}, \{w\in W \mid w_Y > k \}\}.\] 
\begin{figure}[H]
    \centering
\begin{tikzpicture}[scale=0.6]
    \draw[thick] (0,0) circle(2); 
    \draw[thick, dashed] (-2,0) -- (2,0); 
    \node at (0, 0.8) {$w_Y \leq k-1$}; 
    \node at (0, -0.8) {$w_Y > k-1$}; 

    \draw[thick] (6,0) circle(2); 
    \draw[thick, dashed] (4,-0.5) -- (8,-0.5); 
    \node at (6, 0.8) {$w_Y \leq k$}; 
    \node at (6, -1.3) {$w_Y > k$}; 

    \draw[thick] (12,0) circle(2); 
    \draw[thick, dashed] (10,0) -- (14,0); 
    \draw[thick, dashed] (10,-0.5) -- (14,-0.5); 
    \node at (12, 0.8) {$w_Y < k$}; 
    \node at (12, -0.2) { $w_Y =k$}; 
    \node at (12, -1.3) {$w_Y > k$}; 
\end{tikzpicture}
   \caption{From left to  right, partitions associated with  $e^\Sigma_{Y,(k-1)_\leq}$, $e^\Sigma_{Y,k_\leq}$, and  $e^\Sigma_{Y,k}$.}
    \label{fig: meet-irreducible sum}
\end{figure}
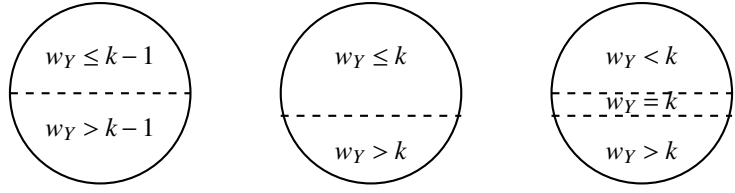
Therefore, $\bigsqcap\{ e^\Sigma_{Y,k_\leq} \mid 0 \leq k < |Y|  \}$ induces a partition of $W$ into $|Y|+1$ cells, each of which   corresponds to a value of $w_Y$ between $0$ and $|Y|$. Since this is the same partition induced by $e^\Sigma_Y$, the two equivalence relations coincide.
\end{proof}

Thanks to the lemma above, for any $Y\subseteq X$,  the agenda $e^\Sigma_Y$  can be described  as a meet of $|Y|$ meet-irreducible elements, which can be understood as the {\em issues} supported by this agenda.\footnote{Hence, under the sum-rule, the issues and the parameters are decoupled.}   Thus, the total number of bipartitions needed to meet-generate all the agendas of the form $e^\Sigma_Y$ is bounded by $\sum_{0 \leq k \leq n} k {n \choose k}= O(n2^n)$, where $n = |X|$ is the total number of parameters under consideration\footnote{This number is anyway much smaller than the number of all  bipartitions, which is $O(2^{2^{n-1}})$.}. Thus, as we did in the previous section, we let the algebra  of interrogative agendas 
be represented by the lattice  $\mathbb{D}$, i.e.~the complete sub-meet-semilattice   of   $E(W)$ 
meet-generated by  the equivalence relations of the form $ e^\Sigma_{Y,k_\leq}$ for any $Y\subseteq X$. Clearly, being a complete meet-semilattice, $\mathbb{D}$ is also a lattice.

\begin{lemma}
    If $|X|\geq 2$, then the lattice $\mathbb{D}$ of interrogative agendas
 is not distributive.
\end{lemma}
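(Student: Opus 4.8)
The plan is to establish non-distributivity by exhibiting inside $\mathbb{D}$ a five-element sublattice isomorphic to the diamond $M_3$; by the standard characterization (a lattice is non-distributive iff it contains a copy of $M_3$ or $N_5$ as a sublattice), this suffices. Since $|X|\geq 2$, I fix two distinct parameters $x, y\in X$ and consider the three elements $e^\Sigma_x \coloneqq e^\Sigma_{\{x\}, 0_\leq}$, $e^\Sigma_y \coloneqq e^\Sigma_{\{y\}, 0_\leq}$, and $s \coloneqq e^\Sigma_{\{x,y\}}$. All three lie in $\mathbb{D}$: the first two are among the meet-irreducible generators, and $s = e^\Sigma_{\{x,y\}, 0_\leq}\rand e^\Sigma_{\{x,y\}, 1_\leq}$ is a meet of generators by Lemma \ref{interrogative:lem:sum_decomposition}.

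First I would compute the three pairwise meets, working in $E(W)$, whose meets $\mathbb{D}$ inherits (being a complete meet-subsemilattice of it). Writing a profile as $(w_x, w_y, \bar w)$ with $\bar w$ its restriction to $X\setminus\{x,y\}$, one checks directly that $(w,u)$ belongs to any of $e^\Sigma_x\rand e^\Sigma_y$, $e^\Sigma_x\rand s$, or $e^\Sigma_y\rand s$ precisely when $w_x = u_x$ and $w_y = u_y$; hence all three pairwise meets coincide with $m \coloneqq e^\Sigma_x \rand e^\Sigma_y$.

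Next I would compute the pairwise joins. In $E(W)$ each equals $\tau$, by a short connectivity (reflexive--symmetric--transitive-closure) argument: for $e^\Sigma_x \ror e^\Sigma_y$, any two profiles $w, u$ are linked through the intermediate profile agreeing with $w$ on $x$ and with $u$ on $y$, while the joins involving $s$ are handled by a similar two-step chain passing through the middle sum-class. The one point needing care is that these must be joins in $\mathbb{D}$, not merely in $E(W)$: but since $\tau$ is the top of $E(W)$ and lies in $\mathbb{D}$, any $\mathbb{D}$-upper bound of a pair is an $E(W)$-upper bound and hence equal to $\tau$, so the $\mathbb{D}$-joins coincide with the $E(W)$-joins and all equal $\tau$.

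Finally I would record that $e^\Sigma_x, e^\Sigma_y, s$ are pairwise incomparable and that $m\neq\tau$ (the latter because any two profiles differing in their $x$-coordinate fail to be $m$-related). Together with the meet and join computations, this makes $\{m, e^\Sigma_x, e^\Sigma_y, s, \tau\}$ a sublattice of $\mathbb{D}$ isomorphic to $M_3$, so $\mathbb{D}$ is non-distributive. I expect the only real (and still mild) obstacle to be the bookkeeping that guarantees the $M_3$ configuration genuinely lives in the sub-meet-semilattice $\mathbb{D}$, i.e.\ the join-coincidence argument above. Alternatively, one can sidestep the $M_3$/$N_5$ theorem and simply exhibit the failure of distributivity directly, since $e^\Sigma_x \rand (e^\Sigma_y \ror s) = e^\Sigma_x \rand \tau = e^\Sigma_x$, whereas $(e^\Sigma_x\rand e^\Sigma_y)\ror(e^\Sigma_x\rand s) = m\ror m = m \neq e^\Sigma_x$.
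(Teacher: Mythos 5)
Your proof is correct, but it follows a genuinely different route from the paper's. The paper stays entirely inside the coatom machinery: it observes that $e^\Sigma_{\{x\},0_\leq}=e_x$, so that $\Delta=\bigsqcap_{x\in X}e_x$ is the bottom of $\mathbb{D}$; it then takes a single bipartition $e:=e^\Sigma_{Y,1_\leq}$ with $|Y|=2$, which is a coatom of $E(W)$ distinct from every $e_x$ (Proposition \ref{interrogative:prop:charact meet-irr}), so that each join $e_x\sqcup e$ equals $\top$ with no computation at all; the failure of distributivity then reads $\bigsqcap_{x\in X}(e_x\sqcup e)=\top\neq e=(\bigsqcap_{x\in X}e_x)\sqcup e$. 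You instead use the full sum agenda $s=e^\Sigma_{\{x,y\}}$, a tripartition and hence \emph{not} a coatom, which is why your joins $e^\Sigma_x\sqcup s$ and $e^\Sigma_y\sqcup s$ require the explicit reflexive--symmetric--transitive-closure connectivity chains (those chains are correct). What the paper's choice buys is brevity: the coatom fact eliminates every join computation. What your choice buys is more structure and more care where the paper is silent: you exhibit an explicit $M_3$ sublattice of $\mathbb{D}$, you need only two parameters and the binary form of the distributive law, and you explicitly justify that the joins computed in $E(W)$ agree with the joins in $\mathbb{D}$ --- a point on which the paper's proof also tacitly relies (its $\sqcup$ must be the join of the sub-meet-semilattice $\mathbb{D}$, not of $E(W)$) but never addresses; your upper-bound argument supplies exactly that missing justification and applies verbatim to the paper's argument as well. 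Your closing identity $e^\Sigma_x\sqcap(e^\Sigma_y\sqcup s)=e^\Sigma_x\neq e^\Sigma_x\sqcap e^\Sigma_y=(e^\Sigma_x\sqcap e^\Sigma_y)\sqcup(e^\Sigma_x\sqcap s)$ is the binary counterpart of the paper's displayed failure.
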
 \begin{proof}
    Let us preliminarily observe that $e^\Sigma_{\{x\},0_\leq} =e_x$ for every $x\in X$; indeed, by definition, $e^\Sigma_{\{x\},0_\leq}$ is the equivalence relation defined by the bi-partition $\{\{w\in W\mid \pi_x(w) = 0\},\{w\in W\mid \pi_x(w) = 1\} \}$, which is the same partition induced by $e_x$.
   Therefore, for each $x \in X$, $e_x$ is an element of $\mathbb{D}$. Hence, so is the identity relation $\Delta_X = \bigsqcap_{x \in X} e_x$, which is the bottom element of  $\mathbb{D}$. Let $ e=e^\Sigma_{Y,1_\leq}$ for some $Y \subseteq X$ s.t.~$|Y| = 2$. Then $e$ induces the bipartition $\{\{w\in W\mid w_{y_1}+w_{y_2}\leq 1\}, \{w\in W\mid w_{y_1}+w_{y_2}> 1\}\}$ of $W$, and hence  $e_x \neq e $ for any $x\in X$. Thus,  $e_x \ror e=\top$, where $\top$ is the relation $W \times W$. Therefore, $\bigsqcap_{x \in X} (e_x \ror e)=\top$. However, $(\bigsqcap_{x \in X} e_x) \ror e= \Delta_X \ror e = \bot\ror e = e\neq \top$, which shows the failure of distributivity. 
\end{proof}

There are several  ways in which Alan and Betty can agree to a common agenda; here we only mention the two most straightforward ones, each of which leads to a decision.

(1) If Alan and Betty agree to include only  the parameters considered relevant by  {\em both} of them, their common agenda  is  $e^\Sigma_f$, which prefers $C_2$ over $C_1$.

(2) If Alan and Betty agree to include all the parameters which {\em either} of them considers relevant, their common agenda is $e^\Sigma_X$, which prefers $C_1$ over $C_2$.   This is in contrast with the case of total dominance as the winning rule, where including all parameters considered relevant by some  agents can never lead to a decision whenever the individual agents' interrogative agendas induce diverging outcomes. 

Summing up, if the winning rule in a deliberation is the sum-rule, the  interrogative agendas induced by  subsets of relevant parameters form a lattice, which is not in general distributive, let alone a Boolean algebra. However, this lattice can still be described in terms of the meet-irreducible elements of $E(W)$ generating it.

\paragraph{Sum rule and  many-valued  scores. } 
The current setting straightforwardly generalizes to one in which every parameter $x\in X$ is associated with a linearly ordered range of values $\mathbb{L}_x = (L_x, \leq_x)$ such that $L_x\subseteq \mathbb{Q}\cap [0, 1]$ for every $x\in X$, 
and hence, as observed in the previous section,  the set of  profiles is the bounded distributive lattice $W := \prod_{x\in X}\mathbb{L}_x$. In this setting,  the preorder $(W,\leq_Y)$ associated with each $Y \subseteq X$ can be defined verbatim as $w \leq_Y u$ iff $w_Y=\sum_{y \in Y}w_y \leq \sum_{y \in Y}u_y = u_Y$ for all $w,u\in Y$, and  $e^\Sigma_Y: = e_{\leq_Y}$ is the equivalence relation induced by $\leq_Y$.  Each preorder $\leq_Y$ has $\sum_{y \in Y}|L_y|$ $e^\Sigma_Y$-equivalence classes, each corresponding to a value of sum-score on $Y$. The statement and proof of Lemma \ref{interrogative:lemma:preference-sum} hold verbatim in the  present setting. Similarly, the proof of  Lemma \ref{interrogative:lem:sum_decomposition} straightforwardly generalizes to the proof that, for any $Y \subseteq X$, 
\[
e^\Sigma_Y = \bigsqcap_{0 \leq k \leq \sum_{y \in Y}|L_y|}
e^\Sigma_{Y, k_\leq}. 
\]
Thus, $e^\Sigma_Y$ is still generated by agendas of the form $e^\Sigma_{Y, k_\leq}$; however, the range of $k$ is now determined by the exact scale used for scoring each parameter in $Y$.

We do not consider  non-linear scales under the sum rule, because if $x\in Y\subseteq X$ and $(L_x, \leq_x)$  is nonlinear,  there is no uniform way to define the sum-score of a profile $w$ over $Y$ as $w_Y = \sum_{y\in Y} w_y$, and omitting $w_x$ is equivalent to scoring it  $0$ on a  linear scale.

\section{Interrogative agendas and coalitions}
\label{interrogative:sec:interrogative_agendas_and_coalitions}
In this section,  a multi-type
algebraic  framework is introduced and studied for modelling the scenarios discussed in the previous sections. 

\paragraph{Multi-type framework.} We consider terms of two types:  $\mathsf{IA}$ (interrogative agendas) and $\mathsf{C}$ (coalitions). Terms of type  $\mathsf{C}$ are interpreted over finite Boolean algebras with operators (BAOs) $\mathbb{C}$ based on the powerset algebras of the given  set of agents. Individual agents can then be represented as the   join-irreducible elements (atoms) $\nomj\in \jty(\mathbb{C})$ of these Boolean algebras.  Hence, for any $c\in \mathbb{C}$ and $\nomj\in \jty(\mathbb{C})$, the inequality $\nomj\leq c$ can be interpreted as `agent $\nomj$ is a member of coalition $c$'.

Terms of type  $\mathsf{IA}$ are interpreted over some algebra  $\mathbb{D}$ of interrogative agendas which is completely meet-generated by some proper subset $X\subseteq\mty(E(W))$, where $W$ is the  feature space under consideration. Each element of $X$ represents a basic parameter or {\em issue}.   Hence, for any $e\in \mathbb{D}$ and $\cnomm\in X$, the inequality $e\leq \cnomm$ can be understood as `agenda $e$ supports issue $\cnomm$'.  If each $\cnomm\in X$ is the kernel of a projection map to a single  dimension in the feature space $W$,  the lattice  $\mathbb{D}$ is isomorphic to a Boolean algebra (cf.~Footnote \ref{footn: D is BA}).

\paragraph{Unary homogeneous connectives. } Consider the following relation:
\[I\subseteq \jty(\mathbb{C})\times \jty(\mathbb{C}) \quad\quad \nomi I \nomj\; \mbox{ iff\; agent } \nomi \mbox{  {\em influences}  agent } \nomj.\] The relation $I$ induces  the modal operators $\diamdot, \diamdotb$  on $\mathbb{C}$ in the standard way.\footnote{ Similarly, we could model other  relevant relationships between agents (hierarchy, trust, etc).}
That is, for every $\nomj\in \jty(\mathbb{C})$ we let ${\diamdot}\nomj: = I^{-1}[\nomj] = \{\nomi\mid \nomi I\nomj\}$ and ${\diamdotb}\nomj: = I[\nomj] = \{\nomi\mid \nomj I\nomi\}$. Then, these assignments can be extended to unary operations on $\mathbb{C}$ by letting, for every $c\in \mathbb{C}$,
\[{\diamdot} c: = \bigcup \{{\diamdot}\nomj\mid \nomj\leq c\} \quad \quad {\diamdotb} c: = \bigcup \{{\diamdotb}\nomj\mid \nomj\leq c\}.\]
So for any set $c$ of agents, ${\diamdot} c$ denotes the set of  agents who influence some member of $c$ (the largest set of influencers of $c$), and $\diamdotb c$  the set of agents who are influenced by some member of $c$ (the largest audience of $c$). The dual operator  $\boxdot c := \neg {\diamdot} \neg c$  denotes the set of  agents who can influence only members of $c$, if any, and ${\boxdotb} c: = \neg {\diamdotb} \neg c$  the set of  agents who are only  influenced by  members of $c$, if any. 
 Hence, for any $\nomj\in \jty(\mathbb{C})$,  ${\diamdot} \nomj$ represents the influencers of $\nomj$; $\diamdotb \nomj$  the audience of $\nomj$; $ \boxdot \nomj$  the set of  agents who  influence at most $\nomj$, and $ \boxdotb \nomj$  the set of  agents who  are  influenced by $\nomj$ at most.


\paragraph{Unary heterogeneous connectives. } Consider the following relation:
\[R\subseteq \mty(\mathbb{D})\times \jty(\mathbb{C}) \quad\quad \cnomm R \nomj\; \mbox{ iff\; issue } \cnomm \mbox{ is {\em relevant} to agent } \nomj.\]
The relation $R$ induces the  operations $\Diamond, {\rhd}: \mathbb{C}\to \mathbb{D}$ defined as follows: for every agent $\nomj$, let $\Diamond \nomj = {\rhd}\nomj: = \bigsqcap R^{-1}[\nomj]$, where $R^{-1}[\nomj] := \{\cnomm\mid \cnomm R \nomj\}$. Then, for every $c\in \mathbb{C}$,
\[\Diamond c: = \bigsqcup \{\Diamond \nomj\mid \nomj\leq c\} \quad\quad {\rhd}c: = \bigsqcap \{{\rhd} \nomj\mid \nomj\leq c\}.\]
Under the intended interpretation of $R$,
for every coalition $c$, the interrogative agenda ${\rhd} c$  denotes  the {\em distributed agenda} of $c$ (i.e.~${\rhd} c$ is the interrogative agenda supporting exactly those issues which are supported by {\em at least one} member of $c$), while
$\Diamond c$ is the {\em common agenda} of $c$ (i.e.~$\Diamond c$ is the interrogative agenda supporting exactly those issues which are supported by {\em all} members of $c$). Algebraically: \[{\rhd}c\leq \cnomm\quad \mbox{ iff }\quad {\rhd}\nomj\leq\cnomm\; \mbox{ for some }\; \nomj\leq c \quad\quad\quad
\Diamond c\nleq \cnomm\quad \mbox{ iff }\quad \Diamond\nomj\nleq\cnomm\; \mbox{ for some }\; \nomj\leq c.\]

\begin{proposition}
\label{prop:diamond and triangle normal}
For every $\mathcal{C}\subseteq \mathbb{C}$,
\begin{enumerate}
\item $\Diamond \bot_{\mathbb{C}} = \bot_{\mathbb{D}}$ and $ \Diamond(\bigcup\mathcal{C}) = \bigsqcup\{\Diamond c\mid  c\in \mathcal{C} \}$;
\item ${\rhd} \bot = \tau$ and ${\rhd}$ is antitone;
\item If the issues meet-generating $\mathbb{D}$ are completely meet-prime in $\mathbb{D}$, then $ {\rhd}(\bigcup\mathcal{C}) = \bigsqcap\{{\rhd} c \mid  c\in \mathcal{C}$ for every $\mathcal{C}\subseteq \mathbb{C}\}$.
\end{enumerate}
\end{proposition}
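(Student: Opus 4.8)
The plan is to treat each item as a statement about the families of join-irreducibles (atoms) lying below the argument, using throughout that the atoms of the Boolean algebra $\mathbb{C}$ are completely join-prime, so that $\nomj \leq \bigcup\mathcal{C}$ holds iff $\nomj \leq c$ for some $c \in \mathcal{C}$. For item~1, the identity $\Diamond \bot_{\mathbb{C}} = \bot_{\mathbb{D}}$ is the empty-join case: no atom lies below $\bot_{\mathbb{C}}$, so $\Diamond \bot_{\mathbb{C}} = \bigsqcup \varnothing = \bot_{\mathbb{D}}$. For join-preservation, I would expand $\Diamond(\bigcup\mathcal{C}) = \bigsqcup\{\Diamond \nomj \mid \nomj \leq \bigcup\mathcal{C}\}$, replace the index set by $\bigcup_{c \in \mathcal{C}}\{\nomj \mid \nomj \leq c\}$ using join-primeness of atoms, and regroup the join by $c$ using associativity of complete joins in $\mathbb{D}$ to reach $\bigsqcup_{c \in \mathcal{C}} \Diamond c$. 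For item~2, $\rhd \bot$ is the empty meet, hence the top $\tau$ of $\mathbb{D}$; and antitonicity is immediate, since $c_1 \leq c_2$ gives $\{\nomj \mid \nomj \leq c_1\} \subseteq \{\nomj \mid \nomj \leq c_2\}$, and a meet over a larger index set is smaller.

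For item~3, the inclusion $\rhd(\bigcup\mathcal{C}) \leq \bigsqcap\{\rhd c \mid c \in \mathcal{C}\}$ follows at once from the antitonicity established in item~2, since $c \leq \bigcup\mathcal{C}$ for each $c \in \mathcal{C}$. For the converse I would argue issue-by-issue: because $\mathbb{D}$ is completely meet-generated by the issues in $X$, two of its elements coincide as soon as they lie below exactly the same $\cnomm \in X$. Fixing such an issue $\cnomm$, which is completely meet-prime by hypothesis, I would unfold $\rhd(\bigcup\mathcal{C}) \leq \cnomm$ using the characterization ${\rhd}d \leq \cnomm$ iff ${\rhd}\nomj \leq \cnomm$ for some $\nomj \leq d$, applied to $d := \bigcup\mathcal{C}$ and combined with join-primeness of atoms, to the condition that ${\rhd}\nomj \leq \cnomm$ for some $\nomj \leq c$ and some $c \in \mathcal{C}$. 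Unfolding $\bigsqcap\{\rhd c \mid c\in\mathcal{C}\} \leq \cnomm$ by first applying meet-primeness of $\cnomm$ and then the same characterization to each $\rhd c$ yields the identical condition, so the two sides lie below the same issues and are therefore equal.

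The main obstacle is the nontrivial direction of the characterization, namely ${\rhd}d \leq \cnomm$ $\Rightarrow$ ${\rhd}\nomj \leq \cnomm$ for some $\nomj \leq d$: since ${\rhd}d = \bigsqcap\{{\rhd}\nomj \mid \nomj \leq d\}$, this implication is precisely an instance of complete meet-primeness of $\cnomm$, and it is exactly here that the hypothesis of item~3 is used (the reverse implication being free from antitonicity). One should also record the degenerate case $\mathcal{C} = \varnothing$, where both sides collapse to $\tau$ via item~2 and the convention on empty meets. I would finally remark that the identity can in fact be obtained more directly, expanding ${\rhd}(\bigcup\mathcal{C}) = \bigsqcap\{{\rhd}\nomj \mid \nomj \leq \bigcup\mathcal{C}\}$, rewriting the index set through join-primeness of atoms, and regrouping by $c$ exactly as in item~1; the meet-prime hypothesis is what licenses the perspicuous issue-wise reading matching the intended \emph{distributed agenda} interpretation of $\rhd$.
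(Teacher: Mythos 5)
Your proposal is correct. Item 2 and the two ``easy'' inequalities coincide with the paper's proof, and your issue-wise argument for item 3 is essentially the paper's argument repackaged: the paper proves ${\rhd}(\bigcup\mathcal{C}) \leq \bigsqcap\{{\rhd}c \mid c\in\mathcal{C}\}$ by antitonicity and the converse by reducing to meet-irreducibles and invoking complete meet-primeness of the issues together with complete join-primeness of the atoms of $\mathbb{C}$ --- exactly the ingredients of your ``both sides lie below the same issues'' formulation. The genuine differences are two. First, for item 1 the paper again argues via two inequalities, the nontrivial one by reduction to elements of $\mty(E(W))$; your direct rewriting of the index set, $\{\nomj \mid \nomj\leq \bigcup\mathcal{C}\} = \bigcup_{c\in\mathcal{C}}\{\nomj\mid\nomj\leq c\}$ by join-primeness of atoms, followed by regrouping of the complete join in $\mathbb{D}$, is more elementary and uses nothing about $\mathbb{D}$ beyond completeness. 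Second, and more interestingly, your closing remark is a genuine strengthening rather than a mere alternative: the same regrouping applied to meets gives ${\rhd}(\bigcup\mathcal{C}) = \bigsqcap\{{\rhd}\nomj\mid\nomj\leq\bigcup\mathcal{C}\} = \bigsqcap_{c\in\mathcal{C}}\bigsqcap\{{\rhd}\nomj\mid\nomj\leq c\} = \bigsqcap_{c\in\mathcal{C}}{\rhd}c$ unconditionally, so the meet-primeness hypothesis in item 3 is in fact redundant for the identity as stated; it is needed only for the issue-wise characterization ``${\rhd}c\leq\cnomm$ iff ${\rhd}\nomj\leq\cnomm$ for some $\nomj\leq c$'' (asserted in the text preceding the proposition), on which both your main argument and the paper's own proof rely. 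Your handling of the degenerate cases ($d=\bot$ in the characterization, $\mathcal{C}=\varnothing$ in item 3) via the empty-meet convention is also correct, and is a point the paper leaves implicit.
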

\begin{proof}
1.~By definition, $\Diamond \bot = \bigsqcup\{\Diamond \nomj\mid \nomj\leq \bot\} = \bigsqcup \varnothing = \bot_{\mathbb{D}}$. If $c_1\leq c_2$ then $\{\Diamond \nomj\mid \nomj\leq c_1\}\subseteq \{\Diamond \nomj\mid \nomj\leq c_2\}$, and hence $\Diamond c_1: = \bigsqcup\{\Diamond \nomj\mid \nomj\leq c_1\}\leq \bigsqcup\{\Diamond \nomj\mid \nomj\leq c_2\}: = \Diamond c_2$. This implies that $ \bigsqcup\{\Diamond c\mid c\in \mathcal{C}\}\leq \Diamond (\bigcup \mathcal{C})$ for any $\mathcal{C}\subseteq \mathbb{C}$.
%
For the  converse inequality, it is enough to show that  if $\cnomm\in \mty(E(W))$ and  $\bigsqcup\{\Diamond c\mid c\in \mathcal{C}\}\leq \cnomm$,  then $\Diamond (\bigcup \mathcal{C})\leq \cnomm$. By definition, $\Diamond (\bigcup \mathcal{C}): = \bigsqcup\{\Diamond \nomj\mid \nomj\leq \bigcup \mathcal{C}\}$. Hence, it is enough to show that if $\nomj\in J^\infty(\mathbb{C})$ s.t.~$\nomj\leq \bigcup \mathcal{C}$, then $\Diamond\nomj\leq \cnomm$. Since $\mathbb{C}$ is a Boolean algebra, $\nomj$ is completely meet-prime, hence $\nomj\leq \bigcup \mathcal{C}$ implies $\nomj\leq c$ for some $c\in \mathcal{C}$. By monotonicity, this implies that $\Diamond \nomj\leq \Diamond c$. 
The assumption $\bigsqcup\{\Diamond c\mid c\in \mathcal{C}\}\leq \cnomm$
is equivalent to $\Diamond c \leq \cnomm$ for every $c\in\mathcal{C}$, hence $\Diamond\nomj\leq \Diamond c\leq \cnomm$, as required. 

2.~By definition, ${\rhd} \bot = \bigsqcap\{{\rhd} \nomj\mid \nomj\leq \bot\} = \bigsqcap \varnothing = \top_{\mathbb{D}} = \tau$. If $c_1\leq c_2$ then $\{{\rhd} \nomj\mid \nomj\leq c_1\}\subseteq \{{\rhd} \nomj\mid \nomj\leq c_2\}$, and hence ${\rhd} c_2: = \bigsqcap\{\Diamond \nomj\mid \nomj\leq c_2\}\leq \bigsqcap\{{\rhd} \nomj\mid \nomj\leq c_1\}: = {\rhd} c_1$. 

3.~The antitonicity of $\rhd$ implies that  ${\rhd} (\bigcup \mathcal{C}) \leq \bigsqcap \{{\rhd} c\mid c\in\mathcal{C}\}$. For the converse inequality, it is enough to show that if $\cnomm\in \mty(E(W))$ and ${\rhd} (\bigcup \mathcal{C})\leq \cnomm$, then $\bigsqcap \{{\rhd} c\mid c\in\mathcal{C}\}\leq \cnomm$. Since $\cnomm$ is completely meet-prime, the assumption $\bigsqcap \{{\rhd}\nomj\mid \nomj\leq \bigcup \mathcal{C}\} = {\rhd} (\bigcup \mathcal{C})\leq \cnomm$ implies ${\rhd}\nomj\leq \cnomm$ for some $\nomj\leq \bigcup \mathcal{C}$. Since $\nomj$ is completely join-prime, $\nomj\leq \bigcup \mathcal{C}$ implies $\nomj\leq c$ for some $c\in\mathcal{C}$. Hence, by antitonicity, $\bigsqcap \{{\rhd} c\mid c\in\mathcal{C}\}\leq {\rhd}c\leq {\rhd}\nomj \leq \cnomm$, as required.
\end{proof}
The fact that $\Diamond$ and $ {\rhd}$ are completely join-preserving and completely join-reversing, respectively, implies the existence of their adjoints $\blacksquare, {\blacktriangleright}: \mathbb{D}\to \mathbb{C}$ defined as follows:
for every interrogative agenda $e$,
 \[\blacksquare e: = \bigcup\{c\mid \Diamond c\leq e\}\quad \quad {\blacktriangleright} e: = \bigcup\{c\mid e\leq {\rhd} c\}.\]
  That is, $\blacksquare e$ is the largest coalition $c$ such that each member of $c$ considers relevant all issues of $e$; the coalition ${\blacktriangleright} e$ is the largest coalition $c$ such that  each issue of $e$ is considered relevant by at least one member of $c$.

When $\mathbb{D}$ is a Boolean algebra, we can define $\Box c: = \neg \Diamond \neg c$. Notice that $\Diamond \neg c$ is the interrogative agenda which supports all the issues that are considered relevant by all the agents who are not members  of coalition $c$. Hence, \[\Box c: = \bigsqcap \{\cnomm \mid \exists \nomj (\nomj\nleq c \mbox{ and } \Diamond \nomj\nleq \cnomm)\}.\]  

\paragraph{Binary heterogeneous connectives.} Consider the following relation:
\[S\subseteq \mty(\mathbb{D})\times \jty(\mathbb{C})\times \mty(\mathbb{D}) \quad\quad S(\cnomn, \nomj, \cnomm)\; \mbox{ iff\;  agent $\nomj$ would {\em substitute}  issue $\cnomm$ with issue $\cnomn$. } \] 
The relation $S$ induces the  operation $\pdra: \mathbb{C}\times \mathbb{D}\to \mathbb{D}$ defined as follows: for every agent $\nomj$ and issue $\cnomm$, let $\nomj\pdra \cnomm   : = \bigsqcap S^{(0)}[\nomj, \cnomm]$, where $S^{(0)}[\nomj, \cnomm] := \{\cnomn\mid S(\cnomn, \nomj, \cnomm)\}$. Intuitively, $\nomj\pdra \cnomm$ is the interrogative agenda supporting exactly the issues that agent $\nomj$ prefers to issue $\cnomm$.

Then, for every $c\in \mathbb{C}$ and $e\in \mathbb{D}$,
\[ c\pdra e: = \bigsqcup \{\nomj\pdra \cnomm\mid \nomj\leq c \mbox{ and } e\leq \cnomm\}.\]
Intuitively, $c\pdra e$ is the agenda representing the shared view among the members of $c$ of how the issues in $e$ should be modified.
\begin{proposition}
\label{prop:pdra normal}
For every $e\in \mathbb{D}$, $\mathcal{D}\subseteq \mathbb{D}$, $c\in\mathbb{C}$ and $\mathcal{C}\subseteq \mathbb{C}$,
\begin{enumerate}
\item $ \bot_\mathbb{C} \pdra e = \bot_{\mathbb{D}}$ and $c\pdra \tau = \bot_{\mathbb{D}}$;

\item  $\pdra$ is monotone in its first coordinate and is antitone in its second coordinate; 
\item $ (\bigcup \mathcal{C}) \pdra e = \bigsqcup\{ c \pdra e \mid   c\in\mathcal{C} \}$;

\item If the issues meet-generating $\mathbb{D}$ are completely meet-prime in $\mathbb{D}$, then $c \pdra (\bigsqcap \mathcal{D}) =  \bigsqcup\{c \pdra e \mid e\in\mathcal{D}\}$.
\end{enumerate}
\end{proposition}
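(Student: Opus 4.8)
The plan is to follow closely the template of the proof of Proposition \ref{prop:diamond and triangle normal}, exploiting three structural facts: that $c\pdra e$ is by definition a join, indexed by pairs $(\nomj,\cnomm)$ with $\nomj\leq c$ and $e\leq\cnomm$, of the elements $\nomj\pdra\cnomm$; that $\mathbb{C}$ is a Boolean algebra, so each $\nomj\in\jty(\mathbb{C})$ is completely join-prime; and that, for item 4, the meet-generating issues of $\mathbb{D}$ are completely meet-prime by hypothesis. All four claims reduce to manipulating the index set of this defining join.

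For item 1, both equalities follow by observing that the relevant index set is empty. Since no join-irreducible satisfies $\nomj\leq\bot_{\mathbb{C}}$, the join defining $\bot_{\mathbb{C}}\pdra e$ is the empty join $\bot_{\mathbb{D}}$; and since every $\cnomm\in\mty(\mathbb{D})$ is by definition distinct from $\top_{\mathbb{D}}=\tau$, no $\cnomm$ satisfies $\tau\leq\cnomm$, so the join defining $c\pdra\tau$ is again empty. For item 2, monotonicity in the first coordinate and antitonicity in the second both follow from inclusions between index sets: if $c_1\leq c_2$ then $\{(\nomj,\cnomm)\mid\nomj\leq c_1,\ e\leq\cnomm\}\subseteq\{(\nomj,\cnomm)\mid\nomj\leq c_2,\ e\leq\cnomm\}$, whereas if $e_1\leq e_2$ the condition $e_2\leq\cnomm$ implies $e_1\leq\cnomm$, so the index set for $c\pdra e_2$ is contained in that for $c\pdra e_1$; in each case a join over a smaller set is dominated by the join over the larger one.

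For item 3, the inequality $\bigsqcup\{c\pdra e\mid c\in\mathcal{C}\}\leq(\bigcup\mathcal{C})\pdra e$ is immediate from monotonicity (item 2). For the converse I would argue termwise: a typical summand of $(\bigcup\mathcal{C})\pdra e$ is $\nomj\pdra\cnomm$ with $\nomj\leq\bigcup\mathcal{C}$ and $e\leq\cnomm$; since $\mathbb{C}$ is Boolean, $\nomj$ is completely join-prime, so $\nomj\leq\bigcup\mathcal{C}$ forces $\nomj\leq c$ for some $c\in\mathcal{C}$, and then $\nomj\pdra\cnomm$ is itself a summand of $c\pdra e$, whence $\nomj\pdra\cnomm\leq c\pdra e\leq\bigsqcup\{c'\pdra e\mid c'\in\mathcal{C}\}$. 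Taking the join over all summands yields $(\bigcup\mathcal{C})\pdra e\leq\bigsqcup\{c\pdra e\mid c\in\mathcal{C}\}$.

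Item 4 is the one where the additional hypothesis is indispensable, and I expect it to be the main obstacle. The easy inequality $\bigsqcup\{c\pdra e\mid e\in\mathcal{D}\}\leq c\pdra(\bigsqcap\mathcal{D})$ is again just antitonicity (item 2), using $\bigsqcap\mathcal{D}\leq e$ for each $e\in\mathcal{D}$. For the converse I would again reason termwise on the summands $\nomj\pdra\cnomm$ of $c\pdra(\bigsqcap\mathcal{D})$, which satisfy $\nomj\leq c$ and $\bigsqcap\mathcal{D}\leq\cnomm$. Here the point is that $\cnomm$ is a meet-generating issue, hence completely meet-prime by assumption, so $\bigsqcap\mathcal{D}\leq\cnomm$ yields $e\leq\cnomm$ for some $e\in\mathcal{D}$; then $\nomj\pdra\cnomm$ is a summand of $c\pdra e$, so $\nomj\pdra\cnomm\leq c\pdra e\leq\bigsqcup\{c\pdra e'\mid e'\in\mathcal{D}\}$, and joining over all summands gives the desired inequality. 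The delicate step, and the reason the hypothesis cannot be dropped, is precisely this use of complete meet-primeness to pass from $\bigsqcap\mathcal{D}\leq\cnomm$ to $e\leq\cnomm$ for an individual $e$; without it a meet of agendas could sit below an issue $\cnomm$ without any single agenda doing so, breaking the termwise argument exactly as in the failures recorded for $\rhd$ and $\Diamond$ in the absence of meet-primeness.
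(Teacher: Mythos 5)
Your proof is correct and follows essentially the same route as the paper's: empty index sets for item 1, index-set inclusions for item 2, complete join-primeness of the atoms of the Boolean algebra $\mathbb{C}$ for item 3, and complete meet-primeness of the meet-generating issues for item 4. The only (harmless) difference is that in items 3 and 4 you bound each summand $\nomj\pdra\cnomm$ of the left-hand join directly by the right-hand side, whereas the paper inserts an extra reduction to testing both sides against an arbitrary completely meet-irreducible upper bound $\cnomn$; both wrappers are valid, and the essential primeness steps coincide.
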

\begin{proof}
1. By definition, $ \bot_{\mathbb{C}} \pdra e = \bigsqcup\{ \nomj\pdra \cnomm \mid \nomj\leq \bot_{\mathbb{C}} \mbox{ and } e\leq \cnomm\} = \bigsqcup \varnothing = \bot_{\mathbb{D}}$. Likewise, $ c \pdra \tau = \bigsqcup\{ \nomj\pdra \cnomm \mid \nomj\leq c \mbox{ and } \tau\leq \cnomm\} = \bigsqcup \varnothing = \bot_{\mathbb{D}}$.

2. If $c_1\leq c_2$ then $\{\nomj\pdra \cnomm \mid \nomj\leq c_1 \mbox{ and } e\leq \cnomm\}\subseteq \{\nomj\pdra \cnomm \mid \nomj\leq c_2 \mbox{ and } e\leq \cnomm\}$, and hence $ c_1 \pdra e: = \bigsqcup\{\nomj\pdra \cnomm \mid \nomj\leq c_1 \mbox{ and } e\leq \cnomm\}\leq \bigsqcup\{\nomj\pdra \cnomm \mid \nomj\leq c_2 \mbox{ and } e\leq \cnomm\} = c_2 \pdra e$. 

If $e_1\leq e_2$ then $\{\nomj\pdra \cnomm \mid \nomj\leq c \mbox{ and } e_2\leq \cnomm\}\subseteq \{\nomj\pdra \cnomm \mid \nomj\leq c \mbox{ and } e_1\leq \cnomm\}$, and hence $ c \pdra e_2: = \bigsqcup\{\nomj\pdra \cnomm \mid \nomj\leq c \mbox{ and } e_2\leq \cnomm\}\leq \bigsqcup\{\nomj\pdra \cnomm \mid \nomj\leq c \mbox{ and } e_1\leq \cnomm\} = c \pdra e_1$.

3. The monotonicity of $\pdra$ in its first coordinate implies that $\bigsqcup \{ c\pdra e\mid c\in \mathcal{C}\}\leq (\bigcup \mathcal{C}) \pdra e$. For the converse inequality, it is enough to show that  if $\cnomn\in \mty(E(W))$ and  $\bigsqcup \{ c\pdra e\mid c\in \mathcal{C}\}\leq \cnomn$,  then $(\bigcup \mathcal{C}) \pdra e \leq \cnomn$. By definition, $(\bigcup \mathcal{C}) \pdra e : = \bigsqcup \{\nomj\pdra \cnomm\mid \nomj\leq \bigcup\mathcal{C} \text{ and } e\leq\cnomm\}$. Hence, it is enough to show that if $\nomj\in J^\infty(\mathbb{C})$ s.t.~$\nomj\leq \bigcup \mathcal{C}$ and  $\cnomm\in M^\infty(\mathbb{D})$  s.t.~$e\leq \cnomm$, then $\nomj\pdra\cnomm\leq \cnomn$. Since $\mathbb{C}$ is a Boolean algebra, $\nomj$ is completely meet-prime, hence $\nomj\leq \bigcup \mathcal{C}$ implies $\nomj\leq c$ for some $c\in \mathcal{C}$. By monotonicity, this implies that $ \nomj\pdra \cnomm\leq  c\pdra \cnomm$. The assumption $\bigsqcup \{ c\pdra e\mid c\in \mathcal{C}\}\leq \cnomn$ is equivalent to $c\pdra e\leq \cnomn$ for all $c\in \mathcal{C}$. Hence, by the antitonicity of $\pdra$ in its second coordinate, $\nomj\pdra\cnomm\leq c\pdra \cnomm\leq c\pdra e\leq \cnomn$, as required.

4. The antitonicity of $\pdra$ in its second coordinate implies that $\bigsqcup \{c \pdra e\mid  e\in\mathcal{D}\} \leq c  \pdra (\bigsqcap\mathcal{D})$. For the converse inequality, it is enough to show that if $\cnomn\in \mty(E(W))$ and $\bigsqcup \{c \pdra e\mid  e\in\mathcal{D}\} \leq \cnomn$ then $c \pdra (\bigsqcap\mathcal{D})\leq \cnomn$. By definition, $c \pdra (\bigsqcap\mathcal{D}): = \bigsqcup \{\nomj\pdra \cnomm\mid \nomj\leq c\text{ and }\bigsqcap\mathcal{D}\leq \cnomm\}$. Hence, it is enough to show that  if $\nomj\in J^\infty(\mathbb{C})$ s.t.~$\nomj\leq c$ and  $\cnomm\in M^\infty(\mathbb{D})$  s.t.~$\bigsqcap\mathcal{D}\leq \cnomm$, then $\nomj\pdra\cnomm\leq \cnomn$. Since by assumption $\cnomm$ is completely meet-prime, 
$\bigsqcap\mathcal{D}\leq \cnomm$ implies that $e\leq \cnomm$ for some $e\in\mathcal{D}$. Hence by monotonicity, $\nomj\pdra \cnomm\leq c\pdra e\leq \cnomn$, as required.
\end{proof}
The fact that $\pdra$ is completely join-preserving in its first coordinate and completely meet-reversing in its second coordinate   implies the existence of its residuals $\star: \mathbb{D}\times \mathbb{D}\to \mathbb{C}$ and $\apdra: \mathbb{C}\times \mathbb{D}\to \mathbb{D}$ such that for all  $e_1, e_2\in \mathbb{D}$ and any $c\in \mathbb{C}$,
\[c\pdra e_1\leq e_2 \quad \mbox{ iff }\quad c\leq e_1 \star e_2 \quad \mbox{ iff }\quad c \apdra e_2 \leq e_1\]
Hence by construction,
 \[ e_1 \star e_2: = \bigcup\{c\mid c\pdra e_1\leq e_2 \}\quad \mbox{ and }\quad c\apdra e: = \bigsqcap\{e'\mid c\pdra e'\leq e\},\]

\noindent and by well known order-theoretic facts, $\star$ is completely meet-preserving in each coordinate, while $\apdra$ is completely join-preserving in the first coordinate and completely meet-reversing in the second coordinate. 

The relation $S$ also induces the  operation $\br: \mathbb{C}\times \mathbb{D}\to \mathbb{D}$ defined as follows: for every agent $\nomj$ and issue $\cnomm$, let $\nomj\br \cnomm   : = \nomj\pdra \cnomm =  \bigsqcap S^{(0)}[\nomj, \cnomm]$, where $S^{(0)}[\nomj, \cnomm]$ has been defined above.
Then, for every $c\in \mathbb{C}$ and $e\in \mathbb{D}$,
\[ c\br e: = \bigsqcap \{\nomj\br \cnomm\mid \nomj\leq c \mbox{ and } e\leq \cnomm\}.\]
Intuitively, $c\br e$ is the agenda representing the distributed view among the members of $c$ of how the issues in $e$ should be modified.
\begin{proposition}
\label{prop:br normal}
For every $e\in \mathbb{D}$, $\mathcal{D}\subseteq \mathbb{D}$, $c\in\mathbb{C}$ and $\mathcal{C}\subseteq \mathbb{C}$,
\begin{enumerate}
\item $ \bot_{\mathbb{C}} \br e = \tau$ and $c\br \tau = \tau$;
\item  $\br$ is antitone in its first coordinate and  monotone in its second coordinate;
\item If the issues meet-generating $\mathbb{D}$ are completely meet-prime in $\mathbb{D}$, then $ (\bigcup \mathcal{C}) \br e = \bigsqcap\{c\br e\mid c\in\mathcal{C}\}$ and $c \br (\bigsqcap \mathcal{D}) =  \bigsqcap \{c \br e\mid  e\in\mathcal{D}\}$.
\end{enumerate}
\end{proposition}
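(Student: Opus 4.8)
The plan is to mirror the proofs of Propositions \ref{prop:diamond and triangle normal} and \ref{prop:pdra normal}, adjusting for the two features distinguishing $\br$ from $\pdra$: the outer aggregation defining $c \br e$ is a meet rather than a join, and accordingly $\br$ is antitone in its first argument and monotone in its second. Items 1 and 2 are routine. For item 1, both equalities hold because the defining meet ranges over an empty index set: in $\bot_{\mathbb{C}}\br e$ no join-irreducible $\nomj$ satisfies $\nomj \leq \bot_{\mathbb{C}}$, and in $c\br\tau$ no meet-irreducible $\cnomm \in \mty(\mathbb{D})$ satisfies $\tau \leq \cnomm$ (meet-irreducibles being distinct from $\tau = \top_{\mathbb{D}}$), so each meet equals $\bigsqcap\varnothing = \tau$. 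For item 2, enlarging $c$ enlarges the index set $\{\nomj\br\cnomm \mid \nomj\leq c,\ e\leq\cnomm\}$ and hence shrinks its meet, giving antitonicity in the first coordinate; while enlarging $e$ shrinks the index set (since $e_2\leq\cnomm$ implies $e_1\leq\cnomm$ whenever $e_1\leq e_2$) and hence enlarges its meet, giving monotonicity in the second coordinate.

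For item 3, each identity splits into an easy inequality and its converse. The inequality $(\bigcup\mathcal{C})\br e \leq \bigsqcap\{c\br e\mid c\in\mathcal{C}\}$ follows at once from antitonicity (each $c \leq \bigcup\mathcal{C}$ yields $(\bigcup\mathcal{C})\br e \leq c\br e$), and $c\br(\bigsqcap\mathcal{D}) \leq \bigsqcap\{c\br e\mid e\in\mathcal{D}\}$ from monotonicity (each $\bigsqcap\mathcal{D}\leq e$ yields $c\br(\bigsqcap\mathcal{D}) \leq c\br e$).

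The converse inequalities are where the meet-primeness hypothesis enters. Using that $\mathbb{D}$ is completely meet-generated by completely meet-prime issues, it suffices to verify each converse against a single such issue $\cnomn$. For the first identity, suppose $(\bigcup\mathcal{C})\br e \leq \cnomn$; expanding the left side as $\bigsqcap\{\nomj\br\cnomm \mid \nomj\leq\bigcup\mathcal{C},\ e\leq\cnomm\}$ and using complete meet-primeness of $\cnomn$, some single term $\nomj\br\cnomm$ with $\nomj\leq\bigcup\mathcal{C}$ and $e\leq\cnomm$ lies below $\cnomn$. Since $\mathbb{C}$ is Boolean, $\nomj$ is an atom, hence completely join-prime, so $\nomj\leq\bigcup\mathcal{C}$ forces $\nomj\leq c$ for some $c\in\mathcal{C}$; then the pair $(\nomj,\cnomm)$ satisfies the defining conditions of $c\br e$, so $c\br e \leq \nomj\br\cnomm \leq \cnomn$, whence $\bigsqcap\{c'\br e\mid c'\in\mathcal{C}\} \leq \cnomn$. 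The second identity is proved identically, except that the extracted term satisfies $\nomj\leq c$ and $\bigsqcap\mathcal{D}\leq\cnomm$, and now it is the complete meet-primeness of $\cnomm$ that turns $\bigsqcap\mathcal{D}\leq\cnomm$ into $e\leq\cnomm$ for some $e\in\mathcal{D}$, after which the same term $\nomj\br\cnomm$ appears in the defining meet of $c\br e$.

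The main point to watch—rather than any deep obstacle—is the bookkeeping of variance: because $\br$ reverses the monotonicity of $\pdra$, the two identities carry the ``union-to-meet'' and ``meet-to-meet'' shapes, opposite to the ``union-to-join'' and ``meet-to-join'' shapes established for $\pdra$, and it is easy to invert an inclusion when passing from index sets to their meets. Once the directions are tracked correctly, the argument is a transcription of the adjunction-style reductions already used for $\Diamond$, $\rhd$, and $\pdra$, with the meet-primeness hypothesis invoked in exactly the two spots where a single term must be extracted from a meet.
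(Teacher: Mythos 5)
Your proposal is correct and follows essentially the same route as the paper's proof: empty-index-set meets for item 1, index-set inclusions for item 2, and, for item 3, the easy inequalities from variance plus converses obtained by testing against a completely meet-prime issue $\cnomn$, extracting a single term $\nomj\br\cnomm$ from the defining meet, and using complete join-primeness of atoms in $\mathbb{C}$ (resp.\ complete meet-primeness of $\cnomm$) to relocate that term inside the meet defining $c\br e$. The only cosmetic difference is that where you observe the extracted term appears in the defining meet of $c\br e$, the paper invokes antitonicity and monotonicity to get $c\br e\leq \nomj\br\cnomm$; these are interchangeable one-line justifications.
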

\begin{proof}
1. By definition, $ \bot \br e = \bigsqcap\{ \nomj\br \cnomm \mid \nomj\leq \bot \mbox{ and } e\leq \cnomm\} = \bigsqcap \varnothing = \top_{\mathbb{D}} = \tau$. Likewise, $ c \br \tau = \bigsqcap\{ \nomj\br \cnomm \mid \nomj\leq c \mbox{ and } \tau\leq \cnomm\} = \bigsqcap \varnothing = \top_{\mathbb{D}} = \tau$.

2. If $c_1\leq c_2$ then $\{\nomj\br \cnomm \mid \nomj\leq c_1 \mbox{ and } e\leq \cnomm\}\subseteq \{\nomj\br \cnomm \mid \nomj\leq c_2 \mbox{ and } e\leq \cnomm\}$, and hence $ c_2 \br e: = \bigsqcap\{\nomj\br \cnomm \mid \nomj\leq c_2 \mbox{ and } e\leq \cnomm\}\leq \bigsqcap\{\nomj\br \cnomm \mid \nomj\leq c_1 \mbox{ and } e\leq \cnomm\} = c_1 \br e$. 

If $e_1\leq e_2$ then $\{\nomj\br \cnomm \mid \nomj\leq c \mbox{ and } e_2\leq \cnomm\}\subseteq \{\nomj\br \cnomm \mid \nomj\leq c \mbox{ and } e_1\leq \cnomm\}$, and hence $ c \br e_1: = \bigsqcap\{\nomj\br \cnomm \mid \nomj\leq c \mbox{ and } e_1\leq \cnomm\}\leq \bigsqcap\{\nomj\br \cnomm \mid \nomj\leq c \mbox{ and } e_2\leq \cnomm\} = c \br e_2$. 

3. The antitonicity of $\br$ in its first coordinate implies that  $(\bigcup \mathcal{C})  \br e \leq \bigsqcap \{c \br e\mid c\in\mathcal{C}\}$. For the converse inequality, it is enough to show that if $\cnomn\in \mty(E(W))$ and $(\bigcup \mathcal{C})  \br e\leq \cnomn$ then $\bigsqcap \{c \br e\mid c\in\mathcal{C}\} \leq \cnomn$. Since  $\cnomn$ is completely meet-prime, the  assumption $ \bigsqcap\{\nomj\br \cnomm\mid \nomj\leq\bigcup \mathcal{C}\text{ and } e\leq \cnomm\} = (\bigcup \mathcal{C})  \br e \leq \cnomn$ implies that $\nomj\br\cnomm\leq \cnomn$ for some $\nomj\in J^\infty(\mathbb{C})$ s.t.~$\nomj\leq \bigcup\mathcal{C}$ and some $\cnomm\in M^\infty(\mathbb{D})$ s.t.~$e\leq \cnomm$. Since $\nomj$ is completely join-prime, $\nomj\leq \bigcup\mathcal{C}$ implies $\nomj\leq c$ for some $c\in\mathcal{C}$. Hence, by monotonicity and antitonicity, $\bigsqcap\{c\br e\mid c\in\mathcal{C}\} \leq c\br e\leq \nomj\br \cnomm\leq \cnomn$, as required.

As to the second identity, the monotonicity of $\br$ in its second coordinate implies that  $c  \br (\bigsqcap\mathcal{D})\leq \bigsqcap \{ c \br e \mid   e\in\mathcal{D}\} $. For the converse inequality, it is enough to show that if $\cnomn\in \mty(E(W))$ and $c  \br (\bigsqcap\mathcal{D})\leq \cnomn$ then
$\bigsqcap \{ c \br e \mid   e\in\mathcal{D}\}\leq \cnomn$. Since  $\cnomn$ is completely meet-prime, the  assumption $\bigsqcap \{\nomj\br \cnomm\mid \nomj\leq c \mbox{ and } \bigsqcap\mathcal{D}\leq \cnomm\} = c  \br (\bigsqcap\mathcal{D})\leq \cnomn$ implies that $\nomj\br \cnomm\leq \cnomn$ for some $\nomj\in J^\infty(\mathbb{C})$ s.t.~$\nomj\leq c$ and some $\cnomm\in M^\infty(\mathbb{D})$ s.t.~$\bigsqcap\mathcal{D}\leq \cnomm$. Since $\cnomm$ is completely meet-prime, $\bigsqcap\mathcal{D}\leq \cnomm$ implies $e\leq \cnomm$ for some $e\in\mathcal{D}$. Hence, by monotonicity and antitonicity, $\bigsqcap\{c\br e\mid e\in\mathcal{D}\} \leq c\br e\leq \nomj\br \cnomm\leq \cnomn$, as required.
\end{proof}

The fact that $\br$ is completely join-reversing in its first coordinate and completely meet-preserving in its second coordinate implies the existence of its residuals $\abr: \mathbb{D}\times \mathbb{D}\to \mathbb{C}$ and $\mand: \mathbb{C}\times \mathbb{D}\to \mathbb{D}$ such that for all  $e_1, e_2\in \mathbb{D}$ and any $c\in \mathbb{C}$,
\[e_1 \leq c \br e_2\quad \mbox{ iff } \quad c\mand e_1\leq e_2\quad \mbox{ iff } \quad c \leq e_1\abr e_2.\]
Hence by construction,
 \[ e_1 \abr e_2: = \bigcup\{c\mid e_1 \leq c \br e_2 \}\quad \mbox{ and }\quad c\mand e: = \bigsqcap\{e'\mid c\pdra e\leq e'\},\]

 \noindent and by well known order-theoretic facts, $\abr$ is completely join-reversing in its first coordinate and completely meet-preserving in its second coordinate, while $\mand$ is completely join-preserving in each coordinate.

\paragraph{Algebraic/logical framework.} Based on the discussion so far, we  define the  {\em abstract heterogeneous algebras of interrogative agendas and coalitions} as the triples $\mathbb{H} = (\mathbb{C}, \mathbb{D}, \mathcal{H})$ such that

\begin{enumerate}
    \item   $\mathbb{C}$ is a Boolean algebra, the elements of which represent coalitions of agents, expanded with modal operators $\diamdot$, $\diamdotb$, $\boxdot$, $\boxdotb$, capturing various aspects of the social network among the various coalitions;
    \item  $\mathbb{D}$ is a (distributive) lattice, the elements of which represent interrogative agendas;
     \item $\mathcal{H}$ is the set of heterogeneous connectives listed below. The  connectives in each column are adjoints or residuals of each other, as indicated in the discussion above.
\begin{center}
\begin{tabular}{r| r| r| r}
$\Diamond : \mathbb{C} \to \mathbb{D}$ & ${\rhd} :  \mathbb{C}\to \mathbb{D}$ & $\pdra  :   \mathbb{C} \times\mathbb{D}\to \mathbb{D}$ & $\br: \mathbb{D}\times \mathbb{C} \to\mathbb{D}$ \\
$\blacksquare : \mathbb{D} \to  \mathbb{C}$ & ${\blacktriangleright} :  \mathbb{C} \to\mathbb{D}$ & ${\star}:\mathbb{D} \times \mathbb{D}\to  \mathbb{C}$ & ${\mand}: \mathbb{C} \times \mathbb{D} \to \mathbb{D}$\\
&&  $\apdra: \mathbb{C}\times \mathbb{D} \to \mathbb{D}$ & $\abr: \mathbb{D} \times \mathbb{D} \to  \mathbb{C}$ \\
\end{tabular}\end{center}
\end{enumerate}
The {\em canonical extension} of each such $\mathbb{H}$ is $\mathbb{H}^\delta\coloneqq (\mathbb{C}^\delta, \mathbb{D}^\delta, \mathcal{H}^\delta)$ such that $\mathbb{C}^\delta$ and $\mathbb{D}^\delta$ are the canonical extensions of $\mathbb{C} $ and $\mathbb{D}$, respectively (cf.~\cite{jonsson1951boolean, gehrke2004bounded}), and each operation in $\mathcal{H}^\delta$ is the $\sigma$-extension  (resp.~$\pi$-extension) of an operation  $h\in \mathcal{H}$ which is a left (resp.~right) adjoint or residual. 

The following multi-type language is naturally associated with these structures: let $\mathsf{AtC}$ and $\mathsf{AtIA}$ be disjoint denumerable sets of atomic coalition variables and atomic interrogative agenda variables, respectively. 
\begin{center}
\begin{tabular}{cl}
$\mathsf{IA} \ni e:: =$ & $ q\mid \tau\mid \bot \mid e\sqcap e\mid e\sqcup e\mid \Diamond c\mid {\rhd}c\mid {\blacktriangleright} c\mid c\pdra e\mid c\apdra e\mid c\br e\mid c\mand e$\\
$\mathsf{C} \ni c:: =$ & $ t \mid\top\mid \bot_{\mathsf{C}} \mid c\cap c\mid c\cup c\mid \neg c\mid \diamdot c \mid \diamdotb c\mid \boxdot c\mid \boxdotb c\mid \blacksquare e\mid e\star e\mid e{\abr} e$ \\
\end{tabular}
\end{center}
where $t\in \mathsf{AtC}$ and $q\in \mathsf{AtIA}$.
A {\em valuation} on $\mathbb{H}$ is a map $v: \mathsf{AtC}\cup\mathsf{AtIA}\to \mathbb{H}$ such that $v(t)\in \mathbb{C}$ for every $t\in \mathsf{AtC}$ and $v(q)\in \mathbb{D}$ for every $q\in \mathsf{AtIA}$. A {\em model} is a pair $(\mathbb{H}, v)$ such that $\mathbb{H}$ and $v$ are as above. The {\em basic logic of interrogative agendas and coalitions} consists of two sets of sequents $c_1\vdash c_2$ and $e_1\vdash e_2$, one for each type, each of which is closed under the usual axioms and rules for Boolean algebras with operators (in the case of type $\mathsf{C}$) and (distributive) lattices (in the case of type $\mathsf{IA}$) with operators. Additional axioms and rules capture the (finitary versions of) the order-theoretic properties discussed in Propositions \ref{prop:diamond and triangle normal}, \ref{prop:pdra normal}, and \ref{prop:br normal}, as well as the adjunctions/residuations among the various connectives. A model $(\mathbb{H}, v)$ {\em satisfies} a sequent $c_1\vdash c_2$ (resp.~$e_1\vdash e_2$) when $\mathbb{C}\models v(c_1)\leq v(c_2)$ (resp.~$\mathbb{D}\models v(e_1)\leq v(e_2)$). The notion of {\em validity} is defined, as usual, as satisfaction for every valuation, and is denoted $\mathcal{H}\models c_1\vdash c_2$ (resp.~$\mathcal{H}\models e_1\vdash e_2$).  The proof of the completeness of the basic logic w.r.t.~the class of models defined above is a routine Lindenbaum-Tarski construction, and is omitted. Via standard discrete Stone/Priestley-type duality (cf.~), this logic can also be endowed with a complete relational semantics, which we discuss in Section \ref{interrogative:sec:interaction conditions}.
%


\section{Interaction conditions} 
\label{interrogative:sec:interaction conditions}
In the previous section, we have introduced a basic (abstract) algebraic/logical framework for interrogative agendas and coalitions, in which certain heterogeneous (unary and binary) modal operators arise from some relations $I$, $R$, and $S$, linking agents and issues in various ways, the intended interpretation of which captures information on the social structure of the agents, as well as the building blocks of their cognitive stances, in the form of which agents influence who, which issues the agents consider relevant, and which issues they would like to see replaced by which issues. In this section, we  discuss examples of first-order conditions involving these relations which can yield an even richer picture of the agents' attitudes, and adapt and use the (inverse) correspondence-theoretic tools developed e.g.~in \cite{conradie2012algorithmic,conradie2019algorithmic, palmigiano2024unified} to explore whether these conditions can be captured by some axioms in the language of  the basic logic introduced in the previous section.
%
The conditions we are going to consider are listed in   Table \ref{interrogative:tab:Properties and defining conditions}, together with their names. In what follows, for any $\mathbb{H} = (\mathbb{C}, \mathbb{D}, \mathcal{H})$ as above, we let $I\subseteq J^\infty(\mathbb{C}^\delta)\times J^\infty(\mathbb{C}^\delta)$, $R\subseteq M^\infty(\mathbb{D}^\delta)\times J^\infty(\mathbb{C}^\delta)$, and $S\subseteq M^\infty(\mathbb{D}^\delta)\times J^\infty(\mathbb{C}^\delta)\times M^\infty(\mathbb{D}^\delta)$. Moreover, $\nomi, \nomj, \nomk, \nomh$ are variables ranging in $J^\infty(\mathbb{C}^\delta)$ and $\cnomm, \cnomn, \cnomo$  in $M^\infty(\mathbb{D}^\delta)$.

{{\footnotesize
\begin{longtable}
{|m{17em} |m{21em}|}
    \hline
       \textbf{Name} &
       \textbf{Condition} 
       \\
       \hline
    $S$  symmetric\footnote{Symmetry is a form of widespread irrationality.} & $\forall \nomj\forall \cnomm\forall \cnomn[S(\cnomn, \nomj, \cnomm)\Rightarrow S(\cnomm, \nomj, \cnomn)]$
    \\
    \hline
      $S$  anti-symmetric  & $\forall \nomj\forall \cnomm\forall \cnomn[(S(\cnomn, \nomj, \cnomm)\ \&\ S(\cnomm, \nomj, \cnomn))\Rightarrow \cnomm \leq \cnomn]$ \\
    \hline
    $S$  unanimous & $\forall \nomj\forall\nomi\forall \cnomm\forall \cnomn [S( \cnomm, \nomj, \cnomn) \Rightarrow S( \cnomm, \nomi, \cnomn)]$ \\
    \hline
    $S$  reflexive &  $\forall \nomj \forall \cnomm[ S(\cnomm,\nomj,\cnomm)]$ \\
    \hline
    $S$  transitive & $\forall \nomj\forall \cnomm\forall \cnomo\forall \cnomn[(S(\cnomn, \nomj, \cnomm)\ \&\ S(\cnomo, \nomj, \cnomn))\Rightarrow S(\cnomo, \nomj, \cnomm)]$ \\
    \hline 
    $S$   globally indifferent & $\forall \nomj \forall \cnomm \forall \cnomn[ S(\cnomm,\nomj,\cnomn)]$ \\
    \hline 
    $S$  Euclidean & $\forall \nomj \forall \cnomm \forall \cnomn \forall \cnomo [(S( \cnomn,\nomj,\cnomm )\ \&\ S( \cnomo,\nomj,\cnomm )) \Rightarrow S( \cnomn,\nomj,\cnomo ) ]$ \\
    \hline 
    $S$  single-stepped\footnote{If $S$ is single-stepped, then $S$ is transitive.} & $\forall \nomj\forall \cnomm_1\forall \cnomm_2\forall \cnomn[(S(\cnomn, \nomj, \cnomm_1)\ \&\ S(\cnomm_2, \nomj, \cnomn))\Rightarrow \cnomn \leq \cnomm_2]$ \\
    \hline
    $S$  reasonably ductile & $\forall\nomj\forall \nomi\forall \cnomm\forall \cnomn[(\nomj I\nomi\ \&\ \cnomm R\nomj\ \& \ \cnomm R\nomi \ \& \ \cnomn R\nomj )\Rightarrow S(\cnomn, \nomi,\cnomm )]$
    \\
    \hline 
    $S$  positively coherent with $R$ &  $\forall \nomj\forall \cnomm[ \cnomm R\nomj \Rightarrow S(\cnomm, \nomj, \cnomm)]$ \\
    \hline 
     $S$  negatively coherent with $R$ &  $\forall \nomj\forall \cnomm[S(\cnomm, \nomj, \cnomm)\Rightarrow \cnomm R\nomj]$ \\
    \hline 
      $S$  negatively preference-coherent with $R$ &  $\forall\nomj\forall\cnomm\forall\cnomn [S(\cnomn,\nomj,\cnomm) \Rightarrow \cnomn R \nomj]$ \\
    \hline 
     $S$  positively preference-coherent with $R$ & $\forall\nomj\forall\cnomm\forall\cnomn [ \cnomn R \nomj \Rightarrow S(\cnomn,\nomj,\cnomm) ]$ \\
     \hline 
     $S$    $R$-equanimous\footnote{If $S$ is $R$-equanimous, then $S$ is positively coherent with $R$} & $\forall \nomj\forall \cnomm\forall \cnomn[ (\cnomm R\nomj\ \& \ \cnomn R\nomj) \Rightarrow S(\cnomn, \nomj, \cnomm)]$\\
     \hline
     $S$   $R$ bi-coherent & $\forall \nomj\forall \cnomm \forall \cnomn[(\cnomm R \nomj\ \&\  \neg (\cnomn R\nomj) )\Rightarrow S(\cnomm, \nomj, \cnomn)  ]$ 
     \\
     \hline 
  
    $S$  $R$-intransigent & $\forall \nomj\forall \cnomm\forall \cnomn[ (\cnomm R\nomj\ \& \ S(\cnomn, \nomj, \cnomm)) \Rightarrow \cnomm\leq \cnomn]$\\
    \hline 
    $I$  positively coherent with $S$ & $ \forall \nomj \forall \nomi\forall\cnomm\forall\cnomn [(S(\cnomm, \nomj,\cnomn)\ \&\ \nomj I \nomi) \Rightarrow  S(\cnomm,\nomi,\cnomn)]$\\
    \hline 
        $I$  negatively coherent with $S$ & $ \forall \nomj\forall \nomi\forall\cnomm\forall\cnomn [(\neg S(\cnomm, \nomj,\cnomn)\ \& \ \nomj I \nomi) \Rightarrow  \neg S(\cnomm,\nomi,\cnomn)]$\\
    \hline 
     $I$, $R$, $S$  coherent with each other & $ \forall \nomj \forall \nomi \forall \cnomm \forall \cnomn [(\cnomn R \nomj\ \&\ \nomj I \nomi) \Rightarrow S(\cnomn, \nomi,  \cnomm)]$ \\
      \hline 
    \caption{Conditions and their names}
\label{interrogative:tab:Properties and defining conditions}
\end{longtable}

}}

The coherence conditions in Table \ref{interrogative:tab:Properties and defining conditions} can be understood as different `rational explanations' of the attitudes of agents towards issues, captured in terms  of  the relations $I$, $R$, and $S$. 

In the following proposition, some of the conditions above are shown to be equivalent to axioms (in the form of inequalities) in the algebraic/logical framework outlined at the end of the previous section.
In proving it, we will make use of the fact that, with the exception of one axiom only involving constants, the other axioms are of a particularly simple syntactic shape, referred to as {\em primitive} (see \cite[Definition 28]{greco2016unified}).

\begin{proposition}
\label{interrogative:prop:inv:correspondence}
For any $\mathbb{H} = (\mathbb{C}, \mathbb{D}, \mathcal{H})$ and any $I$, $R$, $S$ as above, 
\begin{enumerate}
\item $\mathbb{H}\models S\text{ symmetric }$  iff  $\;\mathbb{H}\models e_1\star e_2\vdash e_2\star  e_1$  iff $\;\mathbb{H}\models c\apdra e\vdash  c\pdra e$.
\item $\mathbb{H}\models S \text{ positively coherent with } R$ iff  $\;\mathbb{H}\models\blacksquare e \vdash e \star e$.
\item $\mathbb{H}\models S \text{ negatively coherent with } R$ iff  $\;\mathbb{H}\models e\star e\vdash \blacksquare e$.
\item $\mathbb{H}\models S \text{ negatively preference-coherent with } R$ iff  $\;\mathbb{H}\models\Diamond c \leq c \pdra \tau$.
\item $\mathbb{H}\models S \text{ positively preference-coherent with } R$ iff  $\;\mathbb{H}\models c \pdra \bot  \vdash  \Diamond c$.
\item $\mathbb{H}\models S\; R$-intransigent iff  $\;\mathbb{H}\models\blacksquare e\vdash  e \abr e $.
\item $\mathbb{H}\models S\; R$-equanimous iff  $\;\mathbb{H}\models\blacksquare(e_1 \sqcap e_2) \vdash e_1 \star e_2$.

\item $\mathbb{H}\models S\text{ globally indifferent }$  iff  $\;\mathbb{H}\models \top\vdash \bot \star \bot$.  
\item   $\mathbb{H}\models I \text{ positively coherent with } S$ iff  $\;\mathbb{H}\models {\diamdotb} c\pdra e \vdash  c\pdra e$.
\item  $\mathbb{H}\models I \text{ negatively coherent with } S$ iff  $\;\mathbb{H}\models{\diamdot} c\pdra e\vdash  c\pdra e $.

\item $\mathbb{H}\models I, R, S$ coherent with each other iff  $\;\mathbb{H}\models\blacksquare e \leq {\boxdot} (e\star \bot)$. 

\end{enumerate}
\end{proposition}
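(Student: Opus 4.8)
The plan is to treat Proposition~\ref{interrogative:prop:inv:correspondence} as a package of (inverse) correspondence results and to prove each equivalence by running the ALBA-style elimination procedure of \cite{conradie2012algorithmic,conradie2019algorithmic,palmigiano2024unified} inside the canonical extension $\mathbb{H}^\delta=(\mathbb{C}^\delta,\mathbb{D}^\delta,\mathcal{H}^\delta)$. First I would observe that, since $\mathbb{C}^\delta$ and $\mathbb{D}^\delta$ are perfect, validity of each axiom on $\mathbb{H}$ is equivalent to its validity on $\mathbb{H}^\delta$, and the latter can be tested by quantifying over the completely join-irreducibles $\nomi,\nomj,\nomk,\nomh\in\jty(\mathbb{C}^\delta)$ and the completely meet-irreducibles $\cnomm,\cnomn,\cnomo\in\mty(\mathbb{D}^\delta)$, on which the relations $I,R,S$ live. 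All the axioms except item~8 are \emph{primitive} in the sense of \cite[Definition 28]{greco2016unified}, so the elimination is guaranteed to succeed and terminate on a first-order (pure relational) condition; item~8 involves only the constants $\top,\bot$ and is verified by a direct computation.

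The engine of every individual proof is a fixed ``semantic dictionary'' that unfolds the heterogeneous connectives on irreducibles into statements about $I,R,S$, together with the adjunction/residuation laws established just before the Proposition. The clauses I would record are $\nomj\le\blacksquare e \Leftrightarrow \Diamond\nomj\le e$, $\nomj\le e_1\star e_2 \Leftrightarrow \nomj\pdra e_1\le e_2$, and $\nomj\le e_1\abr e_2 \Leftrightarrow e_1\le \nomj\br e_2$ on the $\mathbb{C}$ side; the defining meets $\Diamond\nomj=\bigsqcap\{\cnomm\mid\cnomm R\nomj\}$ and $\nomj\pdra\cnomm=\nomj\br\cnomm=\bigsqcap\{\cnomn\mid S(\cnomn,\nomj,\cnomm)\}$ on the $\mathbb{D}$ side; and, for the homogeneous modalities, $\nomi\le\boxdot c \Leftrightarrow \forall\nomj(\nomi I\nomj\Rightarrow\nomj\le c)$ together with ${\diamdot}\nomj=I^{-1}[\nomj]$ and ${\diamdotb}\nomj=I[\nomj]$. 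A crucial standing hypothesis---used exactly as in Propositions~\ref{prop:diamond and triangle normal}--\ref{prop:br normal}---is that the issues are completely meet-prime in $\mathbb{D}$: this is what lets me collapse an inequality $\Diamond\nomj\le\cnomm$ to the atomic fact $\cnomm R\nomj$, and a meet $\bigsqcap\{\cnomn\mid S(\cnomn,\nomj,\cnomm)\}\le\cnomo$ to the membership $S(\cnomo,\nomj,\cnomm)$.

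With this dictionary in place, each equivalence is proved by the same three moves. First, \emph{first approximation}: rewrite an axiom $t_1\vdash t_2$ of type $\mathsf{C}$ as the quasi-inequality $\nomj\le t_1\Rightarrow\nomj\le t_2$ for a fresh $\nomj$ (dually for type $\mathsf{IA}$). Second, \emph{adjunction/residuation rewriting}: use the laws above to push the propositional variable ($e$, or a pair $e_1,e_2$) to one polarity, turning the antecedent into a clean bound such as $\Diamond\nomj\le e$, i.e.\ $e\ge\Diamond\nomj$. Third, \emph{Ackermann elimination}: substitute the minimal (resp.\ maximal) valuation for the variable and simplify the surviving pure inequality via the dictionary, reading off the matching line of Table~\ref{interrogative:tab:Properties and defining conditions}. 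As an illustration, item~8 reduces via $\nomj\le\bot\star\bot\Leftrightarrow\nomj\pdra\bot\le\bot$ to the demand that $\bigsqcap\{\cnomn\mid S(\cnomn,\nomj,\cnomm)\}=\bot$ for all $\nomj,\cnomm$, which (as $\bot=\bigsqcap\mty(\mathbb{D}^\delta)$) is global indifference. The unary coherence items (1--7, 9, 10) follow this template with a single Ackermann step, isolating $e$ through the residuals $\star,\apdra,\abr,\mand$ and the adjoints $\blacksquare,\blacktriangleright$, and finishing with the meet-primeness of issues.

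The main obstacle is item~11, $\mathbb{H}\models\blacksquare e\le{\boxdot}(e\star\bot)$, the only axiom simultaneously entangling all three relations $I,R,S$ with the non-normal box $\boxdot$ (which carries the Boolean complementation of $\mathbb{C}$) and the constant $\bot$. Here the rewriting in the second move must interleave three distinct adjunctions across both sorts---$\Diamond\dashv\blacksquare$ to dispose of $\blacksquare e$, $\diamdot\dashv\boxdot$ to unfold $\nomi\le\boxdot(e\star\bot)$ into $\forall\nomj(\nomi I\nomj\Rightarrow\nomj\le e\star\bot)$, and the residual $\pdra\dashv\star$ to turn $\nomj\le e\star\bot$ into $\nomj\pdra e\le\bot$---\emph{before} a single Ackermann step on $e$ becomes available, after which one must verify that the surviving quantifier pattern over $\nomi,\nomj,\cnomm,\cnomn$ collapses to exactly $(\cnomn R\nomj\ \&\ \nomj I\nomi)\Rightarrow S(\cnomn,\nomi,\cnomm)$. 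Checking that no spurious side conditions remain after this triple unfolding---in particular that the minimal valuation for $e$ forced by the $\blacksquare e$ antecedent is compatible with the $\bot$ on the right---is the delicate point, and is where I expect to spend most of the effort; the corresponding direct soundness check, feeding the relational condition back through the three clauses, provides a useful cross-verification.
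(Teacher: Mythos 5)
Your proposal is correct and is essentially the paper's own proof: the paper establishes each item by exactly the ALBA-style routine you describe---unfolding the heterogeneous operations on irreducibles into the relations $I$, $R$, $S$, rewriting via the adjunctions/residuations ($\Diamond\dashv\blacksquare$, $\pdra\dashv\star$, $\mand\dashv\br$, $\diamdotb\dashv\boxdot$), eliminating the irreducible variables, and invoking the primitivity result \cite[Prop.~37]{greco2016unified} to pass between irreducible-level and variable-level inequalities---merely writing each chain of equivalences starting from the first-order condition rather than from the axiom, and handling item 11 with precisely the triple interleaving of adjunctions you flag as the delicate point. Two cosmetic remarks: the adjunction needed for item 11 is $\diamdotb\dashv\boxdot$ (not $\diamdot\dashv\boxdot$), and your item-8 endgame (reading $\nomj\pdra\cnomm=\bot$ as global indifference) tacitly uses the same duality convention, namely $S(\cnomn,\nomj,\cnomm)$ iff $\nomj\pdra\cnomm\le\cnomn$, on which the paper's ``def.~of $\pdra$'' steps also rest.
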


\begin{proof} 
The following computations are justified by unified algorithmic correspondence and  inverse correspondence.

1. \begin{tabular}{c l l}
    & $\forall \nomj\forall \cnomm\forall \cnomn[S(\cnomn, \nomj, \cnomm)\Rightarrow S(\cnomm, \nomj, \cnomn)]$ \\
iff & $\forall \nomj\forall \cnomm\forall \cnomn[ \nomj\pdra\cnomm\leq \cnomn \Rightarrow  \nomj\pdra\cnomn\leq \cnomm]$ & definition of  $\pdra$\\
iff & $\forall \nomj\forall \cnomm\forall \cnomn[\nomj\leq \cnomn\star \cnomm  \Rightarrow \nomj \leq \cnomm\star \cnomn]$ &  $\pdra \dashv \star$ \\
iff & $\forall \cnomm\forall \cnomn[\cnomn\star \cnomm \leq \cnomm\star \cnomn ]$ & \\
iff & $\forall e_1\forall e_2[e_1\star e_2\leq e_2\star  e_1]$. & \cite[Prop.~37(2)]{greco2016unified}\\
\end{tabular}

2. \begin{tabular}{c l l}
    & $\forall \nomj\forall \cnomm[  \cnomm R\nomj  \Rightarrow S(\cnomm, \nomj, \cnomm) ]$ \\
iff & $\forall \nomj\forall \cnomm[ \Diamond \nomj \leq \cnomm \Rightarrow  \nomj\pdra\cnomm\leq \cnomm  ]$ & def.~of $\Diamond$ and $\pdra$\\
iff & $\forall \nomj\forall \cnomm[\nomj \leq \blacksquare \cnomm
\Rightarrow \nomj\leq \cnomm\star \cnomm  ]$ & $\Diamond \dashv \blacksquare$ and $\pdra \dashv \star$ \\
iff & $\forall \cnomm[   \blacksquare \cnomm \leq \cnomm\star \cnomm]$ & \\
iff & $\forall e[\blacksquare e\leq  e\star e ]$. & \cite[Prop.~37(2)]{greco2016unified}\\
\end{tabular}

3. \begin{tabular}{c l l}
    & $\forall \nomj\forall \cnomm[ S(\cnomm, \nomj, \cnomm)  \Rightarrow \cnomm R\nomj ]$ \\
iff & $\forall \nomj\forall \cnomm[ \nomj\pdra\cnomm\leq \cnomm \Rightarrow  \Diamond \nomj \leq \cnomm ]$ & def.~of $\Diamond$ and $\pdra$\\
iff & $\forall \nomj\forall \cnomm[\nomj\leq \cnomm\star \cnomm  \Rightarrow \nomj \leq \blacksquare \cnomm]$ & $\Diamond \dashv \blacksquare$ and $\pdra \dashv \star$ \\
iff & $\forall \cnomm[\cnomm\star \cnomm \leq  \blacksquare \cnomm ]$ & \\
iff & $\forall e[ e\star e\leq \blacksquare e]$. & \cite[Prop.~37(2)]{greco2016unified}\\
\end{tabular}

4. \begin{tabular}{c l l}
& $\forall\nomj\forall\cnomm\forall\cnomn [S(\cnomn,\nomj,\cnomm) \Rightarrow \cnomn R \nomj]$\\
iff & $\forall\nomj\forall\cnomm\forall\cnomn [\nomj\pdra \cnomm \leq \cnomn  \Rightarrow \Diamond \nomj\leq \cnomn ]$ & def.~of $\Diamond$ and $\pdra$\\
iff & $\forall\nomj\forall\cnomm [\Diamond \nomj\leq\nomj\pdra \cnomm  ]$\\
iff & $\forall\nomj[\Diamond \nomj\leq\nomj\pdra \tau  ]$\\
iff & $\forall c[\Diamond c\leq c\pdra \tau  ]$. &  \cite[Prop.~37(1)]{greco2016unified}\\
\end{tabular}

5. \begin{tabular}{c l l}
& $\forall\nomj\forall\cnomm\forall\cnomn [ \cnomn R \nomj \Rightarrow S(\cnomn,\nomj,\cnomm) ]$\\
iff & $\forall\nomj\forall\cnomm\forall\cnomn [ \Diamond \nomj\leq \cnomn\Rightarrow \nomj\pdra \cnomm \leq \cnomn   ]$ & def.~of $\Diamond$ and $\pdra$\\
iff & $\forall\nomj\forall\cnomm [\nomj\pdra \cnomm \leq \Diamond \nomj ]$\\
iff & $\forall\nomj[\nomj\pdra \bot \leq \Diamond \nomj ]$\\
iff & $\forall c[ c\pdra \bot\leq   \Diamond c]$. &  \cite[Prop.~37(1)]{greco2016unified} \\
\end{tabular}

6. \begin{tabular}{c l l}
    & $\forall \nomj\forall \cnomm\forall \cnomn[ (\cnomm R\nomj\ \& \ S(\cnomn, \nomj, \cnomm)) \Rightarrow \cnomm\leq \cnomn]$\\
iff & $\forall \nomj\forall \cnomm\forall \cnomn[ (\Diamond\nomj \leq \cnomm\ \& \  \nomj\br \cnomm \leq \cnomn) \Rightarrow \cnomm\leq \cnomn]$ &  def.~of $\Diamond$ and $\br$\\
iff & $\forall \nomj\forall \cnomm[ \Diamond\nomj \leq \cnomm\Rightarrow \forall \cnomn[  \nomj\br \cnomm \leq \cnomn \Rightarrow \cnomm\leq \cnomn]]$\\
iff & $\forall \nomj\forall \cnomm[ \Diamond\nomj \leq \cnomm\Rightarrow \cnomm\leq  \nomj\br \cnomm ]$\\
iff & $\forall \nomj\forall \cnomm[ \nomj \leq \blacksquare\cnomm\Rightarrow \nomj \leq  \cnomm \abr \cnomm ]$  &  $\Diamond \dashv \blacksquare$ and  $\mand \dashv \br$,  $\mand \dashv \abr$\\
iff & $\forall \cnomm [\blacksquare\cnomm \leq  \cnomm \abr \cnomm ]$ & \\
iff & $\forall e [\blacksquare e\leq  e \abr e ]$. & \cite[Prop.~37(2)]{greco2016unified}\\
\end{tabular}

7. \begin{tabular}{c l l}
    & $\forall \nomj\forall \cnomm  \forall \cnomn[(\cnomm R\nomj\ \& \ \cnomn R\nomj)\Rightarrow S(\cnomn, \nomj, \cnomm)]$ &\\
    iff & $\forall \nomj\forall \cnomm  \forall \cnomn[(\Diamond \nomj\leq \cnomm\ \&\ \Diamond \nomj\leq\cnomn)\Rightarrow \nomj\pdra \cnomm\leq \cnomn]$ &  def.~of $\Diamond$ and $\pdra$\\
iff & $\forall \nomj\forall \cnomm  \forall \cnomn[(\nomj\leq \blacksquare \cnomm\ \&\ \nomj\leq \blacksquare \cnomn)\Rightarrow \nomj\leq \cnomn\star \cnomm]$ & $\Diamond \dashv \blacksquare$ and $\pdra \dashv \star$\\
iff & $\forall \nomj\forall \cnomm  \forall \cnomn[\nomj\leq \blacksquare (\cnomn \sqcap \cnomm)\Rightarrow \nomj\leq \cnomn\star \cnomm]$\\
iff & $\forall e_1 \forall e_2 [\blacksquare(e_1 \sqcap e_2) \leq e_1 \star e_2]$. & \cite[Prop.~37(2)]{greco2016unified} \\
\end{tabular}

8. \begin{tabular}{c l l}
    &$\forall \nomj \forall \cnomm \forall \cnomn[ S(\cnomm,\nomj,\cnomn)]$\\
    iff &$\forall \nomj \forall \cnomm \forall \cnomm[ \nomj\pdra \cnomn \leq \cnomm]$ & def.~of $\pdra$\\
     iff &$\forall \nomj \forall \cnomm\forall \cnomn[\nomj\leq \cnomm \star \cnomn]$ & $\pdra \dashv \star$\\
      iff &$\top\leq \bot \star \bot$.\\
\end{tabular}

9. 
\begin{tabular}{c l l}
    &$\forall \nomj \forall \nomi\forall\cnomm\forall\cnomn [(S(\cnomm, \nomj,\cnomn)\ \&\ \nomj I \nomi) \Rightarrow  S(\cnomm,\nomi,\cnomn)]$\\
    iff &$\forall \nomj \forall \nomi\forall\cnomm\forall\cnomn [( \nomj\pdra \cnomn\leq \cnomm\ \&\ \nomi\leq {\diamdotb}\nomj) \Rightarrow  \nomi\pdra \cnomn\leq \cnomm]$ &   def.~of $\diamdotb$ and $\pdra$\\
 iff &$\forall \nomj \forall\cnomm\forall\cnomn [ \nomj\pdra \cnomn\leq \cnomm \Rightarrow \forall \nomi( \nomi\leq {\diamdotb}\nomj \Rightarrow  \nomi\leq  \cnomm\star \cnomn)]$\\
 iff &$\forall \nomj \forall\cnomm\forall\cnomn [ \nomj \pdra \cnomn\leq \cnomm \Rightarrow  {\diamdotb}\nomj \leq  \cnomm\star \cnomn]$\\
iff &$\forall \nomj \forall\cnomm\forall\cnomn [ \nomj\pdra \cnomn\leq \cnomm \Rightarrow  {\diamdotb}\nomj\pdra \cnomn \leq  \cnomm]$ & $\pdra \dashv \star$\\
iff &$\forall \nomj \forall\cnomn [  {\diamdotb}\nomj\pdra \cnomn \leq  \nomj\pdra \cnomn]$\\
 iff &$\forall c \forall e [  {\diamdotb} c\pdra e \leq  c\pdra e]$. &  \cite[Prop.~37(2)]{greco2016unified}\\
    
\end{tabular}

10. 
\begin{tabular}{c l l}
    &$\forall \nomj \forall \nomi\forall\cnomm\forall\cnomn [(S(\cnomm, \nomi,\cnomn)\ \&\ \nomj I \nomi) \Rightarrow  S(\cnomm,\nomj,\cnomn)]$\\
    iff &$\forall \nomj \forall \nomj\forall\cnomm\forall\cnomn [( \nomi\pdra \cnomn\leq \cnomm\ \&\ \nomj\leq {\diamdot}\nomi) \Rightarrow  \nomj\pdra \cnomn\leq \cnomm]$ &  def.~of  $\diamdot$ and $\pdra$\\
    iff &$\forall \nomi \forall\cnomm\forall\cnomn [ \nomi\pdra \cnomn\leq \cnomm\Rightarrow \forall \nomj(\nomj\leq {\diamdot}\nomi \Rightarrow  \nomj\leq  \cnomm\star \cnomn)]$\\
    iff &$\forall \nomi \forall\cnomm\forall\cnomn [ \nomi\pdra \cnomn\leq \cnomm\Rightarrow  {\diamdot}\nomi \leq  \cnomm\star \cnomn]$\\
    iff &$\forall \nomi \forall\cnomm\forall\cnomn [ \nomi\pdra \cnomn\leq \cnomm\Rightarrow  {\diamdot}\nomi\pdra \cnomn \leq  \cnomm]$ & $\pdra \dashv \star$\\
    iff &$\forall \nomi \forall\cnomn [ {\diamdot}\nomi\pdra \cnomn\leq  \nomi\pdra \cnomn ]$\\
    iff &$\forall c \forall e [ {\diamdot} c\pdra e\leq  c\pdra e ]$. &   \cite[Prop.~37(2)]{greco2016unified}\\
\end{tabular}

11.
\begin{tabular}{c l l}
& $\forall \nomj \forall \nomi 
\forall \cnomm\forall \cnomn [(\cnomn R \nomj\ \&\ \nomj I \nomi) \Rightarrow S(\cnomn, \nomi,  \cnomm)]$\\
iff & $\forall \nomj \forall \nomi \forall \cnomm \forall \cnomn [(\Diamond \nomj\leq \cnomn\ \&\ \nomi \leq {\diamdotb} \nomj) \Rightarrow  \nomi\pdra  \cnomm\leq \cnomn]$ &  def.~of $\Diamond$, $\diamdotb$ and $\pdra$\\
iff & $\forall \nomj \forall \nomi \forall \cnomm \forall \cnomn [( \nomj\leq \blacksquare\cnomn\ \&\ \nomi \leq {\diamdotb} \nomj) \Rightarrow  \nomi\pdra  \cnomm\leq \cnomn]$ & $\Diamond \dashv \blacksquare$\\
iff & $ \forall \nomi \forall \cnomm \forall \cnomn [\nomi \leq {\diamdotb} \blacksquare\cnomn \Rightarrow  \nomi\leq  \cnomn\star \cnomm]$ & $\pdra \dashv \star$\\
iff & $ \forall \cnomm \forall \cnomn [ {\diamdotb} \blacksquare\cnomn \leq  \cnomn\star \cnomm]$\\
iff & $  \forall \cnomn [ {\diamdotb} \blacksquare\cnomn \leq  \cnomn\star \bot]$\\
iff & $  \forall \cnomn [  \blacksquare\cnomn \leq {\boxdot} (\cnomn\star \bot)]$ & $\diamdotb \dashv \boxdot$\\
iff & $  \forall e [  \blacksquare e \leq {\boxdot} (e\star \bot)]$.  &   \cite[Prop.~37(2)]{greco2016unified}\\
\end{tabular}

\end{proof}

\noindent In what follows, we show that the conditions in Table \ref{interrogative:tab:Properties and defining conditions} which are not mentioned in Proposition \ref{interrogative:prop:inv:correspondence} are not definable in the multi-type language introduced in Section \ref{interrogative:sec:interrogative_agendas_and_coalitions}.

To do so, it is enough to consider heterogeneous algebras $\mathbb{H} = (\mathbb{C}, \mathbb{D}, \mathcal{H})$ such that $\mathbb{D}$ is a Boolean algebra; each such $\mathbb{H}$ can be associated  with a relational structure $\mathbb{H}_+ = (J^\infty(\mathbb{C}^\delta), M^\infty(\mathbb{D}^\delta), I_{\diamdot}, R_\Diamond, S_{\pdra})$ such that 
\begin{enumerate}
    \item $I_{\diamdot}\subseteq J^\infty(\mathbb{C}^\delta)\times J^\infty(\mathbb{C}^\delta)$ s.t.~$\nomj I\nomi$ iff $\nomj\leq {\diamdot}\nomi$ for all $\nomj,\nomi\in J^\infty(\mathbb{C}^\delta)$;
    \item $R_{\Diamond}\subseteq M^\infty(\mathbb{D}^\delta)\times J^\infty(\mathbb{C}^\delta)$ s.t.~$\cnomm R\nomj$ iff $\Diamond \nomj\leq \cnomm$ for all $\nomj\in J^\infty(\mathbb{C}^\delta)$ and $\cnomm\in M^\infty(\mathbb{D}^\delta)$;
    \item $S_{\pdra}\subseteq M^\infty(\mathbb{D}^\delta)\times J^\infty(\mathbb{C}^\delta)\times M^\infty(\mathbb{D}^\delta)$ s.t.~$S_{\pdra} (\cnomm,\nomj, \cnomn)$ iff $\nomj\pdra\cnomn\leq \cnomm$ for all $\nomj\in J^\infty(\mathbb{C}^\delta)$ and $\cnomm, \cnomn\in M^\infty(\mathbb{D}^\delta)$.
\end{enumerate}
Conversely, each relational structure $\mathcal{F} = (C, D, I, R, S)$ such that $C$ and $D$ are sets and $I\subseteq C\times C$, $R\subseteq D\times C$, $S\subseteq D\times C\times D$ gives rise to a heterogeneous algebra $\mathcal{F}^+ = (\mathbb{C}, \mathbb{D}, \mathcal{H})$ such that $\mathbb{C} = \mathcal{P}(C)$ and $\mathbb{D} = \mathcal{P}(D)^\partial$, and hence $C\cong J^\infty(\mathbb{C})$ and $D\cong M^\infty(\mathbb{D})$, and the operations in $\mathcal{H}$ are defined as indicated in Section \ref{interrogative:sec:interrogative_agendas_and_coalitions}. The proofs that $(\mathcal{F}^+)_+\cong \mathcal{F}$ and $(\mathbb{H}_+)^+\cong \mathbb{H}^\delta$ for each such $\mathcal{F}$ and $\mathbb{H}$ are routine and are omitted.

Clearly, the notions of modelhood, satisfaction, and validity transfer from the algebraic to the relational setting in the expected way. Moreover, the notions of bisimulation,  p-morphism, disjoint union of frames, and generated subframe straightforwardly generalize from the single-type setting of classical Kripke frames to the multi-type setting, and so do the invariance results of the validity of modal axioms under generated subframes, bounded morphic images and disjoint unions. The next proposition pivots on these invariance results.


\begin{proposition}
\label{interrogative:prop:gt}
The following conditions are not definable:

\begin{tabular}{llc ll}
1. &  $S$ transitive. && 5. & $S$  Euclidean.\\

2. & $S$  reflexive. && 6. & $S$  unanimous.\\
3. &  $S$  antisymmetric. && 7. & $S$  bi-coherent with $R$.\\
4. & $S$  single-stepped.
&& 8. & $S$  reasonably ductile.\\
\end{tabular}
\end{proposition}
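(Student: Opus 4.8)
The plan is to exploit the three invariance results recalled immediately before the statement: the validity of a modal sequent on a multi-type frame is preserved under generated subframes, under bounded morphic (p-)images, and under disjoint unions. Hence any frame condition that were definable by a set of sequents in the language of Section~\ref{interrogative:sec:interrogative_agendas_and_coalitions} would itself have to be preserved under each of these constructions, and to refute definability of a given condition it is enough to display two frames, related by one of the constructions, of which only one satisfies the condition. Since each of the eight conditions is a \emph{universal} first-order sentence in $I$, $R$, $S$ (and the order $\leq$ on $M^\infty(\mathbb{D}^\delta)$), it is automatically inherited by induced substructures, hence by generated subframes; so generated subframes carry no information here, and the whole argument rests on disjoint unions and bounded morphic images.

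I would first dispatch reflexivity (2), unanimity (6), and bi-coherence with $R$ (7) by disjoint unions. The common mechanism is that each has an atomic $S$-conclusion whose arguments can be forced into \emph{different} components of a union, where no $S$-tuple connects them. For reflexivity, take $\mathcal{F}_1$ with a single agent $\nomj_1$, a single issue $\cnomm_1$, and $S_1=\{(\cnomm_1,\nomj_1,\cnomm_1)\}$, together with a disjoint copy $\mathcal{F}_2$; each summand is reflexive, but in $\mathcal{F}_1\uplus\mathcal{F}_2$ the triple $(\cnomm_1,\nomj_2,\cnomm_1)$ is missing, so reflexivity fails. For unanimity, put one edge $S_1(\cnomm_1,\nomj_1,\cnomn_1)$ in $\mathcal{F}_1$ and an agent $\nomj_2$ in a second component: unanimity would demand $S(\cnomm_1,\nomj_2,\cnomn_1)$, absent from the union. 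For bi-coherence, put $\cnomm_1 R\nomj_1$ in $\mathcal{F}_1$ and an issue $\cnomn_2$ in $\mathcal{F}_2$: then $\cnomm_1 R\nomj_1$ and $\neg(\cnomn_2 R\nomj_1)$ both hold, yet $S(\cnomm_1,\nomj_1,\cnomn_2)$ does not, breaking bi-coherence. In each case both summands satisfy the condition while the union does not, contradicting preservation under disjoint unions.

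The remaining conditions — transitivity (1), antisymmetry (3), single-steppedness (4), Euclideanness (5), and reasonable ductility (8) — all have antecedents that bind their $S$-, $R$- and $I$-atoms through a common agent (or a common pair of coordinates), pinning every witness inside one connected part; disjoint unions therefore preserve them. For these I would instead build, for each, a finite source frame $\mathcal{F}$ satisfying the condition together with a surjective bounded morphism onto a target $\mathcal{F}'$ that violates it. What makes such morphisms possible — in contrast with the single-type case, where axiom~$4$ is Sahlqvist and transitivity is preserved — is that the operators $\pdra$ and $\br$ read the two $M^\infty(\mathbb{D}^\delta)$-coordinates of $S$ \emph{asymmetrically} (monotone in the output coordinate, antitone in the input one). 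The induced multi-type forth/back clauses on these two coordinates are consequently of different shape, so a p-morphism may introduce in the image a new $S$-triple that chains the output and input coordinates in a way the source forbids, thereby destroying transitivity, the Euclidean property and single-steppedness, collapsing an $S$-antisymmetric source onto an $S$-two-cycle, and (adding the $I$- and $R$-clauses) refuting reasonable ductility.

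I expect the construction and, above all, the \emph{verification} of these bounded morphisms to be the main obstacle: one must spell out the multi-type forth/back conditions attached to $I$, $R$ and to each polarity of $S$, check the delicate back clause at every point of the source, and confirm that the chosen target simultaneously refutes the intended condition while the source genuinely satisfies it. By contrast, the disjoint-union arguments of the second paragraph are immediate once the two summands are written down.
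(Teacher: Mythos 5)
Your overall strategy coincides with the paper's: the proof there also splits the eight conditions into those refuted by disjoint unions (reflexivity, unanimity, bi-coherence with $R$) and those refuted by surjective bounded morphisms (transitivity, antisymmetry, single-steppedness, Euclideanness, reasonable ductility). Your three disjoint-union counterexamples are essentially identical to the paper's (its unanimity example takes $C_k=\{j_k\}$, $D_k=\{m_k,n_k\}$, $S_k=\{(m_k,j_k,n_k)\}$, and its bi-coherence example takes $R_k=\{(m_k,j_k)\}$, $S_k=\varnothing$, but the mechanism is the same), and your meta-observations are correct: these conditions, being universal, are inherited by generated subframes, and the remaining five are preserved under disjoint unions because their antecedents are linked through a common agent, so bounded morphic images are indeed the only available tool for them.

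The genuine gap is that for those five conditions --- items 1, 3, 4, 5 and 8, i.e.\ more than half of the proposition --- you never produce the bounded morphisms: no source frames, no targets, no maps, no verification. You explicitly defer this as ``the main obstacle'', but it is precisely where the content of the proof lies, and it is what the paper supplies. Its constructions are short, and the verification you fear is routine because the sources are rigged to satisfy the conditions \emph{vacuously}: for transitivity, take $C_1=\{j_1,j_2\}$, $D_1=\{m_1,m_2,m_3\}$, $S_1=\{(m_1,j_1,m_2),(m_2,j_2,m_3)\}$ and collapse both agents onto a single agent $i$ (with $m_k\mapsto n_k$); the source is transitive because its two $S$-triples pass through \emph{different} agents, while the image contains the chain $(n_1,i,n_2),(n_2,i,n_3)$ but not $(n_1,i,n_3)$. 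Single-steppedness and Euclideanness use the same agent-collapsing trick with different triples; reasonable ductility instead identifies two issues ($m_1,m_1'\mapsto m_2$) so that in the image one issue becomes $R$-relevant to both agents, firing the antecedent while $S=\varnothing$; and antisymmetry wraps the infinite chain $S_1=\{(m_k,j,m_{k+1})\mid k\in\mathbb{N}\}$ onto a two-cycle by parity of the index. Note that this also shows your proposed mechanism is not the one at work: in the transitivity, single-stepped and Euclidean examples the map is injective on issues, and no $S$-triple is ``introduced'' by any asymmetry between the two issue-coordinates of $\pdra$; rather, forth-images of triples that were witnessed by \emph{distinct} agents (or, for ductility and antisymmetry, distinct issues) come to share a coordinate after identification, so the antecedent fires in the image without its conclusion. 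As it stands, your proposal establishes only three of the eight non-definability claims.
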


\begin{proof}
Consider  
$\mathcal{F}_k=(C_k, D_k,  I_k, R_k, S_k)$, for $k=1,2$. Whenever the conditions in the statement do not involve $I$ or $R$, we let $I_k= R_k = \varnothing$. 

(1) Let
$C_1= \{j_1,j_2\}$,
$D_1= \{m_1,m_2,m_3\}$, $S_1= \{(m_1,j_1,m_2),(m_2,j_2,m_3)\}$, and $C_2= \{i\}$,
$D_2= \{n_1,n_2,n_3\}$, $S_2= \{(n_1,i,n_2),(n_2,i,n_3) \}$. The assignment $m_k\mapsto n_k$  for $1\leq k \leq 3 $, $j\mapsto i$ for each $j\in C_1$ defines a surjective bounded morphism $f:\mathcal{F}_1\to \mathcal{F}_2$. 
However, $\mathcal{F}_1\models S \,  transitive$, but  $\mathcal{F}_2\not\models S\, transitive$, which shows, by the invariance  of  validity under bounded morphic images, that $S \, transitive$  is not definable.

(2) Let $\mathrm{C}_k= \{j_k\}$,
$\mathrm{D}_k= \{m_k\}$,  $S_k= \{(m_k,j_k,m_k) \}$ for $k = 1, 2$. Then $\mathcal{F}_k\models S\,  reflexive$, but $\mathcal{F}_1\uplus \mathcal{F}_2\not\models S\, reflexive$, as  it is not the case that $S_1\uplus S_2(m_1,j_2,m_1)$.  Thus, by the invariance  of validity under disjoint unions,  $S\,  reflexive$ is not definable.

(3) Let $C_1= \{j\}$,
$D_1= \{m_k \mid k \in \mathbb{N}\}$, $S_1= \{(m_k,j,m_{k+1}) \mid k \in  \mathbb{N}\}$, and $C_2= \{i\}$,
$D_2= \{n_0,n_1\}$,  $S_2= \{(n_0,i,n_1),(n_1,i,n_0) \}$. The assignment $j\mapsto i $, and  $m_{2k}\mapsto n_0$,  $m_{2k+1}\mapsto n_1$ for every $k\in\mathbb{N}$ defines  a surjective bounded morphism  $f: \mathcal{F}_1\to \mathcal{F}_2$. 
However, $\mathcal{F}_1\models S\, antisymmetric$ , while $\mathcal{F}_2\not\models S\,antisymmetric $. 

(4) Let $C_1= \{j_1,j_2\}$,
$D_1= \{m_1,m_2,m_3\}$, $S_1= \{(m_1,j_1,m_2),(m_3,j_2,m_1) \}$, and $C_2= \{i\}$,
$D_2= \{n_1,n_2,n_3\}$,  $S_2= \{(n_1,i,n_2),(n_3,i,n_1) \}$. The assignment  $m_k\mapsto n_k$ for $1\leq k\leq 3$, $j\mapsto i$ for every $j\in C_1$  defines  a surjective bounded morphism $f:\mathcal{F}_1\to\mathcal{F}_2$.  However, $\mathcal{F}_1\models S \, single\text{-}stepped$, while $\mathcal{F}_2\not\models S\, single\text{-}stepped$. 

(5) Let  $C_1= \{j_1,j_2\}$,
$D_1= \{m_1, m_2, m_3\}$,  $S_1= \{(m_1,j_1,m_3),(m_2,j_2,m_3) \}$, and $C_2= \{i\}$,
$D_2= \{n_1, n_2,n_3\}$,  $S_2= \{(n_1,i,n_3),(n_2,i,n_3) \}$. The assignment  $m_k\mapsto n_k$ for $1\leq k\leq 3$, $j\mapsto i$ for every $j\in C_1$  defines  a surjective bounded morphism $f:\mathcal{F}_1\to\mathcal{F}_2$. However, $\mathcal{F}_1\models S \,  Euclidean$, while $\mathcal{F}_2\not\models S \, Euclidean$. 

(6) Let $C_k = \{j_k\}$, $D_k = \{m_k, n_k\}$, $S_k = \{(m_k, j_k, n_k)\}$ for $k = 1, 2$. Then $\mathcal{F}_k\models S\, unanimous$; however, $\mathcal{F}_1\uplus \mathcal{F}_2\not\models S\, unanimous$, as it is not the case that $(S_1\uplus S_2)(m_2, j_1, n_2)$. 

(7) Let $C_k = \{j_k\}$, $D_k = \{m_k\}$, $R_k = \{(m_k, j_k)\}$, $S_k = \varnothing$ for $k = 1, 2$. Then $\mathcal{F}_k\models S\, bicoherent$  with $R$; however, $\mathcal{F}_1\uplus \mathcal{F}_2\not\models S \, bicoherent$  with $R$, since $m_1 (R_1\uplus R_2)j_1$ and not $m_1(R_1\uplus R_2)j_2$, yet not $(S_1\uplus S_2)(m_1, j_2, m_2)$.


(8) Let  $C_1= \{j_1,i_1\}$,
$D_1= \{m_1, m_1', n_1\}$, $R_1= \{(m_1, j_1), (n_1, j_1), (m_1', i_1)\}$, $I_1 = \{(j_1, i_1)\}$, $S_1= \varnothing$, and $C_2=  \{j_2,i_2\}$,
$D_2= \{m_2, n_2\}$, $R_2= \{(m_2, j_2), (n_2, j_2), (m_2, i_2)\}$, $I_2 = \{(j_2, i_2)\}$, $S_2= \varnothing$. The assignment $m_1\mapsto m_2$, $m_1'\mapsto m_2$,  $n_1\mapsto n_2$,  $i_1\mapsto i_2$, $j_1\mapsto j_2$ defines a surjective bounded morphism $f: \mathcal{F}_1\to \mathcal{F}_2$. However, $\mathcal{F}_1\models \, reasonably\,\, ductile$, while $\mathcal{F}_2\not\models \ reasonably \,\,ductile$. 
\end{proof}

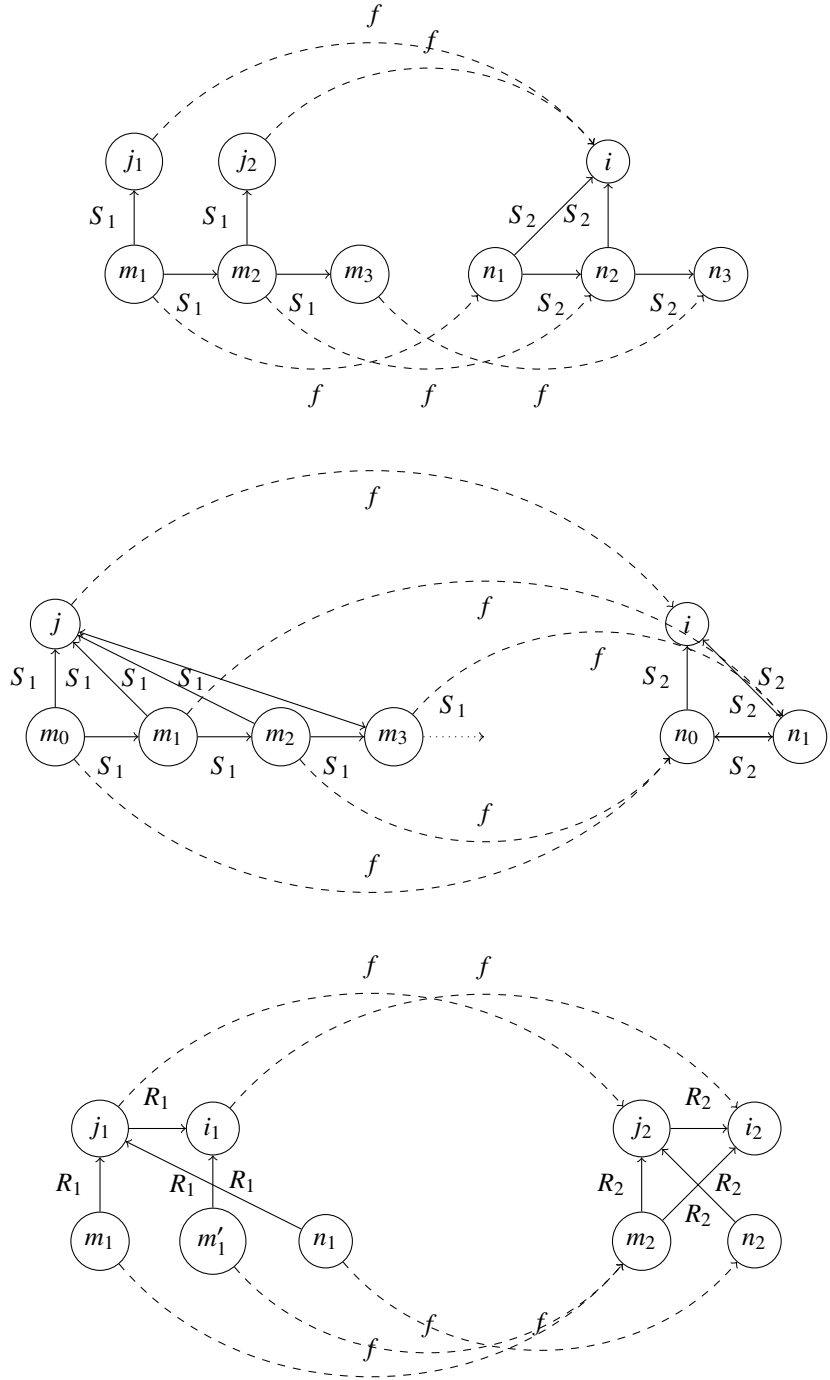
\begin{figure}[H]
{
    \centering
\begin{tikzpicture}[node distance=1.5cm, every node/.style={draw, circle}, every label/.style={draw=none},scale=0.6]
    \node (m1) at (-4,0) {\(m_1\)};
    \node[right of=m1] (m2) {\(m_2\)};
    \node[right of=m2] (m3) {\(m_3\)};
    \node[above of=m1] (j1) {\(j_1\)};
    \node[above of=m2] (j2) {\(j_2\)};
    
    \draw[->] (m1) -- (m2) node[midway, below, draw=none] {\(S_1\)};
    \draw[->] (m2) -- (m3) node[midway, below, draw=none] {\(S_1\)};
    \draw[->] (m1) -- (j1) node[midway, left, draw=none] {\(S_1\)};
    \draw[->] (m2) -- (j2) node[midway, left, draw=none] {\(S_1\)};
    
    \node (n1) at (4,0) {\(n_1\)};
    \node[right of=n1] (n2) {\(n_2\)};
    \node[right of=n2] (n3) {\(n_3\)};
    \node[above of=n2] (i) {\(i\)};
    
    \draw[->] (n1) -- (n2) node[midway, below, draw=none] {\(S_2\)};
    \draw[->] (n2) -- (n3) node[midway, below, draw=none] {\(S_2\)};
    \draw[->] (n1) -- (i)node[midway, left, draw=none] {\(S_2\)};
    \draw[->] (n2) -- (i)node[midway, left, draw=none] {\(S_2\)};
    
    \draw[dashed,->, bend right=50] (m1) to node[midway, below, draw=none] {\(f\)} (n1);
    \draw[dashed,->, bend right=50] (m2) to node[midway, below, draw=none] {\(f\)} (n2);
    \draw[dashed,->, bend right=50] (m3) to node[midway, below, draw=none] {\(f\)} (n3);
    
    \draw[dashed,->, bend left=50] (j1) to node[midway, above, draw=none] {\(f\)} (i);
    \draw[dashed,->, bend left=50] (j2) to node[midway, above, draw=none] {\(f\)} (i);
\end{tikzpicture}

\begin{tikzpicture}[node distance=1.5cm, every node/.style={draw, circle}, every label/.style={draw=none},scale=0.6]
    \node (m0) at (-6,0) {\(m_0\)};
    \node[right of=m0] (m1) {\(m_1\)};
    \node[right of=m1] (m2) {\(m_2\)};
    \node[right of=m2] (m3) {\(m_3\)};
    \node[above of=m0] (j) {\(j\)};
    
    \draw[->] (m0) -- (j) node[midway, left, draw=none] {\(S_1\)};
    \draw[->] (m0) -- (m1) node[midway, below, draw=none] {\(S_1\)};
    \draw[->] (m1) -- (j) node[midway, left, draw=none] {\(S_1\)};
    \draw[->] (m1) -- (m2) node[midway, below, draw=none] {\(S_1\)};
    \draw[->] (m2) -- (j) node[midway, left, draw=none] {\(S_1\)};
    \draw[->] (m2) -- (m3) node[midway, below, draw=none] {\(S_1\)};
    
    \draw[<->] (m3) -- (j) node[midway, left, draw=none] {\(S_1\)};
    
    \draw[->, dotted] (m3) -- ++(2, 0) node[midway, above, draw=none] {\(S_1\)}; 

    \node (n0) at (8,0) {\(n_0\)};
    \node[right of=n0] (n1) {\(n_1\)};
    \node[above of=n0] (i) {\(i\)};
    
    \draw[->] (n0) -- (n1)node[midway, below, draw=none] {\(S_2\)};
    \draw[->] (n0) -- (i)node[midway, left, draw=none] {\(S_2\)};
    \draw[->] (n1) -- (i)node[midway, right, draw=none] {\(S_2\)};
    \draw[->] (n1) -- (n0)node[midway, above, draw=none] {\(S_2\)};
    
    \draw[dashed,->, bend right=50] (m0) to node[midway, above, draw=none] {\(f\)} (n0);
    \draw[dashed,->, bend right=50] (m2) to node[midway, above, draw=none] {\(f\)} (n0);
    \draw[dashed,->, bend left=50] (m1) to node[midway, below, draw=none] {\(f\)} (n1);
    \draw[dashed,->, bend left=50] (m3) to node[midway, below, draw=none] {\(f\)} (n1);
    \draw[dashed,->, bend left=50] (j) to node[midway, below, draw=none] {\(f\)} (i);
\end{tikzpicture}

    \begin{tikzpicture}[node distance=1.5cm, every node/.style={draw, circle}, every label/.style={draw=none},,scale=0.6]
    \node (m1) at (-6,0) {\(m_1\)};
    \node[right of=m1] (m1p) {\(m_1'\)};
    \node[right of=m1p] (n1) {\(n_1\)};
    \node[above of=m1] (j1) {\(j_1\)};
    \node[above of=m1p] (i1) {\(i_1\)};
    
    \draw[->] (m1) -- (j1)node[midway, left, draw=none] {\({R}_1\)};
    \draw[->] (n1) -- (j1)node[midway, left, draw=none] {\({R}_1\)};
    \draw[->] (m1p) -- (i1)node[midway, right, draw=none] {\({R}_1\)};
    \draw[->] (j1) -- (i1)node[midway, above, draw=none] {\({R}_1\)};
    
    \node (m2) at (6,0) {\(m_2\)};
    \node[right of=m2] (n2) {\(n_2\)};
    \node[above of=m2] (j2) {\(j_2\)};
    \node[above of=n2] (i2) {\(i_2\)};
    
    \draw[->] (m2) -- (j2)node[midway, left, draw=none] {\({R}_2\)};
    \draw[->] (n2) -- (j2)node[midway, below, draw=none] {\({R}_2\)};
    \draw[->] (m2) -- (i2)node[midway, right, draw=none] {\({R}_2\)};
    \draw[->] (j2) -- (i2)node[midway, above, draw=none] {\({R}_2\)};
    
    \draw[dashed,->, bend right=50] (m1) to node[midway, above, draw=none] {\(f\)} (m2);
    \draw[dashed,->, bend right=50] (m1p) to node[midway, above, draw=none] {\(f\)} (m2);
    \draw[dashed,->, bend right=50] (n1) to node[midway, above, draw=none] {\(f\)} (n2);
    \draw[dashed,->, bend left=50] (i1) to node[midway, above, draw=none] {\(f\)} (i2);
    \draw[dashed,->, bend left=50] (j1) to node[midway, above, draw=none] {\(f\)} (j2);
\end{tikzpicture}
\caption{
\small From top to bottom, the diagrams represent models  discussed in  (1), (3), and (4). We represent the ternary relation $S(m_1,j,m_2)$ with an arrow from $m_1$ to $j$ and an arrow from $m_1$ to $m_2$. Similar diagrams can be drawn to depict other examples. 
}
}
\end{figure}

\section{Formalizing the deliberation scenarios}\label{interrogative:sec:Formalizing deliberation}

In this section, we use the framework introduced in Sections \ref{interrogative:sec:interrogative_agendas_and_coalitions} and \ref{interrogative:sec:interaction conditions} to formalize the scenarios described in Sections \ref{interrogative:sec: hiring committee} and  \ref{interrogative:sec:car}. 

\paragraph{Hiring committee.} The  algebra of coalitions  is the two-atom Boolean algebra $\mathbb{C}$ of the picture below. We let $\jty(\mathbb{C}) := \{\aga, \agb\}$. Since $W$ is the 8-element poset described in Section  \ref{interrogative:sec: hiring committee}, by Propositions \ref{interrogative:prop:charact meet-irr} and \ref{interrogative:prop:charact join-irr}, the lattice $E(W)$ has 127 completely meet-irreducible elements (coatoms) and 28 completely join-irreducible elements (atoms); 
the  algebra $\mathbb{D}$ of interrogative agendas  is the sub meet-semilattice  of $E(W)$ meet-generated by the coatoms  $\mathtt{p}: = \{(w, u)\mid w_r =u_r \text{ and } w_l =u_l\}$, $\mathtt{r}: = \{(w, u)\mid w_p =u_p \text{ and } w_l =u_l \}$, and $\mathtt{l}: = \{(w, u)\mid w_p =u_p\text{ and } w_r =u_r\}$. As discussed in Section  \ref{interrogative:sec: hiring committee},  $\mathbb{D}$ is the three-coatom Boolean algebra (see Footnote \ref{footn: D is BA}) represented in the picture below.

\begin{center}
\begin{tikzpicture}
\draw[very thick] (-1, 0) -- (-1, 1) --
	(0, 0) -- (1, 1) -- (1, 0) -- (0, 1) -- (-1, 0);
	\draw[very thick] (0, 2) -- (-1, 1);
\draw[very thick] (0, 2) -- (0, 1);
\draw[very thick] (0, 2) -- (1, 1);
	\draw[very thick] (0, -1) -- (-1, 0);
\draw[very thick] (0, -1) -- (0, 0);
\draw[very thick] (0, -1) -- (1, 0);
	\filldraw[black] (0,-1) circle (2 pt);
	\filldraw[black] (0, 2) circle (2 pt);
    \filldraw[black] (-1, 1) circle (2 pt);
	\filldraw[black] (1, 1) circle (2 pt);
	\filldraw[black] (0, 1) circle (2 pt);
	\filldraw[black] (-1, 0) circle (2 pt);
	\filldraw[black] (1, 0) circle (2 pt);
	\filldraw[black] (0, 0) circle (2 pt);
		
	\draw (0, -1.3) node {$\mathbb{D}$};

\draw (-1, 1.25) node {{\small{$\mathtt{p}$}}};
    \draw (0.12, 1.25) node {{\small{$\mathtt{r}$}}};
    \draw (1, 1.25) node {{\small{$\mathtt{l}$}}};

\draw[very thick] (3, -1) -- (2, 0) --
	(3, 1) -- (4, 0) -- (3, -1);
\filldraw[black] (3,-1) circle (2 pt);
\filldraw[black] (2,0) circle (2 pt);
\filldraw[black] (3,1) circle (2 pt);
\filldraw[black] (4,0) circle (2 pt);
\draw (2, 0.3) node {{\small{$\mathtt{a}$}}};
    \draw (4, 0.3) node {{\small{$\mathtt{b}$}}};
    \draw (3, -1.3) node {$\mathbb{C}$};
    \draw [very thick, red, dotted] (2, 0) -- (-1, 1);
     \draw [very thick, red, dotted] (2, 0) -- (0, 1);
      \draw [very thick, red, dotted] (4, 0) -- (0, 1);
        \draw [very thick, red, dotted] (4, 0) -- (1, 1);
\end{tikzpicture}
\end{center}
The cognitive attitudes of agents $\aga$ and $\agb$ towards issues are encoded in the  relevance relation $R\subseteq \mty(\mathbb{D})\times \jty(\mathbb{C})$ represented in the picture:
\[R: = \{(\mathtt{p}, \aga), (\mathtt{r}, \aga), (\mathtt{r}, \agb), (\mathtt{l}, \agb)\},\]
which gives rise to the maps $\Diamond, {\rhd}: \mathbb{C}\to\mathbb{D}$ defined as follows: for every $c\in \mathbb{C}$,
\[\Diamond c = \begin{cases}
\mathtt{r} & \text{if } c =  \aga\cup \agb\\
\mathtt{p}\sqcap \mathtt{r} & \text{if } c = \aga\\
\mathtt{r}\sqcap \mathtt{l} & \text{if } c = \agb\\
\bot & \text{if } c = \varnothing.\\
\end{cases} \quad\quad\quad
{\rhd} c = \begin{cases}
\bot & \text{if } c =  \aga\cup\agb\\
\mathtt{p}\sqcap \mathtt{r} & \text{if } c = \aga\\
\mathtt{r}\sqcap \mathtt{l} & \text{if } c = \agb\\
\tau & \text{if } c = \varnothing.\\
\end{cases}
\]

As discussed in Section  \ref{interrogative:sec: hiring committee}, the agenda $\Diamond \aga = \mathtt{p}  \rand \mathtt{r}$ (Alan's agenda) leads to choosing John, while  $\Diamond \agb = \mathtt{r} \rand \mathtt{l}$ (Betty's agenda) gives no unequivocal decision. We can consider different possible deliberation scenarios. 

The outcome of the decision-making process will be determined by the agenda chosen by the coalition  $c = \aga \cup \agb$. Two primary possible choices are the {\em common} agenda $\Diamond c =  \mathtt{r} $, and the {\em distributed} agenda $\rhd c = \mathtt{p}\sqcap \mathtt{r} \sqcap \mathtt{l} $. The common agenda leads to choosing John over Mary, since John totally dominates  Mary on the reference letter parameter, while the distributed agenda yields no decision, given that Alan and Betty's individual agendas do not lead to the same decision (cf.~Remark \ref{interrogative:rem:distributed unanimity}). 

Intuitively, $\Diamond c$ and ${\rhd} c$ arise from the attempt to form a `group agenda' by aggregating the individual preferences of the agents using elementary aggregation rules (namely, intersection and union of issues, respectively). As such, decision procedures yielding $\Diamond c$ and ${\rhd} c$ are conceptually more akin to  voting procedures than to  authentic deliberations. Indeed, the purpose of  deliberation is precisely to let alternatives emerge which would fall out of the scope of a voting procedure. 
A minimal way in which  these specific deliberation-dynamics can be captured is to include the substitution relation $S\subseteq \mty(\mathbb{D})\times \jty(\mathbb{C})\times \mty(\mathbb{D})$ and its associated operation $\pdra: \mathbb{C}\times \mathbb{D}\to\mathbb{D}$ in the present framework. While  $R$ encodes  the cognitive attitudes of the various agents towards issues  `in principle', the relation $S$ captures a more  operational type of information; namely,   which parameters each agent would or would not wish to see in the final outcome.\footnote{Alternatively, the substitution relation $S$ can be understood as encoding information on which parameters each agent would or would not consider negotiable; for instance, $S(\texttt{m}, \texttt{i}, \texttt{m})$ and not $S(\texttt{n}, \texttt{i}, \texttt{m})$ for any $\texttt{n}\neq \texttt{m}$ indicates that agent \texttt{i} considers issue \texttt{m} non negotiable.}

Then $\mathtt{x}\pdra \Diamond \mathtt{y}$ represents  the agenda into which agent $\mathtt{x}$ would like to transform agent $\mathtt{y}$'s agenda. Using these terms, one  candidate for the agenda  which   determines the final outcome according to coalition $c$ is $\bigsqcap_{\texttt{i},\texttt{j}\leq c, \texttt{i}\neq\texttt{j}}\mathtt{x}\pdra \Diamond \mathtt{y}$.


Different outcomes of the decision-making will arise, depending on the different concrete choices of the relation $S$.
To illustrate this point, let  the cognitive attitudes of agents towards other agents' interrogative agendas be encoded in the following substitution relation:
\[S: = \{(\mathtt{p}, \aga, \mathtt{p}), (\mathtt{r}, \aga, \mathtt{r}), (\mathtt{p}, \aga, \mathtt{l}), (\mathtt{r}, \aga, \mathtt{l}),
(\mathtt{r}, \agb, \mathtt{r}), (\mathtt{l}, \agb, \mathtt{l}), (\mathtt{r}, \agb, \mathtt{p}), (\mathtt{l}, \agb, \mathtt{p})\}.\]
This relation   reflects a situation in which each agent would substitute each issue which is not supported by his/her own agenda with any issue in his/her agenda. Note that  $S$  is antisymmetric,  single-stepped hence transitive, and  both positively and negatively coherent with $R$.  The operation  $\pdra$ induced by $S$ is such that
\[\aga\pdra \Diamond \agb = \aga \pdra (\mathtt{r}\sqcap \mathtt{l}) = (\aga\pdra \mathtt{r})\sqcup (\aga\pdra \mathtt{l}) = \mathtt{r}\sqcup (\mathtt{r}\sqcap\mathtt{p})  = \mathtt{r} \]
\[\agb\pdra \Diamond \aga = \agb \pdra (\mathtt{r}\sqcap \mathtt{p}) = (\agb\pdra \mathtt{r})\sqcup (\agb\pdra \mathtt{p}) = \mathtt{r}\sqcup (\mathtt{r}\sqcap\mathtt{l})  = \mathtt{r} \]
Hence,   $(\aga\pdra \Diamond \agb)\sqcap (\agb\pdra \Diamond \aga) = \mathtt{r}$, which yields John over Mary.

The following substitution relation:
\[S = \{(\mathtt{p}, \aga, \mathtt{p}), (\mathtt{r}, \aga, \mathtt{r}), (\mathtt{l}, \aga, \mathtt{r}), (\mathtt{l}, \aga, \mathtt{l}),
(\mathtt{l}, \agb, \mathtt{l}), (\mathtt{r}, \agb, \mathtt{r}),  (\mathtt{p}, \agb, \mathtt{r}), (\mathtt{p}, \agb, \mathtt{p})\}\]
 reflects a situation in which not only each agent leaves undisturbed the issue(s) which are not supported by his/her own agenda, but is willing to let common issues be substituted by issues in the other agent's agenda. This instantiation of $S$
 is antisymmetric,  single-stepped hence transitive, and  positively but not  negatively coherent with $R$.
 The operation $\pdra$ associated with it is such that
\[\aga\pdra \Diamond \agb = \aga \pdra (\mathtt{r}\sqcap \mathtt{l}) = (\aga\pdra \mathtt{r})\sqcup (\aga\pdra \mathtt{l}) = (\mathtt{r}\sqcap \mathtt{l})\sqcup \mathtt{l}  = \mathtt{l} \]
\[\agb\pdra \Diamond \aga = \agb \pdra (\mathtt{r}\sqcap \mathtt{p}) = (\agb\pdra \mathtt{r})\sqcup (\agb\pdra \mathtt{p}) = (\mathtt{r}\sqcap \mathtt{p})\sqcup \mathtt{p}  = \mathtt{p} \]
Hence,   $(\aga\pdra \Diamond \agb)\sqcap (\agb\pdra \Diamond \aga)=\mathtt{p}\sqcap \mathtt{l}$, which yields Mary over John.

\paragraph{Choosing a car.} The algebra of coalitions is the same two-atom Boolean algebra $\mathbb{C}$ as in the hiring committee case study, with $\jty(\mathbb{C}) := \{\aga, \agb\}$.  
As discussed in Section \ref{interrogative:sec:car}, the algebra $\mathbb{D}$ of interrogative   agendas is the complete sub-meet semilattice of $E(W)$ meet-generated by the meet-irreducible elements (coatoms) of the form $e^\Sigma_{Y, k_\leq}$,  corresponding to the bi-partitions \[\mathcal{E}_{Y, k_\leq}\coloneqq\{\{w\in W\mid w_Y\leq k\}, \{w\in W\mid w_Y> k\}\}\] for each $Y \subseteq X$, $0 \leq k \leq |Y|$, where $X = \{s, f, p, t, m\}$ and $W = 2^X$. Hence, $M^\infty(\mathbb{D}) = \{e^\Sigma_{Y, k_{\leq}}\mid Y\subseteq X, 0\leq k\leq|Y|\}$.

Recall (cf.~Section \ref{interrogative:ssec:equivalence relations and preorders}) that any $e\in E(W)$ gives rise to a (pre-)order $\leq_e$ s.t.~$w \leq _e u$ iff for all $w'\in [w]_e$ some $u'\in [u]_e$ exists s.t.~$w'\leq_e u'$, and $e^\Sigma_Y$ prefers  $w$  over $u$  if  $w <_Y u$. By Lemma \ref{interrogative:lemma:preference-sum}, when $e:=e^\Sigma_{Y}$, the order $\leq_{e^\Sigma_{Y}}$ coincides with the sum-order $\leq_Y$. However, only a proper subset of elements of $\mathbb{D}$ are of the form $e^\Sigma_{Y}$ for some $Y\subseteq X$;  by Lemma \ref{interrogative:lem:sum_decomposition}, these elements are exactly those 
which are meet-generated by $\{e^\Sigma_{Y, k\leq}\mid 0\leq k\leq |Y|\}$  for some $Y\subseteq X$. As discussed in Section \ref{interrogative:sec:car}, Alan and Betty's interrogative agendas  are $e_{\aga} := e^\Sigma_{A} = \bigsqcap_{0\leq k\leq 3} e^\Sigma_{A, k\leq}$ and $e_{\agb} := e^\Sigma_{B} = \bigsqcap_{0\leq k\leq 3} e^\Sigma_{B, k\leq}$ respectively, where $A=\{s,f,p\}$, and $B=\{f,t,m\}$. Hence,
%
their cognitive attitudes towards issues can be encoded in the  following relevance relation 
$R\subseteq \mty(\mathbb{D})\times \jty(\mathbb{C})$:
\[R: = \{(e^\Sigma_{A, k_\leq}, \aga), (e^\Sigma_{B, k_\leq}, \agb) \mid 0 \leq k \leq 3 \}.\]
The relation $R$ gives rise to the maps $\Diamond, {\rhd}: \mathbb{C}\to\mathbb{D}$ defined as follows: for every $c\in \mathbb{C}$,
\[\Diamond c = \begin{cases}
e^\Sigma_A \sqcup e^\Sigma_{B} = \tau & \text{if } c =  \aga\cup \agb \\ 
e^\Sigma_{A} & \text{if } c = \aga\\
 e^\Sigma_{B} & \text{if } c = \agb\\
\bot & \text{if } c = \varnothing.\\
\end{cases} \quad\quad\quad
{\rhd} c = \begin{cases}
e^\Sigma_{A} \sqcap e^\Sigma_{B} & \text{if } c =  \aga\cup\agb\\
e^\Sigma_{A} & \text{if } c = \aga\\
 e^\Sigma_{B} & \text{if } c = \agb\\
\tau & \text{if } c = \varnothing.\\
\end{cases}
\]

As discussed in Section \ref{interrogative:sec:car}, a straightforward way for Alan and Betty  to determine a common interrogative agenda under the sum rule is to consider $e^\sum_{A\cap B} = e_f$ or $e^\sum_{A\cup B} = e^\Sigma_{X}$.  Each such choice leads to a decision, and in fact these choices lead to opposite decisions.However, the deliberation dynamics might also yield different candidates for a common interrogative agenda, which are not linked to parameters as directly as the previous ones.  

The next candidates are $\Diamond c  = e_\aga\sqcup e_\agb$ and ${\rhd} c= e_\aga\sqcap e_\agb$ for $c:=\aga\cup\agb$.

The agenda  $\Diamond c = \tau = W\times W$ yields no decision, since all profiles, hence $C_1$ and $C_2$, are equivalent under it (i.e.~$C_1 \leq_{\Diamond c} C_2$ and $C_2 \leq_{\Diamond c} C_1$).  

Let us show that  ${\rhd}c$ yields no decision, as $C_1\not \leq_{\rhd c}C_2$ and $C_2\not\leq_{\rhd c}C_1$. 
By definition, for any $w, u\in W$, $(w, u)\in {\rhd}c$ iff $(w, u)\in e_\aga$ and $(w, u)\in e_\agb$, i.e.~$w_A = u_A$ and $w_B = u_B$. Hence, $[C_1]_{\rhd c} = \{w'\mid w'_A = 2 \ \&\ w'_B = 1\}$ and  $[C_2]_{\rhd c} = \{w'\mid w'_A = 1 \ \&\ w'_B = 2\}$.
Moreover, $w\leq_{\rhd c} u$ iff for every $w'\in W$, if $w_A = w'_A$ and $w_B = w'_B$, then $w'\leq  u'$ for some profile $u'$ s.t.~$u_A = u'_A$ and $u_B = u'_B$, where $\leq$ is the pointwise (total domination) order.\footnote{Recall that the definition of $\leq_e$ for any $e\in E(W)$ depends on the choice of a `baseline' preorder on $W$ (cf.~Section \ref{interrogative:ssec:equivalence relations and preorders}). When  $W = 2^X$, its natural (pre)order is the pointwise order (total domination).}

Hence,  $C_1\leq_{\rhd c}C_2$ iff for every $w\in W$ s.t.~$w_A = 2$ and $w_B = 1$,  some $u\in W$ exists s.t. $w \leq u$ and $u_A = 1 $ and $ u_B = 2$. However, requiring that $w_A=2$ and $u_A=1$ of $w$ and $u$
implies that no such $u$ can dominate $w$ on all parameters in $A$, which means $w \not \leq u$.  
Thus,  $C_1 \not \leq_{\rhd c}C_2$. A similar reasoning based on $w_B=1$ and $u_B=2$ shows that $C_2 \not \leq_{\rhd c}C_1$.

Let  the cognitive attitudes of agents towards other agents' interrogative agendas be encoded in the following substitution relation:\footnote{For the sake of a simpler presentation, this choice of $S$ is not coherent with $R$, and it presupposes that e.g.~Alan knows beforehand the set $B$ of parameters considered by the other agent, and so does Betty. An alternative relation  which captures the same cognitive attitudes of the two agents towards any agendas different from their own is

\begin{center}
\begin{tabular}{cl}
$S: = $ & $ \{(e^\Sigma_{Y, 0_\leq}, \aga, e^\Sigma_{Y, k_\leq}) \mid A\neq Y \subseteq X, 0 \leq k \leq |Y|\} \cup$\\
& $\{ (e^\Sigma_{Y, 0_\leq}, \agb, e^\Sigma_{Y, k_\leq}), (e^\Sigma_{Y, 1_\leq}, \agb, e^\Sigma_{Y, k_\leq}) \mid B\neq Y \subseteq X, 2 \leq k \leq |Y|\} \cup$\\ 
& $\{(e^\Sigma_{Y, k_\leq}, \agb,e^\Sigma_{Y, k_\leq}) \mid B\neq Y \subseteq X, |Y|<2\} $. 
\end{tabular}
\end{center}
}

\begin{center}
$S: =  \{(e^\Sigma_{B, 0_\leq}, \aga, e^\Sigma_{B, k_\leq}) \mid  0 \leq k \leq 3\} \cup\{ (e^\Sigma_{A, 0_\leq}, \agb, e^\Sigma_{A, k_\leq}), (e^\Sigma_{A, 1_\leq}, \agb, e^\Sigma_{A, k_\leq}) \mid  2 \leq k \leq 3\}. $
\end{center}

The relation above reflects a situation in which   Alan is willing to include the question `does the car perform well on some parameter Betty considers relevant?' in the aggregated agenda, while Betty is willing to include the question `does the car perform well on none, one, or more  parameters Alan  considers relevant?' in the aggregated agenda. The operation  $\pdra$ induced by $S$ is such that

\[\aga\pdra \Diamond \agb = \aga \pdra e^\Sigma_{B} = e^\Sigma_{B, 0_\leq}\quad \quad \agb\pdra \Diamond \aga= \aga \pdra e^\Sigma_{A} = e^\Sigma_{A, 0_\leq} \sqcap e^\Sigma_{A, 1_\leq}.\]
Hence, the aggregated agenda    $ (\aga\pdra \Diamond \agb)\sqcap (\agb\pdra \Diamond \aga)$ is $e = e^\Sigma_{B, 0_\leq} \sqcap e^\Sigma_{A, 0_\leq} \sqcap e^\Sigma_{A, 1_\leq} $, which  identifies any two profiles $w, u$ s.t.~$w_Y  = 0$ iff $u_Y  = 0$ for each $Y\in\{A, B\}$, and $w_A\leq 1$ iff $u_A\leq 1$. Therefore, $C_1 = (1, 0, 1, 0, 1)$ is $e$-equivalent to $(1, 1, 1, 1, 1)$, and the latter profile dominates any other profile, and hence dominates any profile equivalent to $C_2$, which shows that $C_2\leq_e C_1$. Moreover, any profile $u$ which is $e$-equivalent to $C_2 = (0, 1, 0, 1, 0)$ must have $u_A\leq 1$; therefore, no such profile can dominate $C_1$ in the pointwise order, which shows that $C_1\nleq_e C_2$. Therefore, $ (\aga\pdra \Diamond \agb)\sqcap (\agb\pdra \Diamond \aga)$ prefers $C_1$ over $C_2$.

 \section{Conclusion and future work}\label{interrogative:sec:conclusion}
In this paper, a logico/algebraic framework is introduced for modelling scenarios of decision-making via deliberation in a multi-agent setting. In this framework,  the different cognitive stances of agents are conceptually modelled based on the notion of  {\em interrogative agenda} (cf. Section \ref{interrogative:ssec:Interrogative agendas and their logical formalizations}).  
Interrogative agendas are formally represented as elements of certain lattices of equivalence relations on the space of profiles. 
This formal framework describes and analyses deliberation scenarios under two {\em winning rules} (the {\em total domination} and the {\em sum} rule), and three domains of {\em scores} (i.e.~linear and 2-valued, linear and many-valued, non-linear).  
The expressivity  of this  framework is preliminarily explored, focusing particularly on the `modal definability' of examples of first-order conditions which encode the various attitudes of agents   towards  the `issues on the table' in the deliberation. 
The contributions of this paper pave the way to several interesting research directions, some of which are listed below.

\paragraph{Multi-type framework.} Similarly to \cite{bilkova2018logic}, the framework introduced in this paper has been introduced as an instance of the {\em multi-type methodology}, a metatheory for the modular development of syntactic and semantic formal tools specifically designed to represent and reason about the interaction of entities of different types. The coalitions and interrogative agendas studied in the present paper form distinct types and interact with one another in complex ways. The syntactic machinery  provided by the multi-type framework makes it possible to express and reason about salient properties of the scenarios that arise from such interactions simply and succinctly. 
The analysis in Section \ref{interrogative:sec:interrogative_agendas_and_coalitions} confirms that the operations interpreting the logical connectives have the required order-theoretic properties to be captured within the framework of logics with algebraic semantics based on normal (distributive) lattice expansions. Such logics have excellent basic semantic and syntactic properties,  such as the availability of cut-eliminable proof calculi for the basic framework and a large class of its axiomatic extensions \cite{greco2016unified}, decidability, and interpolation \cite{andrea-s-thesis}.\footnote{In \cite{greco2016unified, andrea-s-thesis}, these properties have been shown in a single-type (D)LE setting; however, their proofs immediately transfer to any multi-type (D)LE-setting.} 
 Areas of further research that would be valuable for the practical applicability of the present framework include the development of its proof theory and optimal algorithms for model checking and satisfiability.   

The multi-type language of the present paper lends itself to further natural and useful  extensions, in particular to language enrichments along the lines of  hybrid logic \cite{ArecesTenCateChapter}, the most basic of which simply adds  nominal and co-nominal variables, ranging exclusively over individual agents and issues, respectively, to enable direct reference to these. This expressivity can then be further leveraged through the addition of other syntactic mechanisms of hybrid logic, such as $@$-operators and binders, together with (suitable generalizations of) hybrid correspondence theory \cite{ten2005hybrid,conradie:robinson:2017:sahlqvist}, to overcome some of the expressive limitations highlighted by the non-definability results of Section \ref{interrogative:sec:interaction conditions}.

\paragraph{Union and intersection of parameters in agendas for the sum winning-rule.} In the case study of Section \ref{interrogative:sec:car}, we let the agendas of the two agents be  $e^\Sigma_A$ and  $e^\Sigma_B$  for some $A,B \subseteq X$, and discussed that $e^\Sigma_A \sqcup e^\Sigma_B$ and 
  $e^\Sigma_A \sqcap e^\Sigma_B$ are among the candidate aggregated agendas   for the coalition formed by these two agents. However, when the winning rule is the sum-rule, the connection of the candidates above to the parameters is diluted compared to $e^\Sigma_{A\cup B}$ and   $e^\Sigma_{A\cap B}$. 
  Motivated by this, it would be  interesting to also consider the operators $\oplus$ and $\otimes$ defined as follows: for any $A,B \subseteq X$,

\[
e^\Sigma_A \oplus e^\Sigma_B =  e^\Sigma_{A\cup B} \quad \text{and} \quad e^\Sigma_A \otimes e^\Sigma_B =  e^\Sigma_{A\cap B}.
\]
Studying the order-theoretic properties of these operators and their interaction with other operators in our framework is an interesting direction for future research. 

\paragraph{Game-theoretic perspective on deliberation.} In the deliberation processes considered in this paper, collective agendas are formed by aggregating each participant’s agenda either directly or after operations such as coarsening or applying a substitution relation/relations corresponding to different agents. However, we have not examined how agents might act strategically in this aggregation phase. In practice, participants could choose their initial agendas and modify them during deliberation in order to steer the outcome toward their own preferred decisions. It would therefore be valuable to develop a game-theoretic framework for such deliberative games under various rules for agenda formation and decision-making, and to analyze the strategies available to individual agents or coalitions. In particular, one could study how different information structures (complete versus partial information) affect an agent’s ability to enforce specific decisions, with implications for both theory and real-world applications.

\paragraph{Different modal axioms to model interaction between agents and their interrogative agendas.}
 In many practical situations, interactions between $R$, $I$ and $S$ in deliberation  satisfy additional conditions, many of which are modally definable, while other are not, as shown in  Proposition \ref{interrogative:prop:gt}. It would be interesting to  explore extensions with additional relations and operators, also in combination with the hybrid machinery mentioned above, which can make these conditions definable. 

\paragraph{Modelling outcomes internally.} The multi-type framework developed in this paper does not allow us to model the outcomes of deliberation explicitly. In the future, this framework can be extended with another type to model outcomes of deliberation. Formally, this type would contain pre-orders over different possible outcomes, i.e.~a  orders over possible  outcomes. Interrogative agendas would change the preference order between the outcomes, by transforming  pre-orders to  pre-orders. Such a framework would allow us to   describe and reason about the entire decision-making process in a single multi-type environment.

\bibliography{ref}
\bibliographystyle{plain}
\end{document}